\title{\textbf{Information Limits of Joint Community Detection and Finite Group Synchronization}}
\author{Yifeng Fan\thanks{Y. Fan is with the Department of Electrical and Computer Engineering, University of Illinois at Urbana-Champaign, Champaign, IL, 94404, USA (email: yifengf2@illinois.edu). Z. Zhao is with the Department of Electrical and Computer Engineering, University of Illinois at Urbana-Champaign, Champaign, IL, 94404, USA (email: zhizhenz@illinois.edu)}
	\and Zhizhen Zhao\footnotemark[1]}
\date{}
\newtheorem{theorem}{Theorem}[section]
\newtheorem{lemma}[theorem]{Lemma}
\newtheorem{claim}[theorem]{Claim}
\theoremstyle{definition}
\DeclareMathOperator*{\argmax}{arg\,max}
\DeclareMathOperator*{\argmin}{arg\,min}
\DeclarePairedDelimiter\ceil{\lceil}{\rceil}
\DeclarePairedDelimiter\floor{\lfloor}{\rfloor}
\begin{document}
\def\arXivver{1}

\maketitle

\begin{abstract}
The emerging problem of joint community detection and group synchronization, with applications in signal processing and machine learning, has been extensively studied in recent years. Previous research on this topic has predominantly focused on a statistical model that extends the well-known stochastic block model~(SBM) by incorporating additional group transformations. In its simplest form, the model randomly generates a network of size $n$ that consists of two equal-sized communities, where each node $i$ is associated with an unknown group element $g_i^* \in \mathcal{G}_M$ for some finite group $\mathcal{G}_M$ of order $M$. The connectivity between nodes follows a probability $p$ if they belong to the same community, and a probability $q$ otherwise. Moreover, a group transformation $g_{ij} \in \mathcal{G}_M$ is observed on each edge $(i,j)$, where $g_{ij} = g_i^* - g_j^*$ if nodes $i$ and $j$ are within the same community, and $g_{ij} \sim \text{Uniform}(\mathcal{G}_M)$ otherwise.
The goal of the joint problem is to recover both the underlying communities and group elements. Under this setting, when $p = a\log n /n$ and $q = b\log n /n $ with $a, b > 0$, we establish the following sharp information-theoretic threshold for exact recovery by maximum likelihood estimation~(MLE):
\begingroup\makeatletter\def\f@size{9}\check@mathfonts
$$ (i):\enspace   \frac{a + b}{2} -\sqrt{\frac{ab}{M}} > 1 \quad \text{and} \quad (ii):\enspace   a > 2$$\endgroup
where the exact recovery of communities is possible with high probability only if $(i)$ is satisfied, and the recovery of group elements is achieved with high probability only if both $(i)$ and $(ii)$ are satisfied. Our theory indicates the recovery of communities greatly benefits from the extra group transformations. Also, it demonstrates a significant performance gap exists between the MLE and all the existing approaches, including algorithms based on semidefinite programming and spectral methods.
\end{abstract}

\section{Introduction}
\label{sec:intro}

Community detection and group synchronization are both fundamental problems in signal processing, machine learning, and computer vision. Recently, there has been increasing interest in the joint problem of these two areas~\cite{fan2021joint,fan2019multi, fan2021spectral, chen2021non, bajaj2018smac, wang2023multi}. That is, in the presence of heterogeneous data where each sample is associated with an unknown group element (e.g. the orthogonal groups $\text{O}(d)$) and falls into multiple underlying communities, a network is observed that represents the interactions between data samples including the group transformations. Then, the joint problem aims to simultaneously recover the underlying community structures and the unknown group elements.
A motivating example is the 2D class averaging process in \textit{cryo-electron microscopy single-particle reconstruction}~\cite{frank2006,singer2011viewing,zhao2014rotationally, fan2019representation}, whose goal is to group and rotationally align projections images of a single particle with similar viewing angles, in order to improve their signal-to-noise ratio~(SNR). Another application in computer vision is \textit{simultaneous permutation group synchronization and clustering} on heterogeneous object collections consisting of 2D images or 3D shapes~\cite{bajaj2018smac, huang2012optimization, huang2013consistent}. 

This paper centers on a hybrid statistical model initially proposed in \cite{fan2021joint}, which combines the celebrated stochastic block model~(SBM)~(e.g.~\cite{decelle2011asymptotic, doreian2005generalized, dyer1989solution, fienberg1985statistical, holland1983stochastic, karrer2011stochastic, massoulie2014community, mcsherry2001spectral, mossel2012stochastic, mossel2018proof}) for community detection and the random rewiring model~(e.g.~\cite{singer2011viewing, fan2019multi, fan2019unsupervised, ling2020near, wang2013exact, chen2016information, chen2014near, huang2013consistent}) for group synchronization. Notably, most of the existing methods~\cite{fan2021joint,fan2021spectral, chen2021non, wang2023multi} for solving the joint problem are designed and evaluated based on this model. 
Formally, the model generates a random network~(graph) of size $n$ with $K$ underlying communities, each node $i$ is associated with its community assignment $\kappa_i^* \in \{1, \ldots, K\}$, along with an unknown group element $g_i^* \in \mathcal{G}_M$ for some finite group $\mathcal{G}_M$ of order $M$ (e.g.~the symmetric group $S_M$ and cyclic group $Z_M$). Then, edges are placed randomly and independently between nodes $(i,j)$ with probability $p$ if $i$ and $j$ belong to the same community, and with probability $q$ otherwise. In addition, a group transformation $g_{ij} \in \mathcal{G}_M$ is observed on each edge connection $(i,j)$ in a similar way that $g_{ij} = g_i^*(g_j^*)^{-1}$ if $i$ and $j$ are within the same community, which represents the clean measurement, otherwise $g_{ij} \in \text{Unif}(\mathcal{G}_M)$ that is uniformly drawn from $\mathcal{G}_M$ implying the measurement is completely noisy. In this paper, we focus on the case when $K = 2$ and the two communities are of equal sizes $n/2$, see Figure~\ref{fig:model_illustration} for an illustration.

\begin{figure*}[t!]
	\centering
	\if\arXivver1
	\includegraphics[width = 0.8\textwidth]{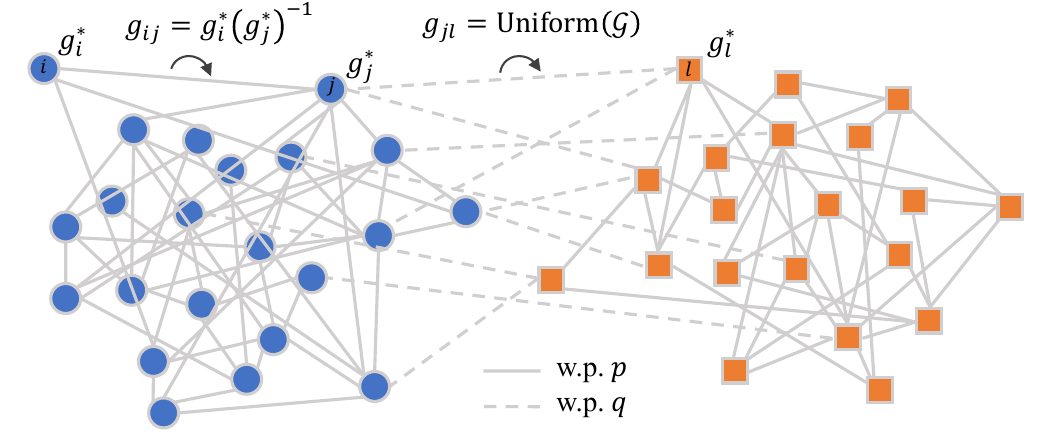}
	\else
	\includegraphics[width = 0.7\textwidth]{figures/data_illustration_v2.pdf}
	\fi
	\vspace{0.2cm}
	\caption{An illustration of our statistical model. We present a network that consists of two communities of equal sizes, shown in circles and squares respectively. Each node $i$ is associated with an unknown group element $g_i^* \in \mathcal{G}_M$ for some finite group $\mathcal{G}_M$ of size $M$.  Each pair of nodes within the same community (resp.~across different communities) are independently connected with probability $p$ (resp.~$q$) as shown in solid (resp.~dash) lines. A group transformation $g_{ij}$ is observed on each edge which satisfies $g_{ij} = g_{i}^*(g_{j}^*)^{-1}$ when nodes $i$ and $j$ are within the same community, otherwise $g_{ij} \sim \text{Unif}(\mathcal{G}_M)$ that is uniformly drawn from $\mathcal{G}_M$. Given the network, our goal is to recover the underlying communities and the associated group elements $\{g_i^*\}_{i = 1}^n$.}
	\label{fig:model_illustration}
\end{figure*}

Given the network observed, our goal is to recover the underlying communities as well as the unknown group elements. Notably, a naive two-stage approach, which first applies classical graph-clustering algorithms~(e.g.~\cite{hajek2016achievinga, frey2007clustering, feige2005spectral, abbe2020entrywise}) followed by synchronization within each identified community, leads to sub-optimal results, since it does not leverage the consistency~(resp.~inconsistency) of group transformations within each community~(resp.~across different communities). To address this limitation, several improved optimization criteria are proposed~\cite{bajaj2018smac, fan2021joint, wang2023multi} incorporating these consistencies. Due to the non-convex and computationally intractable nature of directly solving such optimization programs, convex relaxation techniques such as semidefinite programming~(SDP)~\cite{fan2021joint} and spectral method~\cite{fan2021spectral, bajaj2018smac, chen2021non, wang2023multi} have emerged, which provide approximated solutions with polynomial time complexity. Importantly, such efficient methods have significantly enhanced the recovery accuracy compared to the aforementioned two-stage approach.

Despite the progress made before, various fundamental questions remain unsettled such as:
\begin{enumerate}
	\vspace{0.1cm}
	\item \textit{What is the fundamental limit for achieving exact recovery of the clusters or group elements? }
	\vspace{0.1cm}
	\item \textit{Is the fundamental limit achieved by the existing algorithms?}
	\vspace{0.1cm}
\end{enumerate}
Finding this limit enables us to benchmark the performance of existing algorithms and assess their potential for improvement. 
\textcolor{black}{Notably, the joint problem for recovery can be viewed as a channel decoding problem from an information theory perspective (similar to the argument made in~\cite[Page 4]{abbe2015exact}). Specifically, the input to the channel consists of the $n$ nodes that encodes the clustering labels and group elements, while the channel outputs correspond to the ${n \choose 2}$ pairwise observations -- each representing whether a connection exists between a node pair and, if so, the associated group transformation observation. In this way, the algorithm that maximizes the probability of correctly recovering the communities and the group elements is the Maximum A Posteriori (MAP) decoder. Moreover, since the community assignments are assumed to be uniformly distributed, the MAP decoder coincides with the Maximum Likelihood (ML) decoder.
Therefore, the performance of MLE serves as the \textit{information-theoretic limits} such that if MLE fails in exact recovery with high probability as the network size $n$ goes to infinity, no algorithm, regardless of its complexity, can succeed with high probability.
}

\if\arXivver1
\subsection{Our contribution} 
\else
\subsection{Our contribution}
\fi
In this work, we answer the two questions above by establishing the information-theoretic limits for the exact recovery of (i) communities, (ii) group elements, and both of them, under the statistical model aforementioned. For ease of presentation, we restrict our attention to the simplest scenario of two communities~($K = 2$) of equal size, which allows us to convey the most informative findings without introducing unnecessary complexity. In a nutshell, on the sparse regime such that $p = \frac{a\log n}{n}$ and $q = \frac{b \log n}{n}$ for some constants $a, b > 0$, and as $n$ is sufficiently large, we identify sharp information-theoretic thresholds for the exact recovery of (i) and (ii) above as follows:
\begin{itemize}
	\vspace{0.1cm}
	\item [(i)] Recovering the communities is possible if and only if
	\begin{equation*}
	\if\arXivver1
	\hspace{2.5cm} 
	\fi
	\frac{a + b}{2} -\sqrt{\frac{ab}{M}} > 1
	\label{eq:intro_recover_clus}
	\end{equation*}
	
	\vspace{0.1cm}
	\item [(ii)] Recovering the group elements is possible if and only if
	\begin{equation*}
	\if\arXivver1
	\hspace{2.5cm} 
	\fi
	\frac{a + b}{2} -\sqrt{\frac{ab}{M}} > 1 \quad \text{and} \quad \frac{a}{2} > 1
	\label{eq:intro_recover_rot}
	\end{equation*}
\vspace{0.1cm}
\end{itemize}
Here, several remarks are worthy to be highlighted: 
\begin{enumerate}
    \vspace{0.2cm}
    \item The recovery of communities is the \textit{prerequisite} for recovering the group elements. In other words, one cannot hope to identify the group elements without determining the communities at first.
    \vspace{0.2cm}
    \item When $M = 1$, our model degenerates to the SBM and the condition in (i) becomes $$(\sqrt{a} - \sqrt{b})^2 > 2$$  which agrees with the existing result of information-theoretic limit on the SBM~\cite{abbe2015exact, abbe2015community, massoulie2014community, mossel2018proof}.
    \vspace{0.2cm}
    \item By examining the performance of existing efficient algorithms~\cite{fan2021spectral, bajaj2018smac, chen2021non, wang2023multi, fan2021joint}, our theory demonstrates a significant performance gap between the information-theoretic limit and those approaches~(see Figure~\ref{fig:threshold_clus}), suggesting a considerable room of improvement exists (see Section~\ref{sec:discussion} for a detailed discussion). 
    \vspace{0.2cm}
    \item Our proof strategy is substantially distinct from (and more difficult than) previous works~(e.g.~\cite{chen2016information, abbe2015exact, mossel2018proof, singer2011angular}) which focuses on community detection or group synchronization individually, and the difficulty comes from dealing with the coupling of these two problems. 
\end{enumerate}

\if\arXivver1
\subsection{Organization} 
\else
\subsection{Organization}
\fi 
The rest of this paper is organized as follows: In the remaining of Section~\ref{sec:intro}, we introduce some additional related works and the notations for our analysis. Then, in Section~\ref{sec:preliminary} we formulate the MLE program for recovery on our statistical model and define the evaluation measurement. Next, in Section~\ref{sec:main_results} we present our theory on the information-theoretic limits, and Section~\ref{sec:discussion} is devoted to discussions on the result. We end up with proofs of the theorems in the remaining sections, from Section~\ref{sec:proof_clus_upper_bound} for Theorem~\ref{the:info_upper_bound_1}, to Section~\ref{sec:proof_cond_lower_rot} for Theorem~\ref{the:cond_lower_rot}. We leave the technical details to the Appendix.

\subsection{Related works}

Given its practical importance to various scientific applications, either community detection or group synchronization has been extensively studied over the past decades. Therefore, this section by no means provides a comprehensive review of all the previous works but only highlights the most related ones. 

Community detection is commonly studied on the SBM~(e.g.~\cite{decelle2011asymptotic, doreian2005generalized, dyer1989solution, fienberg1985statistical, holland1983stochastic, karrer2011stochastic, massoulie2014community, mcsherry2001spectral, mossel2012stochastic, mossel2018proof}) where maximum likelihood is equivalent to minimizing the number of edges between different communities. Then, a major line of research~(e.g.~\cite{bui1984graph, dyer1989solution, boppana1987eigenvalues, snijders1997estimation, jerrum1998metropolis, condon2001algorithms, carson2001hill, mcsherry2001spectral, rohe2011spectral, choi2012stochastic}) attempts to find the information-theoretic limits for exact recovery, especially on the case of two communities, until the two seminal papers~\cite{abbe2015exact, mossel2014consistency} first characterize a sharp threshold for the case of two equal-sized communities, which greatly inspires the present work and is further extended to more general scenarios (e.g.~\cite{abbe2015community, hajek2016achievingb}). In terms of algorithms for recovery, as exactly solving the MLE is often NP-hard, different approaches such as semidefinite programming~(SDP)~\cite{abbe2015exact, hajek2016achievinga, hajek2016achievingb, perry2017semidefinite, guedon2016community, amini2018semidefinite, bandeira2018random}, spectral method~\cite{abbe2020entrywise, vu2014simple, yun2014accurate, massoulie2014community, krzakala2013spectral, ng2002spectral, gao2017achieving}, and belief propagation~\cite{decelle2011asymptotic, abbe2015community} are proposed for obtaining an approximated solution to the MLE. However, many of them have been proven optimal such that they can achieve the information-theoretic limits at least on two equal-sized communities~\cite{hajek2016achievinga, hajek2016achievingb, abbe2020entrywise, yun2014accurate, lei2015consistency, perry2017semidefinite, gao2017achieving}. 

On the side of group synchronization, the goal is to recover the underlying group elements from a set of noisy pairwise measurements. A large body of literature focuses on the aforementioned random rewiring model~(e.g.~\cite{singer2011viewing, fan2019multi, fan2019unsupervised, ling2020near, wang2013exact, chen2016information, chen2014near, huang2013consistent}) which independently replaces each pairwise clean measurement with some random element. 
\textcolor{black}{The fundamental information-theoretic limits for recovery are generally studied in several prior works including~\cite{chen2016information, perry2018message, yang2024asymptotic} regardless of the specific choice of group. Moreover, \cite{javanmard2016phase} focuses on SO(2) (angular synchronization) and 
 U(1) group and provides asymptotic thresholds of recovery for several popular estimators including minimum MSE (Bayes Optimal Estimator) and MLE. Regarding the algorithms, similar to the development of community detection, various approaches are developed for recovery depending on the specific group of interest, including semidefinite relaxation~\cite{singer2011angular, huang2013consistent},  spectral methods~\cite{singer2011angular, arrigoni2016spectral, chaudhury2015global, pachauri2013solving, shen2016normalized, gao2019multi}, and (approximate)~message passing~\cite{shi2020message, perry2018message} (which is conjectured to achieve the information-theoretic limits), along with many theoretical investigations on their performances~\cite{singer2011angular, zhong2018near, ling2020near, huang2013consistent, ling2020solving, gao2021optimal, zhang2022exact}. }

The joint problem of community detection and group synchronization is an emerging research topic~(e.g.~\cite{fan2019multi,fan2019unsupervised, bajaj2018smac, lederman2019representation}) motivated by recent scientific applications such as the cryo-electron microscopy single-particle reconstruction~\cite{frank2006,singer2011viewing,zhao2014rotationally, fan2019representation} aforementioned. In particular, the statistical model adopted in this paper was initially proposed in~\cite{fan2021joint}, which also studied exact recovery by SDP with the performance guarantee provided. Recently, several spectral method-based approaches~\cite{fan2021spectral, chen2021non, wang2023multi} are designed for efficient recovery. As there is a lack of indicators on the performance of these algorithms, the information-theoretic limit obtained in this work fills in the blanks. \textcolor{black}{A detailed discussion on the existing algorithms and the comparison between their performances with the information-theoretic limit, is provided in Section~\ref{sec:discussion}.}

\begin{table*}[t!]
	\centering
	\small
	\caption{A summary of notations for the model. The notations with $^*$ indicate the ground truth parameters.}
	\begin{tabular}{l l|l} \toprule
		\multicolumn{2}{c|}{Definition} & \multicolumn{1}{c}{Description} \\ \midrule
		$\bm{\kappa}^* := \{\kappa_1^*, \ldots, \kappa_n^*\}$,  &$\bm{\kappa} := \{\kappa_1, \ldots, \kappa_n\}$  & The set of true (estimated) community assignment \\
		$\bm{g}^* := \{g_1^*, \ldots, g_n^*\}$, &$\bm{g} := \{g_1, \ldots, g_n\}$ & The set of true (estimated) group elements \\
		$S_k^* := \{i | \kappa_i^* = k\}$, &$S_k := \{i | \kappa_i = k\}$ & The set of nodes in the $k$-th true (estimated) community \\
		$n_k^* := |S_k^*|$, &$n_k := |S_k|$ & The size of the $k$-th true (estimated) community\\
		$\bm{x}^* := \{\bm{\kappa}^*, \bm{g}^*\}$, &$\bm{x} := \{\bm{\kappa}, \bm{g}\}$ & The true (estimated) parameters\\ \midrule
		\multicolumn{2}{c|}{$\mathcal{E}$} & The set of edge connections \\
		\multicolumn{2}{c|}{$\mathfrak{G} := \{g_{ij} | (i, j) \in \mathcal{E}\}$} & The set of group transformations \\
		\multicolumn{2}{c|}{$\bm{y} := \{\mathcal{E}, \mathfrak{G}\}$} & The observation of the network\\
		\multicolumn{2}{c|}{$\mathcal{E}(S)$} & The set of edges within the node set $S$ \\
		\multicolumn{2}{c|}{$\mathcal{E}(S_1, S_2)$} & The set of edges across the node sets $S_1$ and $S_2$ \\
		\bottomrule
	\end{tabular}
	\label{tab:notations}
\end{table*}

\subsection{Notations}
\label{sec:notations}

Here, we define a series of notations for our statistical model that will be frequently used. Given the network of size $n$ with $K$ underlying communities, we denote $\bm{\kappa}^* := \{\kappa_1^*, \ldots, \kappa_n^*\}$ as the set of true community assignment, and $\bm{g}^* := \{g_1^*, \ldots, g_n^*\}$ as the set of true group elements. Similarly, the estimated community assignments and group elements are denoted as $\bm{\kappa} := \{\kappa_1, \ldots, \kappa_n\}$ and $\bm{g} := \{g_1, \ldots, g_n\}$, respectively. We use $n_k^*$ and $S_k^*$ to denote the size of the $k$-th community and the set of nodes belonging to it, respectively, i.e.,$S_k^* := \{i | \kappa_i = k\}$ and $n_k^* := |S_k^*|$ for $k = 1,\ldots K$, and we define $S_k := \{i | \kappa_i = k\}$ and $n_k := |S_k|$ analogous to $S_k^*$ and $n_k^*$. 
Then, we denote $\bm{x}^* := \{\bm{\kappa}^*, \bm{g}^*\}$ and $\bm{x} := \{\bm{\kappa}, \bm{g}\}$ as the true and estimated parameters. By denoting $\mathcal{E}$ as the set of edge connections and $\mathfrak{G} := \{g_{ij} | (i, j) \in \mathcal{E}\}$ as the set of group transformations observed on each edge, we define $\bm{y} := \{\mathcal{E}, \mathfrak{G}\}$ as the whole observation of the network. Lastly, given any set of nodes $S$, we define $\mathcal{E}(S) = \{(i,j) | i,j \in S, (i,j) \in \mathcal{E}\}$ as the set of edge connections with $S$. For any two sets of nodes $S_1$ and $S_2$, we denote $\mathcal{E}(S_1, S_2) = \{(i, j) |i \in S_1, j \in S_2, (i,j) \in \mathcal{E} )\}$ as the set of edges connecting $S_1$ and $S_2$. 
We summarize these notations in Table~\ref{tab:notations}.

Besides, we use the following standard notations for our analysis: for two non-negative functions $f(n)$ and $g(n)$, $f(n) = O(g(n))$ means there exists an absolute positive constant $C$ such that $f(n) \leq Cg(n)$ for all sufficiently large $n$; $f(n) = o(g(n))$ indicates for every positive constant $C$, the inequality $f(n) \leq Cg(n)$ holds for all sufficiently large $n$; $f(n) = \omega(g(n))$ denotes for every positive constant $C$, the inequality $f(n) \geq Cg(n)$ for all sufficiently large $n$; and $f(n) = \Theta(g(n))$ represents there exists two absolute positive constants $C_1, C_2$ such that $C_1g(n) \leq f(n) \leq C_2g(n)$.

\section{Preliminary}
\label{sec:preliminary}

\subsection{Maximum likelihood estimation}
\label{sec:problem_formulation}
Given an observation $\bm{y} = \{\mathcal{E}, \mathfrak{G} \}$ of the network, our goal is to recover the true parameters $\bm{x}^* = \{\bm{\kappa}^*, \bm{g}^*\}$  including both the community assignment $\bm{\kappa}^*$ and the group elements $\bm{g}^*$. Recall that in our statistical model, each pair of nodes are independently connected by following
\begin{equation*}
	\mathcal{P}((i,j) \in \mathcal{E}) = 
	\begin{cases}
	p, &\quad \kappa_i^* = \kappa_j^* \\[2pt]
	q, &\quad \textrm{otherwise}
	\end{cases}
\end{equation*}
Also, the group transformation $g_{ij}$ observed on each edges $(i,j) \in \mathcal{E}$ follows
\begin{equation*}
	g_{ij} = 
	\begin{cases}
		g_i^*\left(g_j^*\right)^{-1}, &\quad \kappa_i^* = \kappa_j^* \\[2pt]
		\text{Uniform}(\mathcal{G}_M), &\quad \textrm{otherwise}
	\end{cases}
\end{equation*}
Given the above, this work focuses on the maximum likelihood estimator~(MLE) that maximizes $\mathcal{P}(\bm{y}| \bm{x})$. Formally, we have
\begin{claim}
The maximum likelihood estimator~(MLE) for estimating $\bm{x}^* = \{\bm{\kappa}^*, \bm{g}^*\}$ from $\bm{y} = \{\mathcal{E}, \mathfrak{G} \}$ is  
\begin{equation}
	\boxed{
		\begin{aligned}
			\psi_{\textup{MLE}}(\bm{y}) &= 
			\begin{cases}
				\displaystyle \argmax_{\bm{\kappa}, \;\bm{g}} \;  |\mathcal{E}_{\textup{inner}}|,  &\; \frac{Mp(1-q)}{q(1-p)} > 1\\
				\displaystyle \argmin_{\bm{\kappa}, \;\bm{g}} \;  |\mathcal{E}_{\textup{inner}}|,  &\; \textup{otherwise}
			\end{cases}
			\\[5pt]
			\textrm{s.t.} & \quad n_k = n_k^*, \quad k = 1,\ldots, K \\[2pt]
			&\quad g_{ij} =  g_i g_j^{-1}, \quad \forall (i,j) \in \mathcal{E}_{\textup{inner}}
		\end{aligned}
		\label{eq:MLE_rewrite}
	}
\end{equation}
where 
\begin{equation}
	\mathcal{E}_{\textup{inner}} := \{(i,j) \;|\; (i,j) \in \mathcal{E}, \; i < j, \; \kappa_i = \kappa_j \}
	\label{eq:def_E_innner}
\end{equation}
denotes the set of edges within the same community. 
\end{claim}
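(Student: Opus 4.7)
The plan is to compute $\mathcal{P}(\bm{y}\mid\bm{x})$ explicitly and show that maximizing it reduces to extremizing $|\mathcal{E}_{\textup{inner}}|$ under the stated constraints. First I would factorize the likelihood over all unordered pairs $(i,j)$ with $i<j$, splitting into four disjoint cases according to whether $\kappa_i=\kappa_j$ or not, and whether $(i,j)\in\mathcal{E}$ or not. From the model each pair contributes independently
\begin{equation*}
\mathcal{P}(\bm{y}\mid\bm{x})=\prod_{\substack{i<j\\\kappa_i=\kappa_j\\(i,j)\in\mathcal{E}}} p\cdot\mathbbm{1}[g_{ij}=g_ig_j^{-1}]\;\prod_{\substack{i<j\\\kappa_i=\kappa_j\\(i,j)\notin\mathcal{E}}} (1-p)\;\prod_{\substack{i<j\\\kappa_i\neq\kappa_j\\(i,j)\in\mathcal{E}}} \frac{q}{M}\;\prod_{\substack{i<j\\\kappa_i\neq\kappa_j\\(i,j)\notin\mathcal{E}}} (1-q).
\end{equation*}
Any $\bm{x}$ violating $g_{ij}=g_ig_j^{-1}$ on some $(i,j)\in\mathcal{E}_{\textup{inner}}$ gives zero likelihood, so this becomes a hard constraint in the MLE and explains the second constraint in \eqref{eq:MLE_rewrite}.

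Next I would take logarithms. Writing $m_s:=|\mathcal{E}_{\textup{inner}}|$, $n_s:=\sum_{k=1}^K\binom{n_k}{2}$ for the number of same-community pairs, $n_d:=\binom{n}{2}-n_s$, and using that $|\mathcal{E}|$ is part of the observation (hence constant over $\bm{x}$), the log-likelihood becomes
\begin{equation*}
\log\mathcal{P}(\bm{y}\mid\bm{x})=m_s\log p+(n_s-m_s)\log(1-p)+(|\mathcal{E}|-m_s)\log\tfrac{q}{M}+(n_d-|\mathcal{E}|+m_s)\log(1-q).
\end{equation*}
The constraint $n_k=n_k^*$ for all $k$ freezes both $n_s$ and $n_d$, so the only $\bm{x}$-dependent quantity on the right-hand side is $m_s$, and the whole expression is affine in $m_s$.

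Collecting the coefficient of $m_s$ I obtain
\begin{equation*}
\log\mathcal{P}(\bm{y}\mid\bm{x})=m_s\cdot\log\frac{Mp(1-q)}{q(1-p)}+\text{constant in }\bm{x}.
\end{equation*}
Hence the MLE maximizes $m_s=|\mathcal{E}_{\textup{inner}}|$ when $\tfrac{Mp(1-q)}{q(1-p)}>1$ and minimizes it when the ratio is less than $1$, which is exactly \eqref{eq:MLE_rewrite}. The argument is essentially bookkeeping; the only place that requires a modest amount of care is justifying that the size constraint $n_k=n_k^*$ (which is part of the prior information about equal-sized communities) can be imposed without loss of generality, so that $n_s,n_d$ are constants and the $\log(1-p),\log(1-q)$ terms drop out, leaving a single log-odds coefficient that determines the direction of optimization. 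There is no genuine obstacle here — the main point is simply to be careful that the "observed edge" and "no edge" contributions are correctly combined and that the indicator enforcing $g_{ij}=g_ig_j^{-1}$ is promoted to a constraint rather than swept into the objective.
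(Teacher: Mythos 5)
Your proposal is correct and follows essentially the same route as the paper: both reduce the likelihood to a product over pairs in which the only $\bm{x}$-dependent factor is $\bigl(\tfrac{Mp(1-q)}{q(1-p)}\bigr)^{|\mathcal{E}_{\textup{inner}}|}$ times the consistency indicator, which is then promoted to a hard constraint. The only cosmetic difference is that you factorize the joint likelihood over pairs and take logarithms, whereas the paper splits it as $\mathcal{P}(\mathcal{E}\mid\bm{\kappa})\,\mathcal{P}(\mathfrak{G}\mid\bm{\kappa},\bm{g},\mathcal{E})$ and absorbs the fixed terms via proportionality; the substance is identical.
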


The derivation of \eqref{eq:MLE_rewrite} is deferred to Section~\ref{sec:proof_mle_derivation}. Here, we assume the cluster sizes $\{n_k^*\}_{k = 1}^K$ are given and therefore the constraint $n_k = n_k^*$ ensures our recovery is valid.  

According to \eqref{eq:MLE_rewrite}, finding the MLE is equivalent to performing the following two tasks simultaneously:
\begin{enumerate}
	 \vspace{0.2cm}
	\item (\textit{For clustering}) Partitioning the data network into $K$ disjoint clusters such that the number of in-cluster edges is \textit{maximized} or \textit{minimized}.
	\vspace{0.2cm}
	\item (\textit{For synchronization}) On each in-cluster edge, the observed group transformation $g_{ij}$ should satisfy the \textit{consistency} $g_{ij} = g_i g_j^{-1}$.
	 \vspace{0.2cm}
\end{enumerate}
Here, as a special case when $M = 1$ such that no group synchronization is needed, \eqref{eq:MLE_rewrite} essentially becomes the MLE on the SBM:
\begin{equation}
	\begin{aligned}
		\psi_{\text{MLE}}(\bm{y}) &= 
		\begin{cases}
			\displaystyle \argmax_{\bm{\kappa}} \;  |\mathcal{E}_{\text{inner}}|,  &\; p > q\\
			\displaystyle \argmin_{\bm{\kappa}} \;  |\mathcal{E}_{\text{inner}}|,  &\; \text{otherwise}
		\end{cases}
		\\[5pt]
		\textrm{s.t.} & \quad n_k = n_k^*, \quad k = 1,\ldots, K \\[2pt]
	\end{aligned}
	\label{eq:MLE_SBM}
\end{equation}
which simply maximizes or minimizes the number of in-cluster edges. As a result, \eqref{eq:MLE_rewrite} can be treated as an extension of \eqref{eq:MLE_SBM} with additional constraints on the group elements included.

In this work, although we focus on the scenario when $\mathcal{G}_M$ is a finite group with order $M$, notice that by letting $M \rightarrow \infty$ that goes to infinity, \eqref{eq:MLE_rewrite} becomes
\begin{equation}
		\begin{aligned}
			\psi_{\textup{MLE}}(\bm{y}) &= \argmax_{\bm{\kappa}, \;\bm{g}} \;  |\mathcal{E}_{\textup{inner}}|
			\\[5pt]
			\textrm{s.t.} & \quad n_k = n_k^*, \quad k = 1,\ldots, K \\[2pt]
			&\quad g_{ij} =  g_i g_j^{-1}, \quad \forall (i,j) \in \mathcal{E}_{\textup{inner}}
		\end{aligned}
		\label{eq:MLE_infinite}
\end{equation}
which is the MLE for the case of an infinite group, e.g.~the orthogonal group $\textrm{O}(d)$ and the rotation group $\textrm{SO}(d)$.  In this scenario, we always aim to maximize the number of in-cluster edges regardless of the model parameters. 

From an algorithm perspective, the program~\eqref{eq:MLE_rewrite} is obviously non-convex and is thus computationally intractable to be exactly solved. Moreover, even solving \eqref{eq:MLE_SBM} is well-known to be NP-hard, including the famous \textit{minimum bisection} problem~(see e.g.~\cite{garey1974some}) as a special case when the network consists of two equal-sized communities.  This challenge gives rise to several approximate but efficient algorithms proposed in \cite{fan2021spectral, chen2021non, wang2023multi, fan2022joint}.

\subsection{Evaluation measurement}
This paper centers on exact information recovery that the true parameter $\bm{\kappa}^*$ and $\bm{g}^*$ are precisely recovered by \eqref{eq:MLE}. For evaluating the recovery of clusters, given any cluster  assignments $\bm{\kappa}$, we introduce the following \textit{zero-one distance} between $\bm{\kappa}^*$ and $\bm{\kappa}$:
\begin{equation}
	\textrm{dist}_{\text{c}}(\bm{\kappa}, \bm{\kappa}^*) := 1 - \max_{\pi \in \Pi_K} \; \mathbbm{1} \big(\kappa^*_i = \pi(\kappa_i); \; i = 1,\ldots, n\big)
	\label{eq:cluster_dist}
\end{equation}
where $\Pi_K$ denotes the permutation group on $\{1, \ldots, K\}$. Notably, $\Pi_K$ is introduced to eliminate the ambiguity on the permutation of cluster labels. That is, any permutation on the labels does not affect the underlying partition, and thus the evaluation metric should be invariant to permutations. The same idea applies to the metric for group elements:  one can only identify them within each community up to some global offset $\bar{g} \in \mathcal{G}_M$, as there is no basis to distinguish $g_i$ with $g_i\bar{g}$ given the pairwise transformations. Therefore, similar to \eqref{eq:cluster_dist}, we define 
\begin{equation}
	\text{dist}_{\text{g}}(\bm{g}, \bm{g}^*) := \sum_{k = 1}^K \left[1 - \max_{\bar{g}_k \in \mathcal{G}_M} \mathbbm{1}\left(g_i^* = g_{i} \bar{g}_k; \; \forall i \in S_k^*\right)\right]
	\label{eq:rotation_dist_k}
\end{equation}
as the distance between $\bm{g}$ and $\bm{g}^*$. Here, we compute the distance on each cluster by maximizing over its offset $\bar{g}_k$, and then sum over all clusters. As a result, \eqref{eq:rotation_dist_k} is zero only if all the group elements are exactly recovered. 

With all the metrics in place, we define, for any recovery procedure $\psi(\bm{y})$ which takes some observation $\bm{y} = \{\mathcal{E},\mathfrak{G}\}$ as input and output the estimation $\bm{x} = \{\bm{\kappa}, \bm{g}\}$, the error probability of $\bm{\kappa}$ and $\bm{g}$ as 
\begin{equation}
	\begin{aligned}
	\mathcal{P}_{e, \text{c}}(\psi) &= \max_{x^{*} = \{\bm{\kappa}^*, \bm{g}^*\}} \mathcal{P}\left(\textrm{dist}_\text{c}\left(\bm{\kappa}, \bm{\kappa}^*\right) \neq 0 \mid \bm{x}^*\right), \\
	\mathcal{P}_{e, \text{g}}(\psi) &= \max_{x^{*} = \{\bm{\kappa}^*, \bm{g}^*\}} \mathcal{P}\left(\textrm{dist}_\text{g}\left(\bm{g}, \bm{g}^*\right) \neq 0 \mid \bm{x}^*\right).
	\end{aligned}
	\label{eq:error_prob}
\end{equation}
In other words, $\mathcal{P}_{e, \text{c}}(\psi)$ (resp.~$\mathcal{P}_{e, \text{g}}(\psi)$) stands for the maximum probability of giving a wrong estimation $\bm{\kappa}$ (resp.~$\bm{g}$) by $\psi$ over all possible ground truth $\bm{x}^*$. 

We remark that in our model setting, the error probability does not depend on the specific choice of $\bm{x}^*$ as there is no preference. Therefore, one should expect both $\mathcal{P}\left(\textrm{dist}_\text{c}\left(\bm{\kappa}, \bm{\kappa}^*\right) \neq 0 \mid \bm{x}^*\right)$  and $\mathcal{P}\left(\textrm{dist}_\text{g}\left(\bm{g}, \bm{g}^*\right) \neq 0 \mid \bm{x}^*\right)$ are invariant to $\bm{x}^*$

\section{Main results}
\label{sec:main_results}
This section presents the information-theoretic limits for exact recovery on the statistical model defined in Section~\ref{sec:intro}. \textcolor{black}{We focus on the scenario of having two communities ($K=2$) of equal sizes ($n_1 = n_2$), and the order of group $M$ is assumed to be a fixed integer as $n$ grows.} Notably, even though we aim to recover both the (i) communities and (ii) group elements simultaneously, there might exist scenarios where the recovery of one is possible but impossible for the another. Because of this, we study the conditions for recovering (i) and (ii) separately. Specifically, for each of them, we provide
\begin{itemize}
	\vspace{0.1cm}
	\item Information-theoretic \textit{upper} bound,  above which the MLE~\eqref{eq:MLE_rewrite} achieves exact recovery with high probability~(w.h.p.)
	\vspace{0.1cm}
	\item Information-theoretic \textit{lower} bound,  below which the MLE~\eqref{eq:MLE_rewrite} fails to {\color{black} exactly} recover with a probability bounded away from zero
\end{itemize}

\subsection{Exact recovery of communities}
\label{sec:exact_recovery_clus}
\begin{theorem}[Upper bound]
Under the setting of two equal-sized clusters. Let $p = \frac{a\log n}{n}$ and $q = \frac{b \log n}{n}$ for $a, b \geq 0$. For any $c > 0$, if 
	\begin{equation}
		\frac{a + b}{2} - (1 + o(1))\sqrt{\frac{ab}{M}} > 1 + c,
		\label{eq:cond_upper_clus}
	\end{equation}
	then the MLE~\eqref{eq:MLE_rewrite} exactly recovers the communities w.p.~at least $1 - n^{-c}$, i.e.,$\mathcal{P}_{e, \mathrm{c}}(\psi_{\textup{MLE}}) \leq n^{-c}$.
	\label{the:info_upper_bound_1}
\end{theorem}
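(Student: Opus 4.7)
The plan is to union-bound the MLE error probability over all alternative clusterings. Because the cluster sizes are pinned, every $\bm{\kappa}\neq\bm{\kappa}^*$ with $n_k=n_k^*$ is described by a ``swap set'' $T=T_1\cup T_2$, where $T_k\subseteq S_k^*$ and $|T_1|=|T_2|=t\ge 1$; the alternative $\bm{\kappa}_T$ exchanges the two labels on $T$. Writing out $P(\bm{y}\mid\bm{x})$ explicitly, all $(1-p)$ and $(1-q)$ factors cancel between truth and alternative because same-cluster/cross-cluster pair counts are preserved, and the ratio collapses to
\begin{equation*}
\frac{P(\bm{y}\mid\bm{\kappa}_T,\bm{g}_T)}{P(\bm{y}\mid\bm{\kappa}^*,\bm{g}^*)}\;=\;\Bigl(\tfrac{Mp(1-q)}{q(1-p)}\Bigr)^{|\mathcal{E}_{\textup{inner}}^T|-|\mathcal{E}_{\textup{inner}}^*|}\prod_{(i,j)\in\mathcal{E}_{\textup{inner}}^T}\mathbbm{1}\bigl(g_{ij}=g_ig_j^{-1}\bigr).
\end{equation*}
Hence the event $E_T$ (``MLE prefers some $(\bm{\kappa}_T,\bm{g}_T)$'') requires (i) some $\bm{g}_T$ satisfying every consistency indicator (\emph{feasibility}), and (ii) the sign of $|\mathcal{E}_{\textup{inner}}^T|-|\mathcal{E}_{\textup{inner}}^*|$ matching the sign of $\log\tfrac{Mp(1-q)}{q(1-p)}$. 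The argmax and argmin branches yield the same rate in the Chernoff step below, so I carry out the argmax case.

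Next I decompose the edges touching $T$ into four independent binomials with $N=t(n/2-t)$: the ``lost'' edges (truly in-cluster, alt-cross) split into $A,A'\sim\mathrm{Bin}(N,p)$, and the ``gained'' edges (truly cross, alt-inner) give $B,B'\sim\mathrm{Bin}(N,q)$; the change in the in-cluster count is exactly $(B+B')-(A+A')$. On the $B+B'$ gained edges, the $g_{ij}$ were originally uniform on $\mathcal{G}_M$; after using the deterministic $g_{ij}$ on the within-sub-part edges to fix per-sub-part offsets in $\bm{g}_T$, feasibility reduces to requiring the $B$ quantities $(g_i^*)^{-1}g_{ij}g_j^*$ (and the analogous $B'$ quantities) to agree on a single group element. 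Conditional on $B,B'$, this happens with probability at most $M^{2-B-B'}$.

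The key step is a weighted Chernoff bound that folds the feasibility discount directly into the tail: for any $\lambda>0$,
\begin{equation*}
P(E_T)\;\le\;M^2\,\mathbb{E}\bigl[M^{-(B+B')}e^{\lambda((B+B')-(A+A'))}\bigr]\;=\;M^2\bigl(1-q+(q/M)e^{\lambda}\bigr)^{2N}\bigl(1-p+pe^{-\lambda}\bigr)^{2N}.
\end{equation*}
Optimising yields $e^{\lambda}=\sqrt{Mp/q}$ and the tilt $pe^{-\lambda}+(q/M)e^{\lambda}=2\sqrt{pq/M}$, so with $p=a\log n/n$, $q=b\log n/n$ and Taylor expansion of the logarithms,
\begin{equation*}
P(E_T)\;\le\;M^2\exp\!\Bigl(-t(n-2t)\,\tfrac{\log n}{n}\,\bigl(a+b-2\sqrt{ab/M}\bigr)(1+o(1))\Bigr).
\end{equation*}
Combining with the crude count $|\{T:|T_k|=t\}|\le\binom{n/2}{t}^2\le(ne/(2t))^{2t}$ and summing over $t$ closes the argument: for $t=o(n)$ each term is $n^{-t\bigl((a+b-2\sqrt{ab/M})-2\bigr)(1+o(1))}\le n^{-2tc}$ by \eqref{eq:cond_upper_clus}, giving a geometric sum $O(n^{-2c})\le n^{-c}$; for $t=\Theta(n)$ the rate $t(n-2t)\log n/n=\Theta(n\log n)$ comfortably dominates the entropy bound $\binom{n/2}{t}^2\le 2^n$.

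The main obstacle, relative to the classical SBM upper bound (e.g.~\cite{abbe2015exact,mossel2018proof}), is the coupling between feasibility and the combinatorial tail: feasibility contributes a factor $M^{-(B+B')}$ that is \emph{not} independent of the event $\{(B+B')-(A+A')\ge 0\}$. The observation that rescues the argument is that the feasibility weight and the Chernoff tilt cooperate -- both penalise large $B+B'$ -- so folding $M^{-(B+B')}$ into the exponential martingale is legitimate and shifts the effective rate from the pure-SBM value $\sqrt{pq}$ to $\sqrt{pq/M}$. This single modification is precisely what replaces the classical SBM threshold $(\sqrt{a}-\sqrt{b})^2/2>1$ by $(a+b)/2-\sqrt{ab/M}>1$ in the joint problem. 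A secondary (but routine) care point is the regime $t\asymp n/2$, where $t(n-2t)/n$ is no longer $\approx t$ but the exponent $\Theta(n\log n)$ easily absorbs the entropy; and a minor book-keeping issue is that the sub-parts $S_k^*\setminus T_k$ and $T_k\cap S_{3-k}^*$ need only be almost-surely connected on the truth to justify the ``single-offset'' reduction, which follows from standard SBM connectivity in the sparse regime.
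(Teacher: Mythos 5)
Your overall skeleton coincides with the paper's: union-bound only over alternative clusterings (never over $\bm{g}$), bound the probability that \emph{some} consistent $\bm{g}$ exists for a given wrong clustering, and fold that feasibility discount into a tilted moment computation with optimal tilt $e^{\lambda}=\sqrt{Mp/q}$, which reproduces the rate $a+b-2\sqrt{ab/M}$ and the same entropy-versus-rate bookkeeping over the swap size. The genuine gap is in the feasibility step. Your reduction to ``the $B$ (resp.\ $B'$) gained-edge discrepancies must agree on a single group element,'' i.e.\ the bound $M^{2-B-B'}$, presumes that the induced subgraphs on all four sub-parts are connected, and you dismiss this as standard SBM connectivity. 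It is not: (i) the swap parts $T_k$ have size $t$ as small as $1$, and for every $t=o(n)$ their induced graphs are typically empty or fragmented, so each misplaced node carries its own free offset; the correct discount is $M^{2t-(B+B')}$ (one factor of $M$ per misplaced node), which happens to be harmless since $M^{2t}=e^{O(t)}$ is absorbed by the $\Theta(t\log n)$ exponent, but it is not what you wrote; (ii) more seriously, the large parts $S_k^*\setminus T_k$ (size $\approx n/2$, internal edge probability $a\log n/n$) are connected w.h.p.\ only when $a>2$, and condition \eqref{eq:cond_upper_clus} does not imply $a>2$ (e.g.\ $a=1$, $b=4$, $M=4$). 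In that regime your single-offset anchor simply does not exist, so the per-edge $M^{-1}$ discounts — the one ingredient that moves the threshold from the SBM value to $\sqrt{ab/M}$ — are unjustified exactly where they are needed.

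This is where the paper does its real work: it counts, for each misplaced node, the cycles its cross edges close through a connected piece of the big part (Lemma~\ref{lemma:bound_E_2_S_1_connect}), and when the big part is disconnected it anchors those cycles in the giant component, discarding only the $o(1)$ fraction of edges landing outside it; this yields the weakened discount $M^{\Delta-(1-o(1))r_1}$ of Lemma~\ref{lemma:bound_E_2_S_1} and is the reason the threshold in \eqref{eq:cond_upper_clus} carries a $(1+o(1))$ factor. Moreover, the exceptional event ``no sufficiently large giant component'' must be small enough to survive multiplication by the number of clusterings under the union bound, so the textbook estimate that a giant component exists with probability $1-n^{-\epsilon}$ is useless here; the paper proves the sharper statement that the giant component misses at most $O(n/\log\log n)$ vertices except with probability $n^{-\Theta(\log n)}$ (Theorem~\ref{the:bound_Z_n_3}), a combinatorial estimate it highlights as one of its main technical contributions. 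Your proposal would need to supply both ingredients — per-node cycle counting valid without connectivity, and a sufficiently sharp giant-component bound — before your displayed bound on $P(E_T)$, and hence the theorem, follows; as written, the argument does not go through on the full parameter range claimed.
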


\begin{theorem}[Lower bound]
Under the same setting of Theorem~\ref{the:info_upper_bound_1}, if 
\begin{equation}
	\frac{a + b}{2} - \sqrt{\frac{ab}{M}} < 1,
	\label{eq:cond_lower_clus}
\end{equation}
then the MLE~\eqref{eq:MLE_rewrite} fails to exactly recover the communities  w.p.~at least $\frac{3}{5}$, i.e., $\mathcal{P}_{e, \mathrm{c}}(\psi_{\textup{MLE}}) \geq \frac{3}{5}$.
	\label{the:cond_lower_clus}
\end{theorem}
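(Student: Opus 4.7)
The plan is to exhibit, with probability at least $3/5$, a feasible alternative $(\bm{\kappa},\bm{g}) \neq (\bm{\kappa}^*,\bm{g}^*)$ whose objective $|\mathcal{E}_{\textup{inner}}|$ ties or beats that of the truth, forcing the MLE to err. The alternative will be the simplest possible: a pair-swap. Fix any $v \in S_1^*$, $u \in S_2^*$ and set $\kappa_v = 2$, $\kappa_u = 1$, leaving all other labels intact (so $n_k = n_k^*$ is preserved). Direct counting gives the objective shift $\Delta_{v,u} = A_v + A_u - B_v - B_u$, where $B_v = |N(v) \cap (S_1^* \setminus \{v\})|$, $A_v = |N(v) \cap (S_2^* \setminus \{u\})|$, and symmetrically for $u$. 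Existence of a valid $\bm{g}$ for the swapped labels further requires the ``implied'' values $\{g_{vj}\, g_j^* : j \in N(v) \cap (S_2^* \setminus \{u\})\}$ to coincide (event $F_v$), so that a single $g_v$ satisfies every new in-community edge incident to $v$, and analogously $F_u$. Call $(v, u)$ a \emph{bad pair} when $F_v \cap F_u$ holds and $\Delta_{v,u} \geq 0$ (resp.~$\leq 0$ in the argmin branch of \eqref{eq:MLE_rewrite}), and let $X$ be the total number of bad pairs.

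The first step is to show $\mathbb{E}[X] \to \infty$. The variables $A_v, B_v, A_u, B_u$ are, up to the single potential edge $(v,u)$, independent Binomials well-approximated by Poissons of means $b\log n/2$ or $a\log n/2$, and conditional on the edge set the cross-cluster transformations $g_{ij}$ are i.i.d.~uniform on $\mathcal{G}_M$, giving $\mathbb{P}(F_v \mid A_v = k) = M^{1-k}$ for $k \geq 1$. Using the identity $\sum_{k} e^{-\lambda} \lambda^k/k! \cdot M^{1-k} = M\, e^{-\lambda(1-1/M)}$, absorbing the $F_v, F_u$ factors amounts to tilting the two $A$-Poissons to the reduced mean $b\log n/(2M)$ while paying a prefactor $M^2 n^{-b(1 - 1/M)}$. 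The Chernoff bound $\mathbb{P}(\mathrm{Pois}(\mu_X) \geq \mathrm{Pois}(\mu_Y)) \lesssim e^{-(\sqrt{\mu_X} - \sqrt{\mu_Y})^2}$ applied to the summed tilted $A$'s versus the summed $B$'s then yields
\[
\mathbb{P}(\text{pair is bad}) \;=\; n^{-(a + b - 2\sqrt{ab/M}) + o(1)},
\]
so $\mathbb{E}[X] = n^{\,2 - a - b + 2\sqrt{ab/M}\, + o(1)}$, which diverges precisely under \eqref{eq:cond_lower_clus}.

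To upgrade $\mathbb{E}[X] \to \infty$ into $\mathbb{P}(X \geq 1) \geq 3/5$, I would apply the Paley--Zygmund inequality after establishing $\mathbb{E}[X^2] \leq (5/3 + o(1))\mathbb{E}[X]^2$, and I expect this variance bound to be the main obstacle. The difficulty is that bad-pair events for pairs sharing a vertex are doubly coupled: for $(v,u)$ and $(v,u')$ the $\Delta$-conditions share $B_v$ and (essentially) $A_v$, while $F_v$ is the \emph{same} event for both, driven by the common random multiset $\{g_{vj} g_j^*\}_{j \in N(v) \cap S_2^*}$, so a feasibility-success at $v$ propagates to many pairs at once. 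I would partition $\mathbb{E}[X^2]$ by the overlap pattern of the two pairs (zero, one, or two shared indices), handle the overlapping cases by conditioning on the neighborhood of the common vertex and summing explicitly, and use the sparsity $p, q = \Theta(\log n/n)$ to ensure the off-diagonal contributions are $(1 + o(1))\mathbb{E}[X]^2$. Finally, on $\{X \geq 1\}$ the swap supplies a feasible $(\bm{\kappa}, \bm{g})$ whose $|\mathcal{E}_{\textup{inner}}|$ is at least as favorable as the truth's, so the MLE either outputs a wrong clustering or is ambiguous between the truth and this alternative; either way $\mathcal{P}_{e,\mathrm{c}}(\psi_{\textup{MLE}}) \geq \mathbb{P}(X \geq 1) \geq 3/5$, which is the claim.
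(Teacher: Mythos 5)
Your witness (a pair swap, with feasibility $F_v\cap F_u$ and a favorable objective shift) is the right kind of alternative hypothesis, and your per-pair exponent $n^{-(a+b-2\sqrt{ab/M})+o(1)}$ matches the paper's per-node estimate (Lemma~\ref{lemma:bound_P_E_H_A_B}, and Lemma~\ref{lemma:bound_P_E_H_A_B_v2} for the argmin branch, which you only gesture at; also note your Chernoff inequality gives only the upper bound, whereas $\mathbb{E}[X]\to\infty$ needs the matching lower bound, proved in the paper by a term-by-term evaluation of the convolution sum). The genuine gap is exactly the step you flag: the inequality $\mathbb{E}[X^2]\le(5/3+o(1))\,\mathbb{E}[X]^2$ is \emph{false} in part of the regime \eqref{eq:cond_lower_clus}, so Paley--Zygmund applied to the raw pair count cannot close the argument, and the fixes you sketch (partitioning by overlap pattern, conditioning, sparsity) do not remove the obstruction. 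The reason is that bad pairs cluster on a single atypical vertex: for two pairs $(v,u),(v,u')$ sharing $v$, the cheapest way to realize both events shifts more of the required deviation onto $v$ (paid once) and less onto $u,u'$ (paid twice). Writing $\phi(d)$ for the per-node large-deviation cost of net cross-minus-in degree $d\log n$ including the consistency factor $M^{-A}$, the pair exponent is $2\phi(0)=2p_e$ with $p_e=\frac{a+b}{2}-\sqrt{ab/M}$, while the shared-vertex exponent is $E_3=\min_{t\ge0}[\phi(t)+2\phi(-t)]<3\phi(0)$ strictly, since $\phi'(0)=\frac12\log(aM/b)>0$ in the argmax branch. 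The shared-vertex contribution to $\mathbb{E}[X^2]$ is $n^{3-E_3+o(1)}$ against $\mathbb{E}[X]^2=n^{4-4p_e+o(1)}$, so you need $E_3>4p_e-1$; but the gap $3p_e-E_3$ is a fixed positive constant of $(a,b,M)$ while $1-p_e$ can be arbitrarily small, so this fails near the threshold. Concretely, for $M=1$, $b=1$, $a\approx 5.79$ (so $p_e=0.99<1$) one gets $E_3\le 2.9<2.96=4p_e-1$, hence $\mathbb{E}[X^2]/\mathbb{E}[X]^2\ge n^{0.06}\to\infty$ and the second moment method yields a vacuous bound; the same phenomenon occurs for $M\ge2$ near the threshold.

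This is precisely why the paper does not run a second moment over swaps. It decouples the two sides: it defines per-node events (node $i$ has more cross than in edges, and all its cross transformations agree for some common $g_i$), restricts to a subset $H\subset S_1^*$ of size $n/\log^3 n$ and ignores edges inside $H$ (using $\mathcal{P}(F_H)\ge 9/10$ from Abbe et al.), so that the events $\{E_H^{(i)}\}_{i\in H}$ are mutually independent; then $|H|\cdot n^{-p_e-o(1)}\to\infty$ gives $\mathcal{P}(E_H)\ge 9/10$ with no variance computation, and intersecting the two sides produces a swappable pair with probability at least $3/5$ (Lemmas~\ref{lemma:proof_E_1_F} and~\ref{lemma:P_E_H_P_F}). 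To salvage your route you would need an additional idea, e.g.\ truncating the per-node deviation in the definition of a bad pair, or replacing $X$ by the two one-sided existence events handled through independence as above---at which point you have essentially reconstructed the paper's proof.
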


\begin{figure*}[t!]
	\centering
	\subfloat[$M = 1$]{\includegraphics[width=0.33\textwidth]{./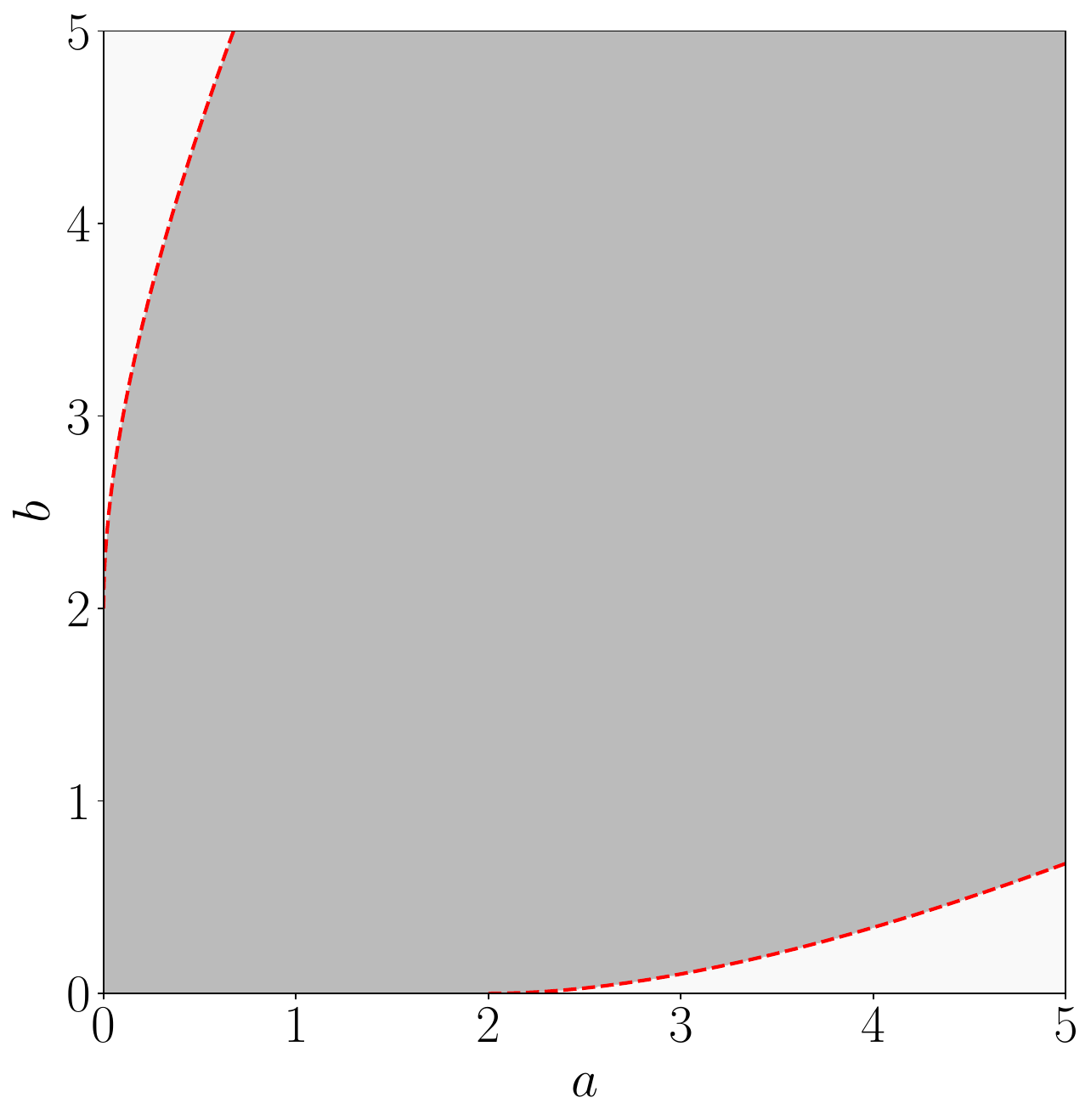}} \hskip 15pt
	\subfloat[$M = 2$]{\includegraphics[width=0.33\textwidth]{./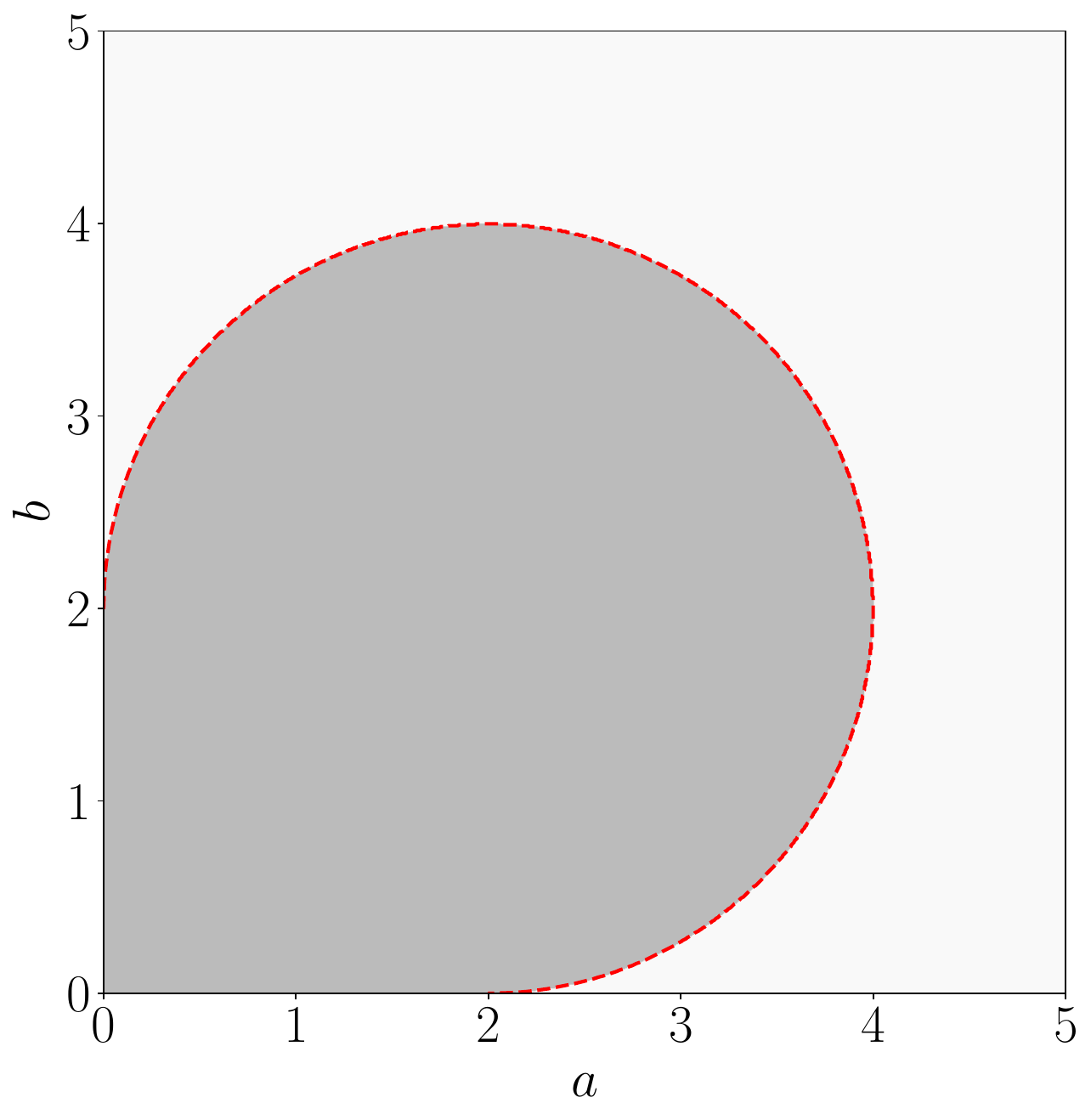}}\\
	\subfloat[$M = 5$]{\includegraphics[width=0.33\textwidth]{./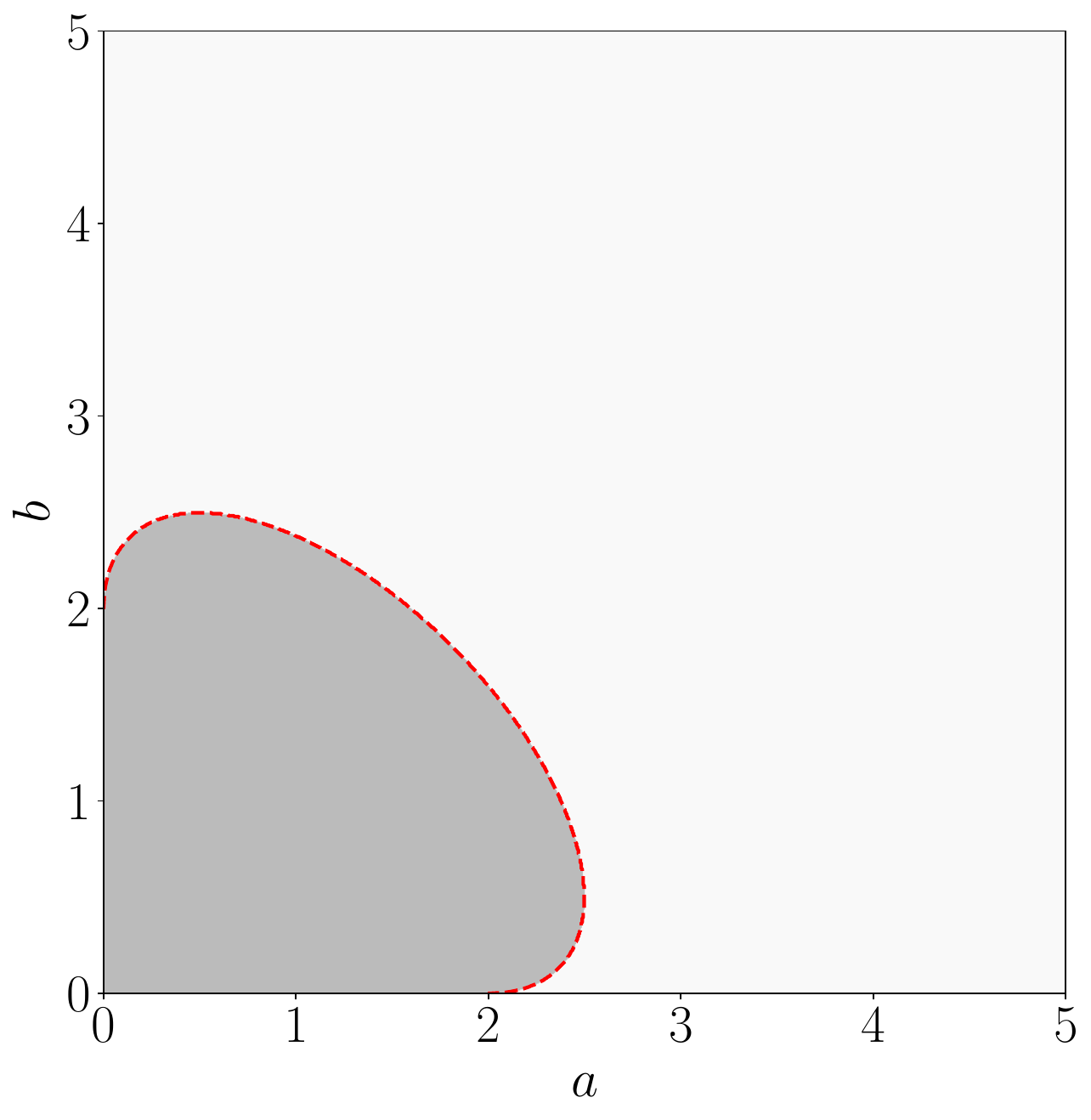}} \hskip 15pt
	\subfloat[$M = 50$]{\includegraphics[width=0.33\textwidth]{./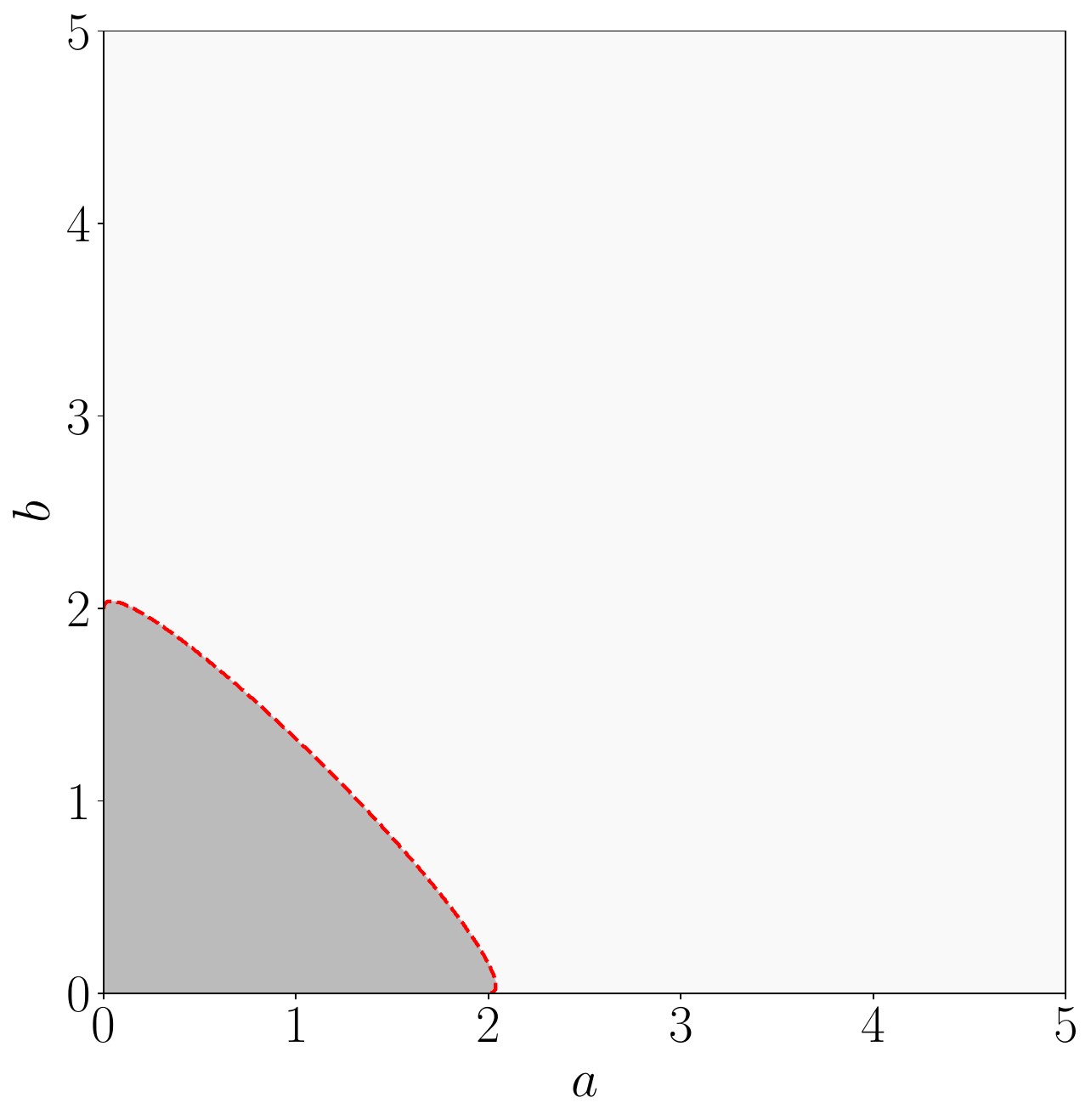}
		\label{fig:phase_tran_clus_d_SDP}	
	}
	\caption{An illustration of the information-theoretic limits for exact recovery of the communities. We plot the phase transition threshold~\eqref{eq:phase_transition_threshold} obtained from Theorems~\ref{the:info_upper_bound_1} and \ref{the:cond_lower_clus} with different choices of $M$ in red dash curves. The gray area represents the region where the MLE~\eqref{eq:MLE_rewrite} fails to recover the communities with a certain probability bounded away from zero, indicating the exact recovery is impossible, and the remaining area represents the region where \eqref{eq:MLE_rewrite} achieves exact cluster recovery with high probability.}
	\label{fig:threshold_clus}
\end{figure*}

\if\arXivver1
\vspace{0.2cm}
\paragraph{Phase transition}
\else
\vspace{0.2cm}
\subsubsection{Phase transition}
\fi
Theorems~\ref{the:info_upper_bound_1} and \ref{the:cond_lower_clus} reveal the presence of a phase transition phenomenon for the exact recovery of communities. That is, as the network size $n$ grows, both the upper bound  \eqref{eq:cond_upper_clus} and the lower bound \eqref{eq:cond_lower_clus} converge to the same threshold given by
\begin{equation}
	\frac{a + b}{2} - \sqrt{\frac{ab}{M}} = 1,
	\label{eq:phase_transition_threshold}
\end{equation}
where exact recovery is possible above and impossible below, indicating the optimality of our analysis. 

To provide further insights, we plot the threshold \eqref{eq:phase_transition_threshold} in Figure~\ref{fig:threshold_clus} for different choice of $M$. As a special case when $M = 1$, our model simplifies to the SBM without synchronization needed, and the threshold becomes 
\begin{equation}
	(\sqrt{a} - \sqrt{b})^2 = 2,
	\label{eq:phase_transition_threshold_SBM}
\end{equation}
which agrees with the existing information-theoretic limit on the SBM proved in \cite{abbe2015exact, mossel2014consistency}.  Moreover, when $M > 1$, the threshold~\eqref{eq:phase_transition_threshold} exhibits a distinct behavior compared to \eqref{eq:phase_transition_threshold_SBM}, where exact recovery is possible as long as either $a$ or $b$, that indicates the density of in-cluster or across clusters connections, is sufficiently large. To understand this, recall the MLE~\eqref{eq:MLE_rewrite} explicitly checks the consistency $g_{ij} = g_ig_j^{-1}$ on each in-cluster edge and then rule out unsatisfied hypotheses. As a result, even edges that span different communities and possess ``noisy'' group transformations are helpful to recovery by serving as negative labels. On the contrary, such edges becomes irrelevant and provide no information in the absence of synchronization when $M = 1$.

In addition, one can observe that as $M$ increases, the gray area in Figure~\ref{fig:threshold_clus} that represents the impossible region for recovery shrinks rapidly. This implies that the consistency check $g_{ij} = g_ig_j^{-1}$ is progressively effective for recovering the communities as $M$ grows. To interpret this fact, recall the noisy group transformation across communities is uniformly drawn from $\mathcal{G}_M$. As a result, the probability for a noisy observation to satisfy the consistency is exactly $M^{-1}$, indicating the increasing difficulty for a wrong hypothesis to pass the check as $M$ is large. Moreover, the threshold~\eqref{eq:phase_transition_threshold} would eventually converge to 
\begin{equation}
	\frac{a + b}{2} = 1
\label{eq:threshold_when_M_is_zero}
\end{equation}
when $M$ is large\footnote{Notice that in our analysis, we only consider the finite groups and thus $M$ must be a constant as opposed to $n$ that grows to infinity. Therefore, it is somewhat not rigorous to let $M \rightarrow \infty$ and achieve the threshold \eqref{eq:threshold_when_M_is_zero}. Here, we present this merely for the purpose of a better interpretation of our theory.}. Then, as a basic result derived from \cite{erdHos1960evolution}, \eqref{eq:threshold_when_M_is_zero} corresponds to the threshold for a random graph being connected with high probability, where the graph is generated from the SBM with parameters $p = \frac{a\log n}{n}$ and $q = \frac{b\log n}{n}$. As a result, we conjecture that in the extreme case when $M \rightarrow \infty$, Theorem~\ref{the:info_upper_bound_1} still holds and thus the exact recovery of clusters becomes possible as long as the whole network is connected.

\if\arXivver1
\paragraph{Proof strategy}
\else
\vspace{0.1cm}
\subsubsection{Proof strategy}
\fi
The proofs of Theorems~\ref{the:info_upper_bound_1} and \ref{the:cond_lower_clus} are deferred to Section~\ref{sec:proof_clus_upper_bound} and Section~\ref{sec:proof_clus_lower_bound}, respectively. 
Notably, our proof technique for the upper bound significantly differs from those traditional approaches employed in community detection or group synchronization~(e.g.~\cite[Theorem 2]{abbe2015exact} for SBM and \cite[Theorem 1]{chen2016information} for synchronization) in the way of handling different hypotheses. To be concrete, since the MLE fails when there exists a \textit{wrong} hypothesis $\bm{x} = \{\bm{\kappa}, \bm{g}\}$ with a higher likelihood than the ground truth $\bm{x}^* = \{\bm{\kappa}^*, \bm{g}^*\}$, i.e.,
\begin{equation*}
	\mathcal{P}(\bm{y}|\bm{x}) > \mathcal{P}(\bm{y} | \bm{x}^*) \quad \text{and} \quad \textrm{dist}_{\text{c}}(\bm{\kappa}, \bm{\kappa}^*) > 0, 
\end{equation*}
a common path towards an upper bound of $\mathcal{P}_{e, \mathrm{c}}(\psi_{\textup{MLE}})$ is by first applying the union bound over all wrong hypothesis then conducting a sharp analysis on each one individually. That is 
\begin{align}
	\mathcal{P}_{e, \mathrm{c}}(\psi_{\textup{MLE}}) &= \mathcal{P}\left(\exists \bm{x}: \frac{\mathcal{P}(\bm{y}|\bm{x})}{\mathcal{P}(\bm{y}|\bm{x}^*)} \geq 1, \; \textrm{dist}_{\text{c}}(\bm{\kappa}, \bm{\kappa}^*) > 0\right) \label{eq:upper_bound_P_union_new}\\
	&\leq \sum_{\bm{x}: \textrm{dist}_{\text{c}}(\bm{\kappa}, \bm{\kappa}^*) > 0} \mathcal{P}\left(\frac{\mathcal{P}(\bm{y}|\bm{x})}{\mathcal{P}(\bm{y}|\bm{x}^*)} \geq 1\right).
	\label{eq:upper_bound_P_union}
\end{align}
This proof strategy offers a simple but sharp analysis in the context of community detection or synchronization e.g.~\cite{abbe2015exact, chen2016information}, where cluster memberships or group elements are considered separately.
However, in our situation when these two components are coupled, \eqref{eq:upper_bound_P_union} fails to provide a tight result due to the exponentially large space of the wrong hypothesis. To illustrate this, consider that for any hypothesis of the cluster assignment $\bm{\kappa}$, there are $M^n$  combinations of the associated group elements $\bm{g}$. Therefore, \eqref{eq:upper_bound_P_union} explodes exponentially as $n \rightarrow \infty$. To address this issue, we instead analyze the original probability \eqref{eq:upper_bound_P_union_new} that involves different $\bm{g}$ as a whole, leading to a substantially more complicated proof scheme. 

In addition, our proof relies on several sharp analyses related to random graph theory~(see e.g.,~Theorem~\ref{the:bound_Z_n_3} in Appendix~\ref{sec:erdos_renyi_intro}), which serves as one of the main technical challenges. We leave some discussion in Section~\ref{sec:discuss_er_graph}. 

{\color{black}
\vspace{0.2cm}
\paragraph{An intuitive justification for Theorem~\ref{the:info_upper_bound_1}} 
Here, we provide an intuitive justification for the upper bound (Theorem~\ref{the:info_upper_bound_1}). Notably, the analysis here is intentionally informal as it involves a few heuristic assumptions and approximations. To begin with, consider the regime where $Mp(1-q) > q(1-p)$. In this case, the MLE~\eqref{eq:MLE_rewrite} maximizes inner-cluster edges while maintaining group transformation consistency. For a specific node $i$, let us define $N^{(i)}_{\text{inner}} \sim \text{Binom}\left(\frac{n}{2}-1, p\right)$ and $N^{(i)}_{\text{inter}} \sim \text{Binom}\left(\frac{n}{2}, q\right)$ as the number of inner and inter-cluster connections with the node $i$, respectively. Then, the MLE~\eqref{eq:MLE_rewrite} fails if there exists a node $i$, such that
\begin{equation*}
\underbrace{N^{(i)}_{\text{inter}} \cdot \mathbbm{1}\left(g_{ij} = g_ig^{-1}_j, \;\forall j \text{ connected to } i, \text{ and } \kappa_j \neq \kappa_i \right)
}_{ =: \tilde{N}^{(i)}_{\text{inter}}} \geq N^{(i)}_{\text{inner}}, 
\end{equation*}
for some group elements $\{g_j\}_{j = 1}^n$. 
For simplicity, we have ignored the enforcement of cluster sizes (otherwise we need to find a pair of such nodes). To proceed, we estimate the probability mass function~(PMF) of $\tilde{N}^{(i)}_{\text{inter}}$ as, for $k \geq 1$,
\begin{align}
\mathcal{P}(\tilde{N}^{(i)}_{\text{inter}} = k) &= \mathcal{P}\left(g_{ij} = g_ig^{-1}_j, \;\forall j \text{ connected to } i, \text{ and } \kappa_j \neq \kappa_i \mid N^{(i)}_{\text{inter}} = k \right) \cdot \mathcal{P}\left(N^{(i)}_{\text{inter}} = k\right) \nonumber \\ 
&\overset{(a)}{\approx} \frac{1}{M^k} \cdot {n/2 \choose k} q^k(1-q)^{\frac{n}{2} - k} \approx \frac{e^{-\frac{nq}{2}}}{k!} \left(\frac{nq}{2M}\right)^k.
\label{eq:ninter_tilde}
\end{align}
In the last step of~\eqref{eq:ninter_tilde}, we apply the Poisson approximation that $\text{Binom}\left(\frac{n}{2}, q\right) \approx \text{Poisson}\left(\frac{nq}{2}\right)$. Notably, in step $(a)$ we approximate that the probability for $k$ inter-cluster edges to satisfy the group transformation consistency is $1 / M^k$, which aligns with our model setting where each edge is drawn uniformly at random from $M$ group elements. However, we emphasize that this assumption requires careful justification that is provided in Section~\ref{sec:prood_upper_bound_clus_cycle_analysis}, which plays a critical role in our proof. Given the PMF~\eqref{eq:ninter_tilde}, the moment generating function~(MGF) of $\tilde{N}^{(i)}_{\text{inter}}$ can be approximated as 
\begin{align*}
\mathbb{E}\left[e^{t\tilde{N}^{(i)}_{\text{inter}}}\right] &= \underbrace{1 - \sum_{k = 1}^\infty \mathcal{P}(\tilde{N}^{(i)}_{\text{inter}} = k)}_{\mathcal{P}(\tilde{N}^{(i)}_{\text{inter}} = 0)}  + \sum_{k = 1}^\infty \mathcal{P}(\tilde{N}^{(i)}_{\text{inter}} = k) e^{tk} \approx 1 - \sum_{k = 1}^\infty \frac{e^{-\frac{nq}{2}}}{k!} \left(\frac{nq}{2M}\right)^k \cdot (1 - e^{tk}) \\
&= 1 - \exp\left(-\left(1 - \frac{1}{M}\right)\frac{nq}{2}\right) + \exp\left(-\left(1 - \frac{e^{t}}{M}\right)\frac{nq}{2}\right) .
\end{align*}
Applying the Chernoff bound, we have, for any $t > 0$
\begin{equation}
\mathcal{P}\left(\tilde{N}^{(i)}_{\text{inter}} \geq N^{(i)}_{\text{inner}}\right) \leq \mathbb{E}\left[e^{t(\tilde{N}^{(i)}_{\text{inter}} -N^{(i)}_{\text{inner}})}\right].
\label{eq:prob_n_i_n_o}
\end{equation}
As $\tilde{N}^{(i)}_{\text{inter}}$ and $N^{(i)}_{\text{inner}}$ are independent, we can further bound the RHS of \eqref{eq:prob_n_i_n_o} as
\begin{align}
\log \left(\mathbb{E}\left[e^{t( \tilde{N}^{(i)}_{\text{inter}}-N^{(i)}_{\text{inner}} )}\right]\right) &{=} \log\left(\mathbb{E}\left[e^{t\tilde{N}^{(i)}_{\text{inter}}}\right]\right) +  \log\left(\mathbb{E}\left[e^{-tN^{(i)}_{\text{inner}}}\right]\right) \nonumber \\
&\overset{(a)}{\approx} \log\left(1 - \exp\left(-\left(1 - \frac{1}{M}\right)\frac{nq}{2}\right) + \exp\left(-\left(1 - \frac{e^{t}}{M}\right)\frac{nq}{2}\right)\right) + \frac{np}{2}\left(e^{-t} - 1\right) \nonumber\\
&\overset{(b)}{\approx} \frac{n}{2}\left(q\left(\frac{e^{t}}{M} - 1\right) + p\left(e^{-t} -1\right)\right), \label{eq:exp_n_i_n_o}
\end{align}
where $(a)$ applies the Poisson approximation on $N^{(i)}_{\text{inner}}$ similar to $\tilde{N}^{(i)}_{\text{inter}}$ in \eqref{eq:ninter_tilde}, and $(b)$ holds due to $e^{-t} < M$ for $t > 0$, which will indeed be the case when optimizing with regard to $t$ in \eqref{eq:optimize_t_proof_sketch},
followed by using the approximations $e^x \approx 1 + x$ and $\log(1+x) \approx x$. Now, as $t$ can be chosen arbitrary, by taking 
\begin{equation}
t = \frac{1}{2}\log\left(\frac{Mp}{q}\right) = \frac{1}{2}\log\left(\frac{Ma}{b}\right),
\label{eq:optimize_t_proof_sketch}
\end{equation}
to minimize the RHS of \eqref{eq:exp_n_i_n_o}, we obtain
\begin{equation*}
    \log \left(\mathbb{E}\left[e^{t(\tilde{N}^{(i)}_{\text{inter}}-N^{(i)}_{\text{inner}} )}\right]\right) \leq - \left(\frac{a + b}{2} - \sqrt{\frac{ab}{M}}\right) \log n.
\end{equation*}
By plugging this into \eqref{eq:prob_n_i_n_o}, we have
\begin{equation*}
    \mathcal{P}\left(\tilde{N}^{(i)}_{\text{inter}}> N^{(i)}_{\text{inner}}\right) \leq \exp\left(-\left(\frac{a + b}{2} - \sqrt{\frac{ab}{M}}\right)\log n\right).
\end{equation*}
Applying the union bound over all the $n$ nodes yields
\begin{equation*}
    \mathcal{P}\left(\text{MLE~\eqref{eq:MLE_rewrite} fails}\right) \leq \sum_{i = 1}^n \mathcal{P}\left(\tilde{N}^{(i)}_{\text{inter}} >N^{(i)}_{\text{inner}}\right) \leq \exp\left(-\left(\frac{a + b}{2} - \sqrt{\frac{ab}{M}} -1\right)\log n\right).
\end{equation*}
Therefore, the failure probability goes to $0$ when $\frac{a + b}{2} - \sqrt{\frac{ab}{M}} > 1$. 

We remark that the above analysis is based on several approximations and assumptions. For instance, we do not enforce the cluster size constraint, and the consistency analysis is entirely omitted in~\eqref{eq:ninter_tilde}. Nevertheless, this simplified argument yields the same parameter threshold as the more rigorous analysis presented in Section~\ref{sec:proof_clus_upper_bound}. As such, it serves as a warm-up proof and provides intuitive justification for the theorem.
}

\subsection{Exact recovery of group elements}
\begin{theorem}[Upper bound]
Under the same setting of Theorem~\ref{the:info_upper_bound_1}. For any $c > 0$, if  
\begin{equation}
	\frac{a + b}{2} - (1 + o(1))\sqrt{\frac{ab}{M}} > 1 + c \quad \textrm{and} \quad \frac{a}{2} > 1 + c 
	\label{eq:cond_upper_rot}
\end{equation}
then the MLE~\eqref{eq:MLE_rewrite} exactly recovers the group elements w.p.~at least $1 - n^{-c + o(1)}$, i.e.,$\mathcal{P}_{e, \mathrm{g}}(\psi_{\textup{MLE}})  \leq n^{-c + o(1)}$.
\label{the:cond_upper_rot}
\end{theorem}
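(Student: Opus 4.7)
The plan is to reduce the exact recovery of the group elements to two essentially decoupled events: the MLE identifies the correct community partition (handled directly by Theorem~\ref{the:info_upper_bound_1}) and, conditional on that, each inner-cluster subgraph is connected so that synchronization has a unique solution up to a per-cluster global offset. I would begin from the decomposition
\begin{equation*}
\mathcal{P}_{e,\textup{g}}(\psi_{\textup{MLE}}) \;\le\; \mathcal{P}\bigl(\textrm{dist}_\textup{c}(\bm{\kappa},\bm{\kappa}^*)\neq 0\bigr) + \mathcal{P}\bigl(\textrm{dist}_\textup{g}(\bm{g},\bm{g}^*)\neq 0,\; \textrm{dist}_\textup{c}(\bm{\kappa},\bm{\kappa}^*) = 0\bigr).
\end{equation*}
Under the first inequality of \eqref{eq:cond_upper_rot}, Theorem~\ref{the:info_upper_bound_1} bounds the first term by $n^{-c}$, so the entire remaining task is to control the joint event in the second term.

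On the event that the MLE partition coincides with the truth (up to a label permutation), the MLE constraint $g_{ij}=g_ig_j^{-1}$ is enforced exactly on the genuine inner edges $\mathcal{E}(S_1^*)\cup\mathcal{E}(S_2^*)$, where the observations are noiseless, $g_{ij}=g_i^*(g_j^*)^{-1}$. Hence the consistency system is feasible, and its full solution set is the orbit of $\bm{g}^*$ under an independent global right-multiplication by $\bar{g}\in\mathcal{G}_M$ on each connected component of the subgraph induced on $S_k^*$. Since the metric \eqref{eq:rotation_dist_k} only absorbs one global offset per cluster, $\textrm{dist}_\textup{g}(\bm{g},\bm{g}^*)=0$ as soon as both inner subgraphs are connected, and therefore the second term above is bounded by the probability that the inner subgraph on $S_1^*$ or $S_2^*$ is disconnected.

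Each inner subgraph is an Erdős--Rényi graph $G(n/2,p)$ with $p=a\log n/n$, independent of the across-cluster edges and of the group measurements on them. A direct first-moment computation gives
\begin{equation*}
\mathbb{E}[\,\#\text{isolated vertices in }G(n/2,p)\,] = \tfrac{n}{2}(1-p)^{n/2-1} = \tfrac{1}{2}\,n^{\,1 - a/2 + o(1)},
\end{equation*}
and the classical fact that in this regime disconnectedness is dominated by isolated vertices (a statement I would import from the Erdős--Rényi background results already invoked in the paper, e.g.\ Appendix~\ref{sec:erdos_renyi_intro}) yields $\mathcal{P}(G(n/2,p)\text{ disconnected})\le n^{-c+o(1)}$ whenever $a/2>1+c$. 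A union bound over the two clusters combined with the community bound then gives $\mathcal{P}_{e,\textup{g}}(\psi_{\textup{MLE}})\le n^{-c}+2n^{-c+o(1)}=n^{-c+o(1)}$, which is the stated rate.

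The main obstacle I anticipate is a careful verification that on the ``correct partition'' event the MLE program really does decouple into pure synchronization; one must rule out that the estimator trades off a few in-cluster edges (by perturbing the partition slightly) against a richer solution set in $\bm{g}$, which would create correlated dependence between the community event and the inner-subgraph-connectivity event. Fortunately, Theorem~\ref{the:info_upper_bound_1} already establishes that $\bm{\kappa}=\bm{\kappa}^*$ is selected by the MLE with probability $1-n^{-c}$ regardless of which feasible $\bm{g}$ attains the optimum, so the coupling enters only through the harmless inequality $\mathcal{P}(\text{disconnected}\cap\{\bm{\kappa}=\bm{\kappa}^*\})\le\mathcal{P}(\text{disconnected})$, and the remainder of the argument is a clean composition of the community-recovery bound with the Erdős--Rényi connectivity threshold.
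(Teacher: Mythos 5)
Your proposal is correct and follows essentially the same route as the paper: split the error according to whether the communities are exactly recovered, bound the first part by Theorem~\ref{the:info_upper_bound_1}, and reduce the second part (on the event $\bm{\kappa}=\bm{\kappa}^*$) to the event that one of the two inner Erd\H{o}s--R\'enyi subgraphs $G(n/2,p)$ is disconnected, which under $a/2>1+c$ has probability $n^{-c+o(1)}$. The only cosmetic difference is that the paper invokes Lemma~\ref{lemma:connectivity} directly for the connectivity bound, whereas you sketch it via isolated-vertex counting plus the standard ``disconnection is dominated by isolated vertices'' fact; citing Lemma~\ref{lemma:connectivity} makes that step immediate.
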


\begin{theorem}[Lower bound]
Under the same setting of Theorem~\ref{the:info_upper_bound_1}. If 
\begin{equation}
	\frac{a + b}{2} - \sqrt{\frac{ab}{M}} < 1 \quad \textrm{or} \quad \frac{a}{2} < 1  
	\label{eq:cond_lower_rot}
\end{equation}
then the MLE~\eqref{eq:MLE_rewrite} fails to {\color{black} exactly} recover the group elements w.p.~at least $0.45$, i.e.,$\mathcal{P}_{e, \mathrm{g}}(\psi_{\textup{MLE}}) \geq 0.45$.
\label{the:cond_lower_rot}
\end{theorem}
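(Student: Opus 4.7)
The plan is to split the argument into two cases, according to which of the two conditions in \eqref{eq:cond_lower_rot} is violated.

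\textbf{Case 1.} Suppose $\frac{a+b}{2} - \sqrt{ab/M} < 1$. I would start by invoking Theorem~\ref{the:cond_lower_clus} to obtain $\mathcal{P}_{e,\mathrm{c}}(\psi_{\textup{MLE}}) \geq 3/5$, and then bridge from community failure to group-element failure via the conditional estimate $\mathcal{P}(\textrm{dist}_{\mathrm{g}} = 0 \mid \textrm{dist}_{\mathrm{c}} \neq 0) \leq 1/M^2 + o(1) \leq 1/4 + o(1)$. The resulting bound is $\mathcal{P}_{e,\mathrm{g}} \geq (3/5)(3/4) - o(1) = 0.45 - o(1)$. To justify the conditional estimate, note that the balanced-size constraint $n_k = n_k^*$ combined with $\textrm{dist}_{\mathrm{c}} \neq 0$ forces at least two nodes $i, j$ to be assigned to the wrong clusters. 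The MLE's values $g_i, g_j$ are then determined, up to the per-cluster offsets appearing in the definition of $\textrm{dist}_{\mathrm{g}}$, by the feasibility constraints $g_{k\ell} = g_k g_\ell^{-1}$ on the estimated in-cluster edges, and for the mis-classified nodes these constraints involve labels on edges that are truly cross-cluster (hence uniform on $\mathcal{G}_M$). Matching the target value $g_i^* \bar{g}_{\kappa_i^*}^{-1}$ at each mis-classified node is then a coincidence with probability $1/M$, and the two coincidences can be shown to be essentially independent, giving the joint bound $1/M^2 \leq 1/4$.

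\textbf{Case 2.} Suppose $a < 2$. I would focus on vertices that are isolated inside their true community. The subgraph on each true community $S_k^*$ is Erd\H{o}s--R\'enyi $G(n/2, a\log n/n)$, and the expected number of in-cluster isolated vertices is $(n/2)(1-p)^{n/2-1} \sim \tfrac{1}{2} n^{1-a/2}$, which tends to infinity when $a<2$. A standard second-moment or Poisson approximation then yields that such an isolated vertex exists with probability $1 - o(1)$. Conditional on such a vertex $i$, the observation $\bm{y}$ is statistically independent of $g_i^*$: every edge incident to $i$ is cross-cluster, and its label is uniform on $\mathcal{G}_M$ irrespective of $g_i^*$. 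Under a uniform prior on $\bm{g}^*$, this conditional independence forces any estimator's probability of matching the target $g_i^* \bar{g}_{\kappa_i^*}^{-1}$ (with $\bar{g}_{\kappa_i^*}$ pinned down by the remaining members of $S_{\kappa_i^*}^*$) to be at most $1/M$. Hence $\mathcal{P}_{e,\mathrm{g}}(\psi_{\textup{MLE}}) \geq (1-o(1))(1-1/M) \geq 1/2 - o(1)$, which exceeds $0.45$ for $M \geq 2$ and $n$ sufficiently large; taking the max over $\bm{x}^*$, which dominates the uniform average, preserves the bound.

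The technically delicate step is the conditional estimate in Case~1: one must track how the MLE's mis-classification interacts with the feasibility constraints $g_{ij} = g_i g_j^{-1}$ on the estimated in-cluster edges, and show that even after imposing all such constraints the resulting $\bm{g}$ retains enough residual randomness to miss the per-cluster targets with probability at least $3/4$. The cleanest route is to open up the construction used in the proof of Theorem~\ref{the:cond_lower_clus}, identify the explicit bad event (a swap of two low-degree nodes across the two clusters is a natural candidate), and directly bound the $\bm{g}$-matching probability inside that event. Case~2 is more routine, reducing to the classical isolated-vertex calculation at the connectivity threshold of sparse Erd\H{o}s--R\'enyi graphs, but one should verify that the in-cluster subgraph analysis is not disturbed by the cross-cluster edges, which follows from the independence of edge generation across node pairs in our statistical model.
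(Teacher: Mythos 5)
Your proposal follows essentially the same route as the paper's proof: for the condition $\frac{a+b}{2}-\sqrt{ab/M}<1$ it combines the $\tfrac{3}{5}$ community-failure bound from Theorem~\ref{the:cond_lower_clus} with the $M^{-2}$ probability of the mis-classified pair's group elements matching (yielding $\tfrac{3}{5}\cdot\tfrac{3}{4}=0.45$), and for $a<2$ it uses that each true community's subgraph is disconnected below the connectivity threshold, so an unreachable node's group element can only be guessed, giving $\geq 1-M^{-1}\geq \tfrac12$. The only differences are cosmetic: you re-derive the sub-threshold disconnection via the isolated-vertex second-moment argument instead of citing the paper's Lemma~\ref{lemma:connectivity}, you phrase the guessing step as a uniform-prior conditional-independence bound (a more careful rendering of the paper's informal claim), and you carry an $o(1)$ slack (getting $0.45-o(1)$ in Case~1) where the paper states the constant exactly, its own conditioning being no more rigorous than yours at the step you flag as delicate.
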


The proofs of Theorems~\ref{the:cond_upper_rot} and \ref{the:cond_lower_rot} are deferred to Section~\ref{sec:proof_cond_upper_rot} and Section~\ref{sec:proof_cond_lower_rot} respectively, which are simply extended from the ones of Theorems~\ref{the:info_upper_bound_1} and \ref{the:cond_lower_clus}. Similar to the threshold~\eqref{eq:phase_transition_threshold}, Theorems~\ref{the:cond_upper_rot} and \ref{the:cond_lower_rot} suggest a phase transition exists for the exact recovery of group elements, given as 
\begin{equation}
    \frac{a + b}{2} - \sqrt{\frac{ab}{M}} = 1 \quad \text{and} \quad a = 2.
    \label{eq:phase_transition_group_element}
\end{equation}
Here, one can see that \eqref{eq:phase_transition_group_element} is an extension of \eqref{eq:phase_transition_threshold} with the additional condition $a = 2$, implying that the exact recovery of communities is a \textit{prerequisite} for recovering the group elements. In other words, one cannot hope to recover the group elements before recovering the communities at first.  As a result, \eqref{eq:phase_transition_group_element} can be also viewed as the threshold for recovering both communities and group elements. 

Besides, the condition $a = 2$ can be interpreted from a random graph perspective: recall the nodes within each community of size $\frac{n}{2}$ are connected with the same probability $p = \frac{a\log n}{n}$, essentially following the Erdős–Rényi model. Then, it is well known that $a = 2$ represents the sharp threshold for the community being connected. Therefore, this implies that the exact recovery of group elements is possible only if each community is connected, which makes sense as one cannot hope the synchronize successfully over disconnected components due to the lack of information. 

In summary, the recovery of group elements is possible only if (1) the two communities are recovered correctly and (2) each community is connected.

\section{Discussions}
\label{sec:discussion}

\if\arXivver1
\subsection{Existing approaches and their limits}
\else
\subsection{Algorithm design}
\fi

From an algorithm perspective, since directly solving the MLE~\eqref{eq:MLE_rewrite} is non-convex and computationally intractable, finding an efficient algorithm with a competitive performance becomes necessary in practice. 
In the context of community detection, efficient approaches based on semidefinite programming~(SDP)~\cite{abbe2015exact, hajek2016achievinga, hajek2016achievingb, amini2018semidefinite, perry2017semidefinite, li2018convex} and spectral method~\cite{abbe2020entrywise, yun2014accurate, lei2015consistency} have been proposed. In particular, they are known to achieve the information-theoretic limit~\eqref{eq:phase_transition_threshold_SBM} for exact recovery on the SBM~\cite{abbe2015exact, abbe2020entrywise, hajek2016achievinga, perry2017semidefinite}. In light of this, similar algorithms~\cite{fan2021joint, fan2021spectral, chen2021non, wang2023multi} have been applied to the joint problem, and it is natural to expect that these methods also work well all the way down to the information-theoretic limit derived in this paper. 

\begin{figure}[t!]
	\centering
	\includegraphics[width = 0.33\textwidth]{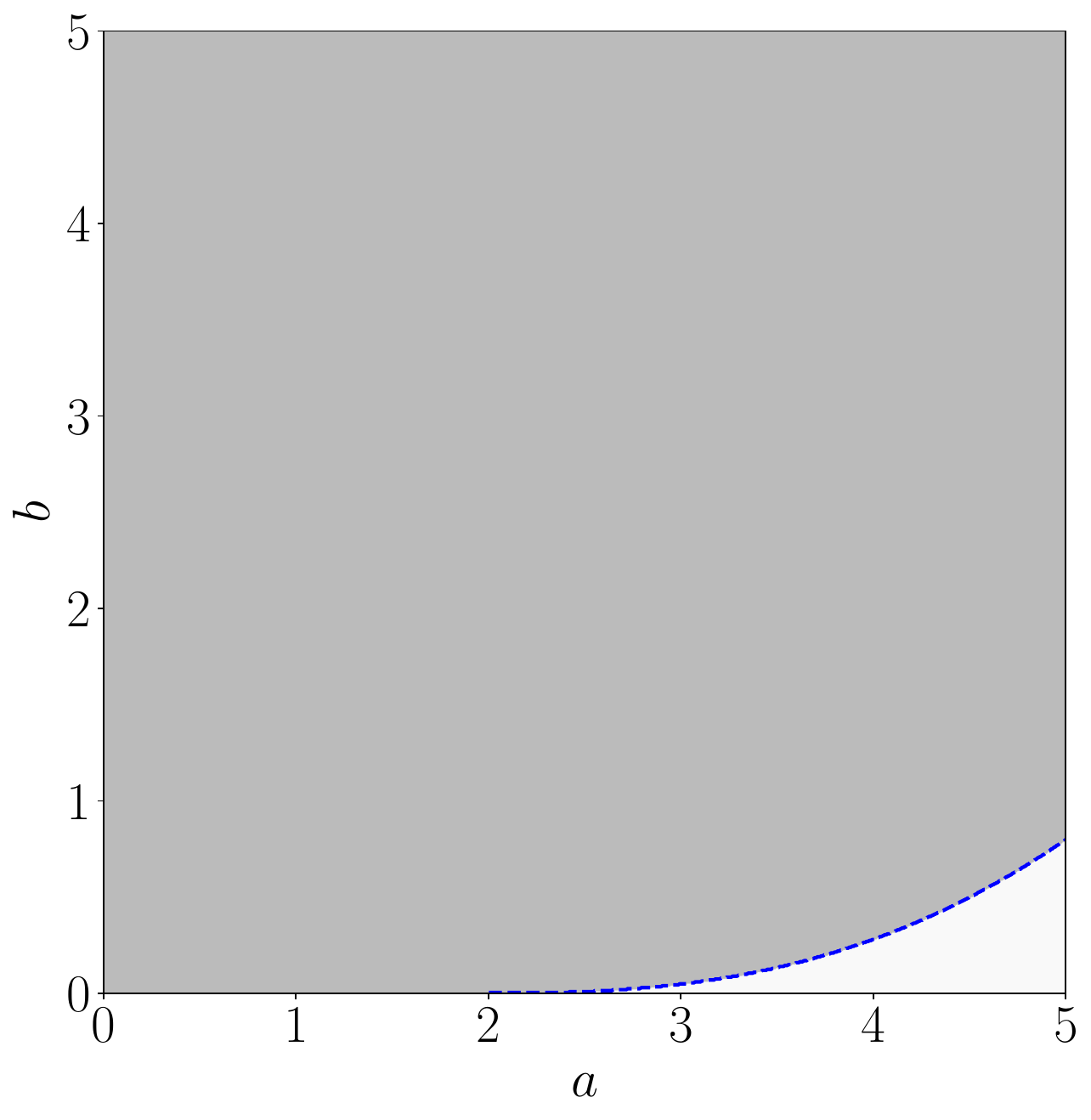}
	\caption{The sharp threshold in blue dash curve for exactly recovering both communities and group elements, by the SDP proposed in~\cite{fan2021joint}. The gray area represents the region where the SDP fails in the exact recovery with high probability.}
	\label{fig:phase_transition_SDP}
\end{figure}

Unfortunately, our result suggests that none of these methods can reach the information-theoretic limit, indicating that there is a performance gap compared to MLE~\eqref{eq:MLE_rewrite}. For instance, the SDP developed in \cite{fan2021joint} for solving the joint problem exhibits a sharp threshold for the exact recovery of both communities and group elements, which is given as
\begin{equation}
	a - \sqrt{2b}\log\left(\frac{ea}{\sqrt{2b}}\right) = 2, 	\label{eq:phase_transition_SDP}
\end{equation}
as shown in Figure~\ref{fig:phase_transition_SDP}. Notably, the statistical model studied in \cite{fan2021joint} is slightly different to ours, where \cite{fan2021joint} focuses on the compact Lie group $\text{SO}(2)$, as opposed to the finite group considered in this work. Therefore, approximately, one may compare the threshold~\eqref{eq:phase_transition_SDP} by the SDP to the information-theoretic limit~\eqref{eq:phase_transition_group_element} with a large $M$. As a result, the performance gap is clearly demonstrated by comparing Figure~\ref{fig:phase_transition_SDP} with Figure~\ref{fig:threshold_clus}. 

{\color{black}
For spectral method-based approaches~\cite{fan2021spectral, chen2021non, wang2023multi, bajaj2018smac}, it is not surprising to observe similar performance gaps exist as well. For example, \cite{chen2021non} proposes using generalized power method and provides the corresponding condition for exact recovery~\cite[Theorem 2]{chen2021non}. That is, under the setting of compact Lie group $\text{SO}(2)$ and two equal-sized clusters, the sufficient condition for exact recovery is:
\begin{equation}
    \sqrt{2b} < a \quad \text{and} \quad  a - \sqrt{2b}\log\left(\frac{ea}{\sqrt{2b}}\right) > 2, 
\end{equation}
which is almost identical to \eqref{eq:phase_transition_SDP} by SDP with the extra requirement $\sqrt{2b} < a$, indicating a performance gap similar to the SDP~\cite{fan2021joint} exists. In a same spirit of methodology, \cite{fan2021spectral} employs QR-decomposition for recovery and develops the condition for exact recovery~\cite[Theorem 5.2]{fan2021spectral} under the same setting (two equal-sized clusters and compact Lie group $\text{SO}(2)$) as\footnote{
The original condition provided in \cite[Theorem 5.2]{fan2021spectral} is written as
\begin{equation*}
    \frac{\sqrt{\left(p(1-p) + q\right)\log(2n)}}{p\sqrt{n}} \leq c_0
\end{equation*}
and \eqref{eq:condition_fan_spectral} is obtained by plugging the scaling $p = \frac{a \log n}{n}$ and $q = \frac{b\log n}{n}$.}
\begin{equation}
    \frac{\sqrt{a + b}}{a} \leq c_0, 
    \label{eq:condition_fan_spectral}
\end{equation}
for some unknown constant $c_0 > 0$.
While the performance gap may not be explicitly evident from \eqref{eq:condition_fan_spectral} due to the unknown value of $c_0$, it is clearly illustrated in the experimental results shown in \cite[Figure 2]{fan2021spectral}. Besides, the performance of another spectral method-based solution \cite{bajaj2018smac} is examined in \cite[Figure 5]{fan2021spectral}, revealing a phase transition threshold similar to~\cite{fan2021spectral}. Lastly, an improved approach based on~\cite{fan2021spectral} by aggregating multiple irreducible representations of group transformations is presented in~\cite{wang2023multi}. Despite this refinement, the method exhibits a similar performance trend (see e.g.~\cite[Figure 3]{wang2023multi}) as~\cite{fan2021spectral}, and thus, the performance gap for algorithms with polynomial time complexity persists.}

{\color{black}
\subsection{Understanding the performance gap}
\label{sec:discuss_understand_gap}
To further understand the performance gap, note that all the aforementioned convex relaxation-based approaches exhibits a common pattern: recovery becomes more challenging as $q$ (the probability of connection across communities) or $b$ increases. This behavior contrasts with the information-theoretic limits displayed in Figure~\ref{fig:threshold_clus}, where recovery can actually become easier as $q$ increases. To interpret this, as discussed in Section~\ref{sec:exact_recovery_clus}, inter-cluster edges in community detection typically carry no useful information but act purely as noise. However, in the joint problem setting, where additional group transformations are observed, inter-cluster edges can aid recovery by helping to eliminate incorrect hypothesis that violate group transformation consistency. Therefore, the performance gap arises from the inability of these methods to fully exploit the potential advantages offered by such ``noisy'' edges.}

In addition, from the perspective of problem formulation, it is evident that all existing approaches deviate from the original MLE~\eqref{eq:MLE_rewrite}. Instead, they are designed to solve alternative optimization problems that are more amenable to (convex) relaxation techniques. As an example, the authors in \cite{fan2021joint, fan2021spectral} consider the orthogonal group $\textrm{O}(d)$ and design the following program for recovery:
\begin{equation}
	\max_{\{\kappa_i\}_{i = 1}^n,\; \{\bm{O}_i\}_{i = 1}^n} \; \sum_{(i,j) \in \mathcal{E}, \; \kappa_i = \kappa_j} \left\langle \bm{O}_{ij}, \; \bm{O}_i \bm{O}_j^\top\right\rangle.
	\label{eq:program_SDP}
\end{equation}
Here, $\bm{O}_i$ and $\bm{O}_{ij} \in \mathbb{R}^{d \times d}$ are orthogonal matrices that represent $g_i$ and $g_{ij}$, respectively. As a result, \eqref{eq:program_SDP} attempts to recover by maximizing the total consistency between $\bm{O}_{ij}$ and $\bm{O}_i\bm{O}_j^\top$ for all in-cluster edges. Obviously, a major difference between $\eqref{eq:program_SDP}$ and the corresponding MLE~\eqref{eq:MLE_infinite} is that the strict constraint $g_{ij} = g_ig_j^{-1}$ or equivalently $\bm{O}_{ij} = \bm{O}_i \bm{O}_j^\top$ given in \eqref{eq:MLE_infinite} has been weakened\footnote{Equivalently, the constraint $g_{ij} = g_ig_j^{-1}$ can be transferred to the objective function in \eqref{eq:MLE_infinite} and becomes a binary-valued penalty $f(g_i, g_j)$ such that $f(g_i, g_j) = \begin{cases}
    -\infty &\quad g_{ij}, \neq g_ig_j^{-1} \\
    0, &\quad \text{otherwise}
\end{cases}$. } to maximizing the inner product $\langle \bm{O}_{ij}, \; \bm{O}_i \bm{O}_j^\top\rangle$, which instead tolerates a certain amount of error between $\bm{O}_{ij}$ and $\bm{O}_i \bm{O}_j^\top$. As a result, the relaxed program~\eqref{eq:program_SDP} may not fully leverage the consistency requirement, resulting in the performance gap.

{\color{black}
To bridge the performance gap between existing algorithms and the information-theoretic limits, our earlier discussion suggests that a more refined approach could be developed by carefully verifying cycle consistency -- similar to our proof technique used in Section~\ref{sec:prood_upper_bound_clus_cycle_analysis} for Theorem~\ref{the:info_upper_bound_1}. However, such an algorithm would likely entail significantly higher computational cost compared to existing convex relaxation-based methods. For instance, similar idea of checking cycle consistency has been recently applied to group synchronization in~\cite{lerman2021robust}, where the proposed algorithm is of complexity of $O(Ln^L)$ (see \cite[Section 4.2.7]{lerman2021robust}), with $L$ denoting the cycle length being checked. To manage the computational cost, it is desirable to use the shortest cycle in the graph (e.g. $L = 3$). However, this becomes problematic in sparse graphs (which is often the case in practice) where nodes typically participate only in long cycles with $L \gg 3$. For example, in an Erd\H{o}s–R{\'e}nyi graph $G(n, p)$ with $p = \Theta(\frac{\log n}{n})$, one can show that for any node, the expected number of cycles with length $L$ passing through it is\footnote{Here, ${n-1 \choose L-1}$ determines the set of nodes involved in the cycle, $\frac{(L-1)!}{2}$ stands for the order of nodes traversing along the cycle, and $p^L$ ensures the cycle are indeed connected.} 
$${n-1 \choose L-1} \cdot \frac{(L-1)!}{2} \cdot p^L = O\left(\frac{(\log n)^L}{n}\right).$$
Therefore, a typical cycle length would be $L = \omega \big(\frac{\log n}{\log \log n }\big)$, indicating the scarcity of short cycles with $L = \Theta(1)$ in the sparse regime.
As a result, the necessity of checking long cycles makes the algorithm inefficient and less favorable comparing to the existing convex relaxation based approaches.

Moreover, the algorithm that checks cycle-consistency relies on the assumption that the measurements $g_{ij}$ are perfect for intra-cluster connections, whereas in practice these are typically noisy. As a result, this fact further limits the practicality of cycle-consistency checks. In contrast, all the existing methods tolerate moderate noise as long as it stays below a certain threshold. Given the above, designing an efficient and robust algorithm that can handle both sparse structures and noisy measurements remains an open and challenging problem.
}

\if\arXivver1
\subsection{Giant component of Erdős–Rényi graphs}
\else
\subsection{Giant component of Erdős–Rényi graphs}
\fi
\label{sec:discuss_er_graph}
An important technical ingredient of our theory is analyzing the giant component of Erdős–Rényi graphs~\cite{erdHos1960evolution}. As aforementioned, it is well-known that an Erdős–Rényi graph $G(n, p)$ of size $n$ and connection probability $p = \frac{a\log n}{n}$ for some constant $a$ is connected with high probability if and only if $a > 1$. Furthermore, when $a < 1$ such as the graph is disconnected, a giant component of size close to $n$ is expected to exist with high probability. 
\textcolor{black}{To be specific, let $Z_n$ denote the size of the largest connected component in the graph. By following standard techniques commonly used in the study of random graphs (e.g., \cite[Theorem 2.14]{frieze2016introduction})\footnote{\cite[Theorem 2.14]{frieze2016introduction} focuses on the regime where $p = \Theta(1/n)$, whereas our interest lies in the denser regime $p = \Theta(\log n / n)$. Therefore, we cannot directly apply \cite[Theorem 2.14]{frieze2016introduction}, but we adopt the same proof techniques to derive the bound in \eqref{eq:Z_n_bound_3_main}. In particular, we first show that there is no component of intermediate size (i.e., between $\beta_1 \log n$ and $\beta_2 n$ for some constants $\beta_1, \beta_2 > 0$), and then we upper bound the total number of vertices in small components (those of size less than $\beta_1 \log n$.)}, we can derive the following result regarding $Z_n$,}
\begin{equation}
	\mathcal{P}\left(Z_n > \left(1 -  n^{-a + \epsilon + o(1)}\right)n\right) > 1 - n^{-\epsilon}, 
	\label{eq:Z_n_bound_3_main}
\end{equation}
which holds for any $\epsilon < a$. However, \eqref{eq:Z_n_bound_3_main} is too loose to be directly applied in our analysis in Section~\ref{sec:proof_clus_upper_bound}. To address this, we improve upon \eqref{eq:Z_n_bound_3_main} by establishing the following sharper result:
\begin{equation}
    \mathcal{P}\left(Z_n \geq n - \Theta\left(\frac{n}{\log\log n}\right)\right) \geq 1 - n^{-\Theta(\log n)},
    \label{eq:large_component_main}
\end{equation} 
which is essential for deriving the sharp threshold in~\eqref{eq:phase_transition_threshold}; see Theorem~\ref{the:bound_Z_n_3} in Appendix~\ref{sec:erdos_renyi_intro} for details. Notably, the bound in \eqref{eq:large_component_main} cannot be obtained via conventional moment methods such as Markov's or Chebyshev's inequalities, which are commonly used in random graph theory (e.g., see~\cite{bandeira2018random}). Instead, we establish \eqref{eq:large_component_main} through a delicate combinatorial analysis, which constitutes one of the main technical challenges of our proof.

\if\arXivver1
\subsection{Future works}
\else
\subsection{Future works}
\fi
In summary, our analysis presented in this work can be generalized in several aspects, listed as follows:
\begin{enumerate}
\vspace{0.2cm}
\item Our theory given in Section~\ref{sec:main_results} is based on the assumption that $\mathcal{G}_M$ is a finite group and $M$ stays as a constant as $n$ grows, extending the result to infinite groups e.g.~the orthogonal group $\textrm{O}(d)$ is left as a future work.
\vspace{0.2cm}
\item \textcolor{black}{Although we restrict our attention on the simplest scenario of two equal-sized communities, our proof framework can be naturally extended to more generalized cases of multiple clusters with different sizes. In such scenarios, we expect that the same proof techniques -- particularly our key technical ingredients such as the cycle consistency analysis in Section~\ref{sec:prood_upper_bound_clus_cycle_analysis} -- can still be applied, though with additional care to account for the presence of multiple communities. As $K$ increases, exact recovery is expected to become more challenging due to a higher number of inter-cluster edges.}
\vspace{0.2cm}
\item \textcolor{black}{
Our theory assumes the group transformation $g_{ij}$ is perfectly observed on connections within the same community. However, it is more common in practice where $g_{ij}$ is perturbed by noise. 
Therefore, in order to extend our analysis to account for such noise, additional care must be taken in the cycle consistency analysis (e.g., Section~\ref{sec:prood_upper_bound_clus_cycle_analysis} for the upper bound). Specifically, one should not expect exact cycle consistency, and the error in group alignment may accumulate in long cycles. The presence of inconsistency may no longer reliably indicate the existence of inter-cluster edges within the cycle.
}

\vspace{0.2cm}
\item \textcolor{black}{
Another promising direction for future work is to establish the theoretical guarantees for weak (partial) recovery, which is important in many practical scenarios where partial recovery may already suffice for downstream applications. However, the proof techniques required for weak recovery may differ significantly from those used in our current analysis for exact recovery. Some insights may be drawn from existing studies on weak recovery in community detection (e.g.,~\cite{abbe2017community, decelle2011asymptotic, mossel2012stochastic}), which show that weak recovery is achievable in the sparse regime $p = a/n$, $q = b/n$ for some constants $a, b$, provided that giant components exist within communities.
}

\vspace{0.2cm}
\item Based on our discussion in Section~\ref{sec:discuss_understand_gap}, finding an efficient algorithm that achieves the information-theoretic limit remains an open problem. Our results suggest focusing on the recovery of communities, which is a prerequisite to accurately recover group elements. Moreover, developing an efficient strategy to check the cycle consistency could be the key to significantly improving the performance of existing approaches.
\end{enumerate}

\section{Proof Preliminaries}
\label{sec:four_subsets}

\begin{figure*}[t!]
    \centering
    \includegraphics[width = 0.7\textwidth]{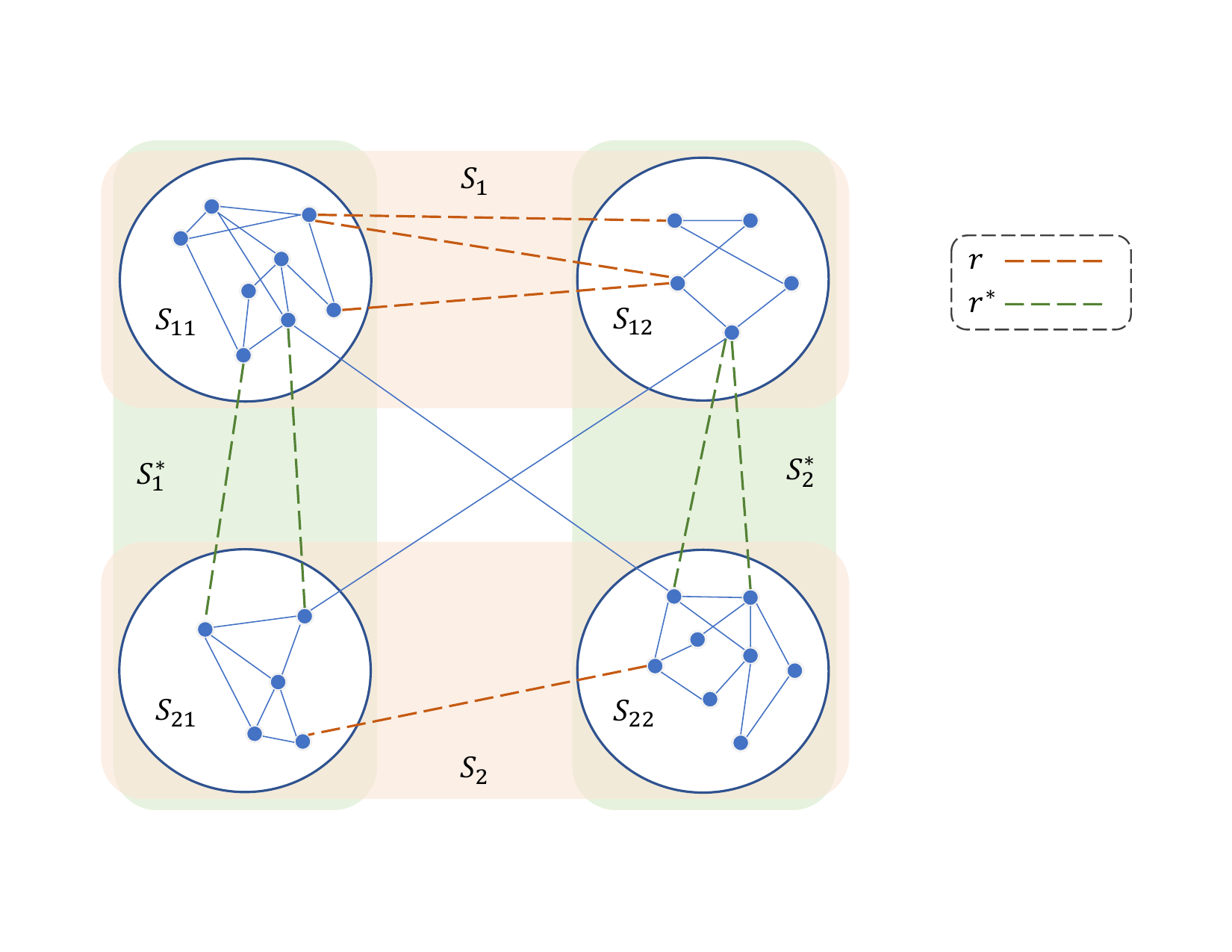}
    \vspace{-0.8cm}
    \caption{An illustration of the four subsets $S_{11}$, $S_{12}$, $S_{21}$, and $S_{22}$ defined in \eqref{eq:four_subsets}. %
    We use green and orange dash lines to indicate the set of edges corresponding to $r$ and $r^*$ defined in \eqref{eq:def_r_in_out}, respectively.}
    \label{fig:four_subsets}
\end{figure*}

We first introduce several important ingredients that support our proof.

\if\arXivver1
\subsection{Four subsets}
\else
\vspace{0.2cm}
\subsubsection{Four subsets}
\fi
Under the setting of two equal-sized communities, given the ground truth of the community assignment $\bm{\kappa}^*$ and any alternative hypothesis $\bm{\kappa}$, we define the following four subsets of nodes, which greatly facilitate our analysis and thus will be frequently used in our proof:
\begin{equation}
	\begin{aligned}
		S_{11} &:= S_1 \cap S_1^* = \left\{ i \mid \kappa_i = 1, \; \kappa_i^* = 1 \right\}, \\
		S_{12} &:= S_1 \cap S_2^* = \left\{ i \mid \kappa_i = 1, \; \kappa_i^* = 2 \right\},\\
		S_{21} &:= S_2 \cap S_1^* = \left\{ i \mid \kappa_i = 2, \; \kappa_i^* = 1 \right\},  \\
	    S_{22} &:= S_2 \cap S_2^* = \left\{ i \mid \kappa_i = 2, \; \kappa_i^* = 2 \right\}.
	\end{aligned}
	\label{eq:four_subsets}
\end{equation}
For instance, \textcolor{black}{$S_{11}$ denotes the set of nodes that are correctly identified as belonging to cluster 1, while $S_{12}$ stands for the set of nodes that are incorrectly assigned to cluster 1 but actually belongs to cluster 2.} See Figure~\ref{fig:four_subsets} for an illustration. Notably, the four subsets are mutually disjoint and their union covers all nodes in the network. Also, one can immediately observe the following equations: 
\begin{equation}
	|S_{11}| = |S_{22}|, \quad\text{and} \quad |S_{21}| = |S_{12}| = \frac{n}{2} - |S_{11}|,
\end{equation} 
which hold for any valid hypothesis $\bm{\kappa}$ \textcolor{black}{under the assumption that the two communities are of equal size}. 

\if\arXivver1
\subsection{Derivation of the MLE~\eqref{eq:MLE_rewrite}}
\else
\vspace{0.2cm}
\subsubsection{Derivation of the MLE~\eqref{eq:MLE_rewrite}}
\fi
\label{sec:proof_mle_derivation}
Our theory is built based on the MLE program~\eqref{eq:MLE_rewrite} where we start by expanding the likelihood $\mathcal{P}( \mathcal{E}, \mathfrak{G} | \bm{\kappa}, \bm{g})$ as
\begin{equation}
	\begin{aligned}
		\mathcal{P}( \mathcal{E}, \mathfrak{G} | \bm{\kappa}, \bm{g}) &= \mathcal{P}( \mathcal{E} | \bm{\kappa}, \bm{g}) \mathcal{P}( \mathfrak{G} | \bm{\kappa}, \bm{g}, \mathcal{E}) \\
		&= \mathcal{P}( \mathcal{E} | \bm{\kappa}) \mathcal{P}( \mathfrak{G} | \bm{\kappa}, \bm{g}, \mathcal{E}),
	\end{aligned}
	\label{eq:likelihood}
\end{equation}
which is the product of the likelihood of edge connections $\mathcal{P}( \mathcal{E} | \bm{\kappa})$ and the likelihood of group transformations given the edges, i.e.,~$\mathcal{P}( \mathfrak{G} | \bm{\kappa}, \bm{g}, \mathcal{E})$. 

According to the statistical model, $\mathcal{P}( \mathcal{E} | \bm{\kappa})$ is given as  
\begin{equation}
	\begin{aligned}
		\mathcal{P}( \mathcal{E} | \bm{\kappa}) 
		&= \prod_{\substack{(i,j) \in \mathcal{E} \\[2pt] \kappa_i = \kappa_j}} p \prod_{\substack{(i,j) \notin \mathcal{E}, \\[2pt] \kappa_i = \kappa_j}} (1-p) \prod_{\substack{(i,j) \in \mathcal{E}, \\[2pt] \kappa_i \neq \kappa_j}} q \prod_{\substack{(i,j) \notin \mathcal{E}, \\[2pt] \kappa_i \neq \kappa_j}} (1-q) \\
		&\propto \left(\frac{p}{1-p}\right)^{\left| \mathcal{E}_{\text{inner}}\right|}  \left(\frac{q}{1-q}\right)^{\left| \mathcal{E}_{\text{inter}}\right|}\\
		&= \left(\frac{p}{1-p}\right)^{\left| \mathcal{E}_{\text{inner}}\right|}  \left(\frac{q}{1-q}\right)^{\left|\mathcal{E}\right| - \left| \mathcal{E}_{\text{inner}}\right|}\\
		&\propto \left(\frac{p(1-q)}{q(1-p)}\right)^{|\mathcal{E}_{\text{inner}}|},
	\end{aligned}
	\label{eq:likelihood_edge}
\end{equation}
\textcolor{black}{
where ${\mathcal{E}_{\text{inter}}} := \{(i,j) \;|\; (i,j) \in \mathcal{E}, \; i < j, \; \kappa_i \neq \kappa_j \}$ denotes the set of edges across different communities}, as opposed to $\mathcal{E}_{\text{inner}}$ defined in \eqref{eq:def_E_innner} for the set of edges within the same community, and we have used the fact that $|\mathcal{E}| = |\mathcal{E}_{\text{inner}}| + |\mathcal{E}_{\text{inter}}|$. 

For $\mathcal{P}( \mathfrak{G} | \bm{g}, \bm{\kappa}, \mathcal{E})$, we have 
\begin{equation}
	\begin{aligned}
		\mathcal{P}( \mathfrak{G} | \bm{g}, \bm{\kappa}, \mathcal{E}) &= \prod_{\substack{(i,j) \in \mathcal{E} \\[2pt] \kappa_i = \kappa_j}} \mathbbm{1}\left(g_{ij} = g_ig_j^{-1}\right) \prod_{\substack{(i,j) \in \mathcal{E} \\[2pt] \kappa_i \neq \kappa_j}} M^{-1} \\
		&\propto \mathbbm{1}\left(g_{ij} = g_ig_j^{-1}; \; \forall (i,j) \in \mathcal{E}_{\text{inner}}\right) \cdot M^{-|\mathcal{E}_{\text{inter}}|} \\
		&\propto \mathbbm{1}\left(g_{ij} = g_ig_j^{-1}; \; \forall (i,j) \in \mathcal{E}_{\text{inner}}\right) \cdot M^{|\mathcal{E}_{\text{inner}}|},
	\end{aligned}
	\label{eq:likelihood_rotation}
\end{equation}
where $\mathbbm{1}(\cdot)$ denotes the indicator function. 

Given the above, by plugging \eqref{eq:likelihood_edge} and \eqref{eq:likelihood_rotation} into \eqref{eq:likelihood},  the MLE problem can be written as 
\if\arXivver1
\begin{equation}
	\begin{aligned}
		\psi_{\text{MLE}}(\mathcal{E}, \mathfrak{G}) &=  \argmax_{\bm{\kappa}, \; \bm{g}} \quad \mathcal{P}( \mathcal{E}, \mathfrak{G} | \bm{\kappa}, \bm{g})\\
		&= \argmax_{\bm{\kappa}, \; \bm{g}} \quad  \left(\frac{Mp(1-q)}{q(1-p)}\right)^{|\mathcal{E}_{\text{inner}}|} \cdot \mathbbm{1}\left(g_{ij} = g_ig_j^{-1}; \; \forall (i,j) \in \mathcal{E}_{\text{inner}}\right) \\[5pt]
		\textrm{s.t.} & \quad n_k = n_k^*, \quad k = 1, \ldots, K,
	\end{aligned}
	\label{eq:MLE}
\end{equation}
\else
\begin{equation}
	\begin{aligned}
		\psi_{\text{MLE}}(\mathcal{E}, \mathfrak{G}) &=  \argmax_{\bm{\kappa}, \; \bm{g}} \quad \mathcal{P}( \mathcal{E}, \mathfrak{G} | \bm{\kappa}, \bm{g})\\
		&= \argmax_{\bm{\kappa}, \; \bm{g}} \quad  \left(\frac{Mp(1-q)}{q(1-p)}\right)^{|\mathcal{E}_{\text{inner}}|} \\
		&\; \cdot \mathbbm{1}\left(g_{ij} = g_ig_j^{-1}; \; \forall (i,j) \in \mathcal{E}_{\text{inner}}\right) \\[5pt]
		\textrm{s.t.} & \quad n_k = n_k^*, \quad k = 1, \ldots, K
	\end{aligned}
	\label{eq:MLE}
\end{equation}
\fi
where we apply the assumption that the size of each community is provided, \textcolor{black}{ and note that \eqref{eq:MLE} applies to arbitrary community settings}. At first glance, solving \eqref{eq:MLE} requires the prior knowledge of the model parameters $p, q$, and $M$. However, by splitting \eqref{eq:MLE} into the two scenarios $Mp(1-q) > q(1-p)$ or $Mp(1-q) \leq q(1-p)$ and making the indicator function as constraints, \eqref{eq:MLE} becomes the aforementioned MLE forumlation \eqref{eq:MLE_rewrite}, which does not involve any model parameter specifically. 

\if\arXivver1
\subsection{Likelihood ratio}
\else
\vspace{0.2cm}
\subsubsection{Likelihood ratio}
\fi
Given the true parameter $\bm{x}^* = \{\bm{\kappa}^*, \bm{g}^*\}$ and a hypothesis $\bm{x} = \{\bm{\kappa}, \bm{g}\}$, our proof is based on analyzing the likelihood ratio defined as 
\begin{equation*}
	\frac{\mathcal{P}(\bm{y}|\bm{x})}{\mathcal{P}(\bm{y}|\bm{x}^*)} = \frac{\mathcal{P}(\mathcal{E}|\bm{\kappa})}{\mathcal{P}(\mathcal{E}|\bm{\kappa}^*)} \cdot \frac{\mathcal{P}( \mathfrak{G} | \bm{\kappa}, \bm{g}, \mathcal{E})}{\mathcal{P}( \mathfrak{G} | \bm{\kappa}^*, \bm{g}^*, \mathcal{E})}.
\end{equation*}
Then, by using the likelihood derived in \eqref{eq:likelihood_edge}, we obtain 
\begin{equation}
	\frac{\mathcal{P}(\mathcal{E}|\bm{\kappa})}{\mathcal{P}(\mathcal{E}|\bm{\kappa}^*)} = \left(\frac{p(1-q)}{q(1-p)}\right)^{|\mathcal{E}_{\text{inner}}| - |\mathcal{E}_{\text{inner}}^*|}, 
	\label{eq:ratio_kappa}
\end{equation}
where $\mathcal{E}_{\text{inner}}^*$ denotes the set of in-cluster edges according to $\bm{\kappa}^*$, analogous to $\mathcal{E}_{\text{inner}}$ defined in \eqref{eq:def_E_innner} for the hypothesis $\bm{\kappa}$. Furthermore, \textcolor{black}{under the setting of two equal-sized clusters and recall the four subsets defined in \eqref{eq:four_subsets}, we have
\begin{equation}
\begin{aligned}
    |\mathcal{E}_{\text{inner}}| &= |\mathcal{E}(S_{11})| + |\mathcal{E}(S_{12})| + |\mathcal{E}(S_{21})|+ |\mathcal{E}(S_{22})| + |\mathcal{E}(S_{11}, S_{12})| + |\mathcal{E}(S_{21}, S_{22})|, \\
    |\mathcal{E}^*_{\text{inner}}| &= |\mathcal{E}(S_{11})| + |\mathcal{E}(S_{12})| + |\mathcal{E}(S_{21})|+ |\mathcal{E}(S_{22})| + |\mathcal{E}(S_{11}, S_{21})| + |\mathcal{E}(S_{12}, S_{22})|.
\end{aligned}
\nonumber
\end{equation}
We suggest readers resorting to Figure~\ref{fig:four_subsets} for a better understanding. Then, $|\mathcal{E}_{\text{inner}}| - |\mathcal{E}_{\text{inner}}^*|$ in \eqref{eq:ratio_kappa} can be expressed as
\if\arXivver1
\begin{equation}
	\begin{aligned}
		|\mathcal{E}_{\text{inner}}| - |\mathcal{E}_{\text{inner}}^*| &= \underbrace{|\mathcal{E}(S_{11}, S_{12})| + |\mathcal{E}(S_{21}, S_{22})|}_{=: r} - \underbrace{\left(|\mathcal{E}(S_{11}, S_{21})| + |\mathcal{E}(S_{12}, S_{22})|\right)}_{=: r^*} \\
		&=: r - r^*,
	\end{aligned}
\label{eq:E_inner_minus_start}
\end{equation}
\else
\begin{equation}
	\begin{aligned}
		|\mathcal{E}_{\text{inner}}| - |\mathcal{E}_{\text{inner}}^*| &= |\mathcal{E}(S_{11}, S_{12})| + |\mathcal{E}(S_{21}, S_{22})| \\
		&\quad - |\mathcal{E}(S_{11}, S_{21})| - |\mathcal{E}(S_{12}, S_{22})| \\
		&=: r - r^*,
	\end{aligned}
	\label{eq:E_inner_minus_start}
\end{equation}
\fi
}
where $r$ and $r^*$ are defined as 
\if\arXivver1
\begin{equation}
	\begin{aligned}
		r := |\mathcal{E}(S_{11}, S_{12})| + |\mathcal{E}(S_{21}, S_{22})|, \quad
		r^* := |\mathcal{E}(S_{11}, S_{21})| + |\mathcal{E}(S_{12}, S_{22})|.
	\end{aligned}
	\label{eq:def_r_in_out}
\end{equation}
\else
\begin{equation}
	\begin{aligned}
		r &:= |\mathcal{E}(S_{11}, S_{12})| + |\mathcal{E}(S_{21}, S_{22})|, \\
		r^* &:= |\mathcal{E}(S_{11}, S_{21})| + |\mathcal{E}(S_{12}, S_{22})|.
	\end{aligned}
	\label{eq:def_r_in_out}
\end{equation}
\fi
One can interpret $r$ (resp. $r^*$) as the number of connections that are within the same community in $\bm{\kappa}$ (resp. $\bm{\kappa}^*$) but across different ones in $\bm{\kappa}^*$ (resp. $\bm{\kappa}$). See Figure~\ref{fig:four_subsets} for an illustration. 

\textcolor{black}{Similarly, for the other part of the likelihood ratio, by using \eqref{eq:likelihood_rotation}, we obtain
\begin{equation*}
\begin{aligned}
	\frac{\mathcal{P}( \mathfrak{G} | \bm{\kappa}, \bm{g}, \mathcal{E})}{\mathcal{P}( \mathfrak{G} | \bm{\kappa}^*, \bm{g}^*, \mathcal{E})} &= \frac{\mathbbm{1}\left(g_{ij} = g_ig_j^{-1}; \; \forall (i,j) \in \mathcal{E}_{\text{inner}}\right)}{\mathbbm{1}\left(g_{ij} = g_i^*(g_j^*)^{-1}; \; \forall (i,j) \in \mathcal{E}_{\text{inner}}^*\right)} \cdot M^{\left(|\mathcal{E}_{\text{inner}}| - |\mathcal{E}_{\text{inner}}^*| \right)} \\
    &= \frac{\mathbbm{1}\left(g_{ij} = g_ig_j^{-1}; \; \forall (i,j) \in \mathcal{E}_{\text{inner}}\right)}{\mathbbm{1}\left(g_{ij} = g_i^*(g_j^*)^{-1}; \; \forall (i,j) \in \mathcal{E}_{\text{inner}}^*\right)} \cdot M^{r - r^*}, 
\end{aligned}
\end{equation*}
where we have applied \eqref{eq:E_inner_minus_start} in the last step.}

Now, by combining the results above, we can express the likelihood ratio as
\if\arXivver1
\begin{equation}
	\begin{aligned}
		\frac{\mathcal{P}(\bm{y}|\bm{x})}{\mathcal{P}(\bm{y}|\bm{x}^*)} 
		&= \left(\frac{Mp(1-q)}{q(1-p)}\right)^{r - r^*} \cdot 	\frac{\mathbbm{1}\left(g_{ij} = g_ig_j^{-1}; \; \forall (i,j) \in \mathcal{E}_{\text{inner}}\right)}{\mathbbm{1}(g_{ij} = g_i^*(g_j^*)^{-1}; \; \forall (i,j) \in \mathcal{E}_{\text{inner}}^*)}, 
	\end{aligned}
	\label{eq:likelihood_ratio}
\end{equation}
\else
\begin{equation}
	\begin{aligned}
	\frac{\mathcal{P}(\bm{y}|\bm{x})}{\mathcal{P}(\bm{y}|\bm{x}^*)} 
	&= \left(\frac{Mp(1-q)}{q(1-p)}\right)^{r - r^*} \\
	&\quad \cdot \frac{\mathbbm{1}\left(g_{ij} = g_ig_j^{-1}; \; \forall (i,j) \in \mathcal{E}_{\text{inner}}\right)}{\mathbbm{1}(g_{ij} = g_i^*(g_j^*)^{-1}; \; \forall (i,j) \in \mathcal{E}_{\text{inner}}^*)} , 
	\end{aligned}
\label{eq:likelihood_ratio}
\end{equation}
\fi
which will be served as the starting point of our analysis.

\section{Proof of Theorem~\ref{the:info_upper_bound_1}}
\label{sec:proof_clus_upper_bound}

For ease of presentation, without loss of generality, we assume all the ground truth group elements  $\bm{g}^* = \{g_i^*\}_{i = 1}^n$ satisfies
{\color{black}
\begin{equation}
	g_i^* = I, \quad \forall i \in \{1,\ldots, n\},
	\label{eq:assumption_theta}
\end{equation}
where $I$ stands for the identity element of group $\mathcal{G}_M$, i.e., $I \cdot g = g$.} %
Also, we denote 
\begin{equation*}
	\mathcal{P}_{\bm{x}^*}(\cdot) := \mathcal{P}(\cdot | \bm{x}^*), \quad \mathcal{P}_{\bm{x}}(\cdot) := \mathcal{P}(\cdot | \bm{x}), 
\end{equation*}
as the probability measure conditional on the ground truth $\bm{x}^*$ and the hypothesis $\bm{x}$, respectively. 

For readability, we split our proof into four major steps.

\if\arXivver1
\subsection{Step 1: A union bound}
\else
\vspace{0.2cm}
\subsubsection{Step 1: A union bound}
\fi
To begin with, given an observation $\bm{y}$, the MLE~\eqref{eq:MLE_rewrite} fails to exactly recover $\bm{\kappa}^*$ if and only if there exists an alternative hypothesis $\bm{x} = \{\bm{\kappa}, \bm{g}\}$ such that 
\begin{equation*}
	\text{dist}_{\mathrm{c}}(\bm{\kappa}, \bm{\kappa}^*) > 0 \quad \text{and} \quad \mathcal{P}_{\bm{x}}(\bm{y}) \geq \mathcal{P}_{\bm{x}^*}(\bm{y}).
	\label{eq:wrong_hypo_cond}
\end{equation*}
In other words, $\bm{x}$ is a wrong hypothesis with a likelihood higher than or equal to the ground truth $\bm{x}^*$. As a result, the error probability defined in \eqref{eq:error_prob} can be written as 
\if\arXivver1
\begin{equation}
	\begin{aligned}
		\mathcal{P}_{e, \text{c}}(\psi_{\text{MLE}}) &= \mathcal{P}_{\bm{x}^*}\big( \exists \bm{x} = \{\bm{\kappa}, \bm{g}\}\!: \text{dist}_{\mathrm{c}}(\bm{\kappa}, \bm{\kappa}^*) > 0, \; \mathcal{P}_{\bm{x}}(\bm{y}) \geq  \mathcal{P}_{\bm{x}^*}(\bm{y}) \big).
	\end{aligned}
	\label{eq:total_error_prob}
\end{equation}
\else
\begin{equation}
	\begin{aligned}
		&\mathcal{P}_{e, \text{c}}(\psi_{\text{MLE}}) \\
		&= \mathcal{P}_{\bm{x}^*}\left( \exists \bm{x}\!: \text{dist}_{\mathrm{c}}(\bm{\kappa}, \bm{\kappa}^*) > 0, \; \frac{\mathcal{P}_{\bm{x}}(\bm{y})}{\mathcal{P}_{\bm{x}^*}(\bm{y})} \geq 1 \right).
	\end{aligned}
	\label{eq:total_error_prob}
\end{equation}
\fi

\textcolor{black}{To proceed, recall the definition of the two subsets $S_{12}$ and $S_{21}$ in \eqref{eq:four_subsets}, which denotes the set of nodes assigned with different labels between $\bm{\kappa}$ and $\bm{\kappa}^*$. 
Let us further define 
\begin{equation}
\Delta := |S_{12}| = |S_{21}|
\label{eq:delta_definition}
\end{equation}
as their sizes, then for any wrong hypothesis $\bm{\kappa}$ with $\text{dist}_{\mathrm{c}}(\bm{\kappa}, \bm{\kappa}^*) > 0$, it must satisfy $1 \leq \Delta \leq \frac{n}{2}-1$. When $\Delta = \frac{n}{2}$, exact recovery can be achieved by swapping the labels of $S_1$ and $S_2$. In this way, for a fixed $\Delta$, we define
\begin{align*}
	\mathcal{H}_{\bm{\kappa}, \Delta} &:= \{\bm{\kappa}\!: |S_{12}| = |S_{21}| = \Delta \},
\end{align*}
as the set of all satisfying hypotheses (of the cluster memberships), which includes $\binom{n/2}{\Delta}^2$ 
number of unique $\bm{\kappa}$, i.e., $$|\mathcal{H}_{\bm{\kappa},\Delta}| = \binom{n/2}{\Delta}^2.$$
Furthermore, let $\bm{\kappa}_\Delta \in \mathcal{H}_{\bm{\kappa}, \Delta}$ be any fixed hypothesis in $\mathcal{H}_{\bm{\kappa}, \Delta}$, the error probability in \eqref{eq:total_error_prob} can be upper bounded as}
\if\arXivver1
\begin{align}
	\mathcal{P}_{e, \text{c}}(\psi_{\text{MLE}}) &\overset{(a)}{\leq} \sum_{\Delta = 1}^{n/2 - 1} \mathcal{P}_{\bm{x}^*} \left(\exists \bm{x} = \{\bm{\kappa}, \bm{g}\}:  \bm{\kappa} \in \mathcal{H}_{\bm{\kappa}, \Delta}, \; \mathcal{P}_{\bm{x}}(\bm{y}) \geq  \mathcal{P}_{\bm{x}^*}(\bm{y}) \right)\nonumber \\
	&\overset{(b)}{\leq} \sum_{\Delta = 1}^{n/2 - 1} \sum_{\bm{\kappa} \in \mathcal{H}_{\bm{\kappa}, \Delta }} \mathcal{P}_{\bm{x}^*} \left(\exists \bm{x} = \{\bm{\kappa}, \bm{g}\}:   \mathcal{P}_{\bm{x}}(\bm{y}) \geq  \mathcal{P}_{\bm{x}^*}(\bm{y}) \right)\nonumber \\
	&\color{black}{\overset{(c)}{=} \sum_{\Delta = 1}^{n/2 - 1} \binom{n/2}{\Delta}^2  \underbrace{\mathcal{P}_{\bm{x}^*} \left(\exists \bm{x} = \{\bm{\kappa}_{\Delta}, \bm{g}\}: \mathcal{P}_{\bm{x}}(\bm{y}) \geq  \mathcal{P}_{\bm{x}^*}(\bm{y}) \right)}_{=: \mathcal{P}_{\Delta}}}\nonumber \\
	&\overset{(d)}{\leq} 2 \sum_{\Delta = 1}^{\ceil*{n/4}} \binom{n/2}{\Delta}^2 \mathcal{P}_{\Delta}.
	\label{eq:bound_error_MLE_delta}
\end{align}
\else
\begin{align}
	&\mathcal{P}_{e, \text{c}}(\psi_{\text{MLE}}) \nonumber \\
	&\overset{(a)}{\leq} \sum_{\Delta = 1}^{n/2 - 1} \mathcal{P}_{\bm{x}^*} \left(\exists \bm{x} = \{\bm{\kappa}, \bm{g}\}:  \bm{\kappa} \in \mathcal{H}_{\bm{\kappa}, \Delta}, \; \mathcal{P}_{\bm{x}}(\bm{y}) \geq  \mathcal{P}_{\bm{x}^*}(\bm{y}) \right)\nonumber \\
	&\overset{(b)}{\leq} \sum_{\Delta = 1}^{n/2 - 1} \sum_{\bm{\kappa} \in \mathcal{H}_{\bm{\kappa}, \Delta }} \mathcal{P}_{\bm{x}^*} \left(\exists \bm{x} = \{\bm{\kappa}, \bm{g}\}:   \mathcal{P}_{\bm{x}}(\bm{y}) \geq  \mathcal{P}_{\bm{x}^*}(\bm{y}) \right)\nonumber \\
	&\overset{(c)}{=} \sum_{\Delta = 1}^{n/2 - 1} {n/2 \choose \Delta}^2  \underbrace{\mathcal{P}_{\bm{x}^*} \left(\exists \bm{x} \in \mathcal{H}_{\bm{\kappa}, \Delta}: \mathcal{P}_{\bm{x}}(\bm{y}) \geq  \mathcal{P}_{\bm{x}^*}(\bm{y}) \right)}_{=: \mathcal{P}_{\Delta}}\nonumber \\
	&\overset{(d)}{\leq} 2 \sum_{\Delta = 1}^{\ceil*{n/4}} {n/2 \choose \Delta}^2 \mathcal{P}_{\Delta}.
	\label{eq:bound_error_MLE_delta}
\end{align}
\fi

Here, $(a)$ holds by applying the union bound over different $\Delta$;  $(b)$ comes by further applying this over the set $\mathcal{H}_{\bm{\kappa}, \Delta}$; $(c)$ stems from the fact that any hypothesis $\bm{\kappa} \in \mathcal{H}_{\bm{\kappa}, \Delta}$ should contribute the same amount of error probability under a fixed $\Delta$ and therefore we focus on $\bm{\kappa} = \bm{\kappa}_\Delta$ as one of them; $(d)$ uses the symmetry between $\Delta$ and $\frac{n}{2} - \Delta$ by swapping the labels in $\bm{\kappa}$.

\if\arXivver1
\subsection{Step 2: Bound on each $\mathcal{P}_\Delta$}
\else
\vspace{0.2cm}
\subsubsection{Step 2: Bound on each $\mathcal{P}_\Delta$}
\fi
Our next job is to tightly bound $\mathcal{P}_{\Delta}$ for each $1 \leq \Delta \leq \ceil*{\frac{n}{4}}$. To this end, by using the likelihood ratio obtained in \eqref{eq:likelihood_ratio} and plugging in the scaling $p = \frac{a\log n}{n}, q = \frac{b\log n}{n}$, we have
\textcolor{black}{
\if\arXivver1
\begin{align}
	\mathcal{P}_{\Delta} &= \mathcal{P}_{\bm{x}^*}\left( \exists \bm{x} = \{\bm{\kappa}_{\Delta}, \bm{g}\}: \frac{\mathcal{P}_{\bm{x}}(\bm{y})}{\mathcal{P}_{\bm{x}^*}(\bm{y})} \geq 1 \right) \nonumber \\ 
    &=  \mathcal{P}_{\bm{x}^*}\left( \exists \bm{x} = \{\bm{\kappa}_{\Delta}, \bm{g}\}: \left(\frac{Ma(1-q)}{b(1-p)}\right)^{r - r^*} \cdot 	\frac{\mathbbm{1}(g_{ij} = g_ig_j^{-1}; \; \forall (i,j) \in \mathcal{E}_{\text{inner}})}{\mathbbm{1}(g_{ij} = g_i^*(g_j^*)^{-1}; \; \forall (i,j) \in \mathcal{E}_{\text{inner}}^*)} \geq  1 \right) \nonumber \\
	&= \mathcal{P}_{\bm{x}^*}\left( \exists \bm{x} = \{\bm{\kappa}_{\Delta}, \bm{g}\}: \left(\frac{Ma(1-q)}{b(1-p)}\right)^{r - r^*} \cdot 	\mathbbm{1}(g_{ij} = g_ig_j^{-1}; \; \forall (i,j) \in \mathcal{E}_{\text{inner}}) \geq 1 \right),  
	\label{eq:p_delta_bound}
\end{align}
\else
\begin{align}
	\mathcal{P}_{\Delta} &= \mathcal{P}_{\bm{x}^*}\left( \exists \bm{x} \in \mathcal{H}_{\bm{\kappa}, \Delta}: \frac{\mathcal{P}_{\bm{x}}(\bm{y})}{\mathcal{P}_{\bm{x}^*}(\bm{y})} \geq 1 \right) \nonumber \\ 
	&\leq  \mathcal{P}_{\bm{x}^*}\Bigg( \exists \bm{x} \in \mathcal{H}_{\bm{\kappa}, \Delta}: \left(\frac{Ma(1-q)}{b(1-p)}\right)^{r - r^*} \nonumber  \\
	&\quad\cdot 	\frac{\mathbbm{1}\left(g_{ij} = g_ig_j^{-1}; \; \forall (i,j) \in \mathcal{E}_{\text{inner}}\right)}{\mathbbm{1}\left(g_{ij} = g_i^*(g_j^*)^{-1}; \; \forall (i,j) \in \mathcal{E}_{\text{inner}}^*\right)} >  1 \Bigg) \nonumber \\
	&= \mathcal{P}_{\bm{x}^*}\Bigg( \exists \bm{x} \in \mathcal{H}_{\bm{\kappa}, \Delta}: \left(\frac{Ma(1-q)}{b(1-p)}\right)^{r - r^*} \nonumber \\
	&\quad \cdot 	\mathbbm{1}\left(g_{ij} = g_ig_j^{-1}; \; \forall (i,j) \in \mathcal{E}_{\text{inner}}\right) > 1 \Bigg), 
	\label{eq:p_delta_bound}
\end{align}
\fi
where the last step uses the fact that $$\mathcal{P}_{\bm{x}^*}\left(g_{ij} = g_i^*(g_j^*)^{-1}; \; \forall (i,j) \in \mathcal{E}^*_{\text{inner}}\right) = 1, $$
based on our model setting that any group transformation $g_{ij}$ is noiseless when $(i,j) \in \mathcal{E}^*_{\text{inner}}$.
}

\textcolor{black}{
To move forward, \eqref{eq:p_delta_bound} motivates us to define the following two events in the probability space:
\begin{equation}
	\begin{aligned}
	E_{1, \bm{x}}&:\quad  \left(\frac{Ma(1-q)}{b(1-p)}\right)^{r - r^*} \geq  1, \quad \text{for } \bm{x}, \\
	E_{2, \bm{x}}&:\quad g_{ij} = g_ig_j^{-1}, \;  \forall (i,j) \in \mathcal{E}_{\text{inner}}, \quad \text{for } \bm{x}.
\end{aligned}
\label{eq:event_1_2_def}
\end{equation}
In this way,  \eqref{eq:p_delta_bound} essentially measures the probability of having any $\bm{x} = \{\bm{\kappa}_{\Delta}, \bm{g}\}$ that makes the joint event $E_{1, \bm{x}} \cap E_{2, \bm{x}}$ happen. Therefore, \eqref{eq:p_delta_bound} can be rewritten as 
\begin{equation}
    \mathcal{P}_{\Delta} = \mathcal{P}_{\bm{x}^*}\left(\bigcup\limits_{\bm{x} : \bm{\kappa} = \bm{\kappa}_{\Delta}} \left(E_{1, \bm{x}} \bigcap E_{2, \bm{x}} \right)\right).
    \label{eq:p_delta_bound_step2}
\end{equation}
Furthermore, notice that for any $\bm{x} = \{\bm{\kappa}_{\Delta}, \bm{g}\}$ where the cluster membership $\bm{\kappa} = \bm{\kappa}_{\Delta}$ is fixed, the event $E_{1, \bm{x}}$ is actually invariant to the choice of $\bm{g}$ as it only involves edge connections (recall $r$ and $r^*$ defined in \eqref{eq:def_r_in_out}). Therefore, we can define 
\begin{equation*}
    E_1 :\quad  \left(\frac{Ma(1-q)}{b(1-p)}\right)^{r - r^*} \geq  1, \quad \text{for } \bm{\kappa} = \bm{\kappa}_\Delta, 
\end{equation*}
that does not depend on $\bm{x}$ and further express $\mathcal{P}_{\Delta}$ as 
\if\arXivver1
\begin{align}
	\mathcal{P}_{\Delta} &= \mathcal{P}_{\bm{x}^*}\left( \bigcup\limits_{\bm{x} : \bm{\kappa} = \bm{\kappa}_{\Delta}} \left(E_{1, \bm{x}} \bigcap E_{2, \bm{x}} \right)\right) = \mathcal{P}_{\bm{x}^*}\left( \bigcup\limits_{\bm{x} : \bm{\kappa} = \bm{\kappa}_{\Delta}} \left(E_{1} \bigcap E_{2, \bm{x}} \right)\right) = \mathcal{P}_{\bm{x}^*}\left( E_{1} \bigcap  \underbrace{\left(\bigcup\limits_{\bm{x} : \bm{\kappa} = \bm{\kappa}_{\Delta}} E_{2, \bm{x}} \right)}_{=: E_2}\right) \nonumber \\
	&=  \mathcal{P}_{\bm{x}^*}\left( E_1 \bigcap  E_2\right) , 
	\label{eq:p_delta_bound_2}
\end{align}
\else
\begin{align}
	\mathcal{P}_{\Delta} &\leq \mathcal{P}_{\bm{x}^*}\left( \bigcup\limits_{\bm{x} \in \mathcal{H}_{\bm{\kappa}, \Delta}} \left(E_{1} \bigcap E_{\bm{x},2} \right)\right) \nonumber \\
	&= \mathcal{P}_{\bm{x}^*}\left( E_1 \bigcap  \underbrace{\left(\bigcup\limits_{\bm{x} \in \mathcal{H}_{\bm{\kappa}, \Delta}} E_{\bm{x},2} \right)}_{=: E_2}\right) \nonumber \\
	&=  \mathcal{P}_{\bm{x}^*}\left( E_1 \bigcap  E_2\right) , 
	\label{eq:p_delta_bound_2}
\end{align}
\fi
where we denote $E_2$ as the union of $E_{2, \bm{x}}$.}

To proceed, recall the definitions of $r$ and $r^*$ in \eqref{eq:def_r_in_out}, we further split $r$ into $r = r_1 + r_2$ such that   
\begin{equation}
	r_1 := |\mathcal{E}(S_{11}, S_{12})|, \quad r_2 := |\mathcal{E}(S_{21}, S_{22})|
\label{eq:r_1_r_2_definition}
\end{equation}
Under the ground truth $\bm{x}^*$, the three random variables $r_1$, $r_2$, and $r^*$ are binomial and independent as they involve mutually disjoint sets of edges (one can refer to Figure~\ref{fig:four_subsets} for visualization). 
Therefore, by conditioning on $r_1$, $r_2$, and $r^*$, \eqref{eq:p_delta_bound_2} satisfies
\if\arXivver1
\begin{align}
	\mathcal{P}_{\Delta} &= \sum_{r_1}\sum_{r_2}\sum_{r^*} \mathcal{P}_{\bm{x}^*}(r_1)\mathcal{P}_{\bm{x}^*}(r_2)\mathcal{P}_{\bm{x}^*}(r^*) \mathcal{P}_{\bm{x}^*}\left(E_1 \bigcap E_2 \mid r_1, r_2, r^* \right) \nonumber \\ 
	&=  \sum_{r_1}\sum_{r_2}\sum_{r^*} \mathcal{P}_{\bm{x}^*}(r_1)\mathcal{P}_{\bm{x}^*}(r_2)\mathcal{P}_{\bm{x}^*}(r^*) \mathcal{P}_{\bm{x}^*} \left( \left(\left(\frac{Ma(1-q)}{b(1-p)}\right)^{r_1 + r_2 - r^*} > 1\right)  \bigcap E_2 \mid r_1, r_2, r^* \right) \nonumber \\
	&=  \sum_{r_1}\sum_{r_2}\sum_{r^*} \mathcal{P}_{\bm{x}^*}(r_1)\mathcal{P}_{\bm{x}^*}(r_2)\mathcal{P}_{\bm{x}^*}(r^*) \cdot  \mathbbm{1}\left(\left(\frac{Ma(1-q)}{b(1-p)}\right)^{r_1 + r_2 - r^*} > 1 \right) \cdot \mathcal{P}_{\bm{x}^*}\left(E_2 \mid r_1, r_2, r^*\right) ,
	\label{eq:p_delta_bound_3}
\end{align}
\else
\begin{align}
	\mathcal{P}_{\Delta} \nonumber 
	&\leq \sum_{r_1}\sum_{r_2}\sum_{r^*} \bigg[\mathcal{P}_{\bm{x}^*}(r_1)\mathcal{P}_{\bm{x}^*}(r_2)\mathcal{P}_{\bm{x}^*}(r^*) \nonumber \\ 
	&\quad \cdot \mathcal{P}_{\bm{x}^*}\left(E_1 \bigcap E_2 \mid r_1, r_2, r^* \right)\bigg] \nonumber \\ 
	&=  \sum_{r_1}\sum_{r_2}\sum_{r^*} \Bigg[\mathcal{P}_{\bm{x}^*}(r_1)\mathcal{P}_{\bm{x}^*}(r_2)\mathcal{P}_{\bm{x}^*}(r^*) \nonumber \\
	&\quad \cdot \mathcal{P}_{\bm{x}^*} \left( \left(\left(\frac{Ma(1-q)}{b(1-p)}\right)^{r_1 + r_2 - r^*} > 1\right)  \bigcap E_2 \mid r_1, r_2, r^* \right) \Bigg] \nonumber \\
	&=  \sum_{r_1}\sum_{r_2}\sum_{r^*} \Bigg[\mathcal{P}_{\bm{x}^*}(r_1)\mathcal{P}_{\bm{x}^*}(r_2)\mathcal{P}_{\bm{x}^*}(r^*) \nonumber \\
	&\quad \cdot  \mathbbm{1}\left(\left(\frac{Ma(1-q)}{b(1-p)}\right)^{r_1 + r_2 - r^*} > 1 \right) \cdot \mathcal{P}_{\bm{x}^*}\left(E_2 \mid r_1, r_2, r^*\right) \Bigg], 
	\label{eq:p_delta_bound_3}
\end{align}
\fi
where the last step comes from the fact that given $r_1$, $r_2$, and $r^*$, the probability $\mathcal{P}_{\bm{x}^*} (E_1 \mid r_1, r_2, r^* )$ is either 0 or 1.  Therefore, we represent it as an indicator function.  Furthermore, by introducing an extra factor $\alpha > 0$, one can see that it satisfies 
\begin{align*}
	\mathbbm{1}\left(\left(\frac{Ma(1-q)}{b(1-p)}\right)^{r_1 + r_2 - r^*} > 1 \right)  &= \mathbbm{1}\left(\left(\frac{Ma(1-q)}{b(1-p)}\right)^{\alpha(r_1 + r_2 - r^*)} > 1 \right)  \\
	&\leq  \left(\frac{Ma(1-q)}{b(1-p)}\right)^{\alpha(r_1 + r_2 - r^*)}. 
\end{align*}
Plugging this into \eqref{eq:p_delta_bound_3} yields
\if\arXivver1
\begin{equation}
	\mathcal{P}_{\Delta} \leq \min_{\alpha > 0} \left(  \sum_{r_1}\sum_{r_2}\sum_{r^*} \mathcal{P}_{\bm{x}^*}(r_1)\mathcal{P}_{\bm{x}^*}(r_2)\mathcal{P}_{\bm{x}^*}(r^*) \cdot  \left(\frac{Ma(1-q)}{b(1-p)}\right)^{\alpha(r_1 + r_2 - r^*)} \cdot \mathcal{P}_{\bm{x}^*}\left(E_2 \mid r_1, r_2,  r^*\right) \right).
	\label{eq:bound_P_Delta_middle}
\end{equation}
\else
\begin{equation}
	\begin{aligned}
		\mathcal{P}_{\Delta} &\leq \min_{\alpha > 0} \Bigg[  \sum_{r_1}\sum_{r_2}\sum_{r^*} \Bigg(\mathcal{P}_{\bm{x}^*}(r_1)\mathcal{P}_{\bm{x}^*}(r_2)\mathcal{P}_{\bm{x}^*}(r^*) \\
		&\quad \cdot \left(\frac{Ma(1-q)}{b(1-p)}\right)^{\alpha(r_1 + r_2 - r^*)} \cdot \mathcal{P}_{\bm{x}^*}\left(E_2 \mid r_1, r_2,  r^*\right)\Bigg) \Bigg].
	\end{aligned}
	\label{eq:bound_P_Delta_middle}
\end{equation}
\fi
Notably, $\alpha$ is %
 crucial for obtaining the sharp result,  %
as we will minimize over all $\alpha > 0$ later on.

\if\arXivver1
\subsection{Step 3: Cycle consistency analysis}
\else
\vspace{0.2cm}
\subsubsection{Step 3: Cycle analysis}
\fi
\label{sec:prood_upper_bound_clus_cycle_analysis}
Now, we need to upper bound $\mathcal{P}_{\bm{x}^*}\left(E_2 \mid r_1, r_2, r^*\right)$.  Importantly, a naive way by applying the union bound over all possible hypotheses $\bm{x} = \{\bm{\kappa}_\Delta, \bm{g}\}$ such as
\begin{align*}
	\mathcal{P}_{\bm{x}^*}\left(E_2 \mid r_1, r_2, r^*\right) = \mathcal{P}_{\bm{x}^*}\left(\bigcup\limits_{\bm{x} : \bm{\kappa} = \bm{\kappa}_{\Delta}} E_{2, \bm{x}} \mid r_1, r_2, r^*\right) \leq \sum_{\bm{x} : \bm{\kappa} = \bm{\kappa}_{\Delta}} \mathcal{P}_{\bm{x}^*}\left(E_{2, \bm{x}}  \mid r_1,  r_2, r^* \right)
\end{align*}
fails to give a sharp result, since the summation over $\bm{x} : \bm{\kappa} = \bm{\kappa}_{\Delta}$ includes $M^n$ terms %
as each node has $M$ choices of its group elements. As a result, the upper bound increases exponentially as $n$ becomes large. 

Instead, we have to bound $\mathcal{P}_{\bm{x}^*}\left(E_2 \mid r_1, r_2, r^*\right)$ directly as a whole. Let us restate the event $E_2$ defined in \eqref{eq:p_delta_bound_2} as 
\begin{equation}
	E_2: \quad \exists \bm{x} = \{\bm{\kappa}_{\Delta}, \bm{g}\}: \; g_{ij} = g_ig_j^{-1}, \; \forall (i,j) \in \mathcal{E}_{\text{inner}}.
	\label{eq:E_2_def}
\end{equation}
Here, recall $\mathcal{E}_{\text{inner}}$ denotes the set of edges within the communities $S_1$ and $S_2$, i.e., $\mathcal{E}_{\text{inner}} = \mathcal{E}(S_1) \cup \mathcal{E}(S_2)$, then we can consider $S_1$ and $S_2$ separately by further defining the following two events:
\begin{equation}
\begin{aligned}
E_{2,S_1}: &\quad \exists \bm{x} = \{\bm{\kappa}_{\Delta}, \bm{g}\}: g_{ij} = g_ig_j^{-1}, \; \forall (i,j) \in \mathcal{E}(S_1), \\
E_{2,S_2}: &\quad \exists \bm{x} = \{\bm{\kappa}_{\Delta}, \bm{g}\}: g_{ij} = g_ig_j^{-1}, \; \forall (i,j) \in \mathcal{E}(S_2), 
\end{aligned}
\label{eq:def_E_2_S_12}
\end{equation}
and one can see that $E_2 = E_{2, S_1} \cap E_{2, S_2}$. In this way, we have  
	\if\arXivver1
\begin{align}
	\mathcal{P}_{\bm{x}^*}\left(E_2 \mid r_1, r_2, r^*\right) &\overset{(a)}{=} \mathcal{P}_{\bm{x}^*}\left(E_2 \mid r_1, r_2 \right) = \mathcal{P}_{\bm{x}^*}\left(E_{2, S_1} \cap E_{2, S_2} \mid r_1, r_2\right) \nonumber \\
	&\overset{(b)}{=}  \mathcal{P}_{\bm{x}^*}\left(E_{2, S_1}\mid r_1\right) \cdot \mathcal{P}_{\bm{x}^*}\left(E_{2, S_2}\mid r_2\right) 
	\label{eq:bound_E_2_r_1_2}
\end{align}
\else
\begin{align}
	\mathcal{P}_{\bm{x}^*}\left(E_2 \mid r_1, r_2, r^*\right) &\overset{(a)}{=} \mathcal{P}_{\bm{x}^*}\left(E_2 \mid r_1, r_2 \right) \nonumber \\
	&= \mathcal{P}_{\bm{x}^*}\left(E_{2, S_1} \cap E_{2, S_2} \mid r_1, r_2\right) \nonumber \\
	&\overset{(b)}{=}  \mathcal{P}_{\bm{x}^*}\left(E_{2, S_1}\mid r_1\right) \cdot \mathcal{P}_{\bm{x}^*}\left(E_{2, S_2}\mid r_2\right) 
	\label{eq:bound_E_2_r_1_2}
\end{align}
\fi
where $(a)$ holds since $E_2$ does not involve $r^*$; $(b)$ comes from the fact that $E_{2, S_1}$ and 
$E_{2, S_2}$ are independent given $r_1$ and $r_2$. As a result, because of the symmetry, we can focus on bounding $\mathcal{P}_{\bm{x}^*}\left(E_{2, S_1}\mid r_1\right)$ and the similar result applies to $\mathcal{P}_{\bm{x}^*}\left(E_{2, S_2}\mid r_2\right)$.

\begin{figure*}[t!]
	\vspace{-50pt}
	\includegraphics[width = 1.0\textwidth]{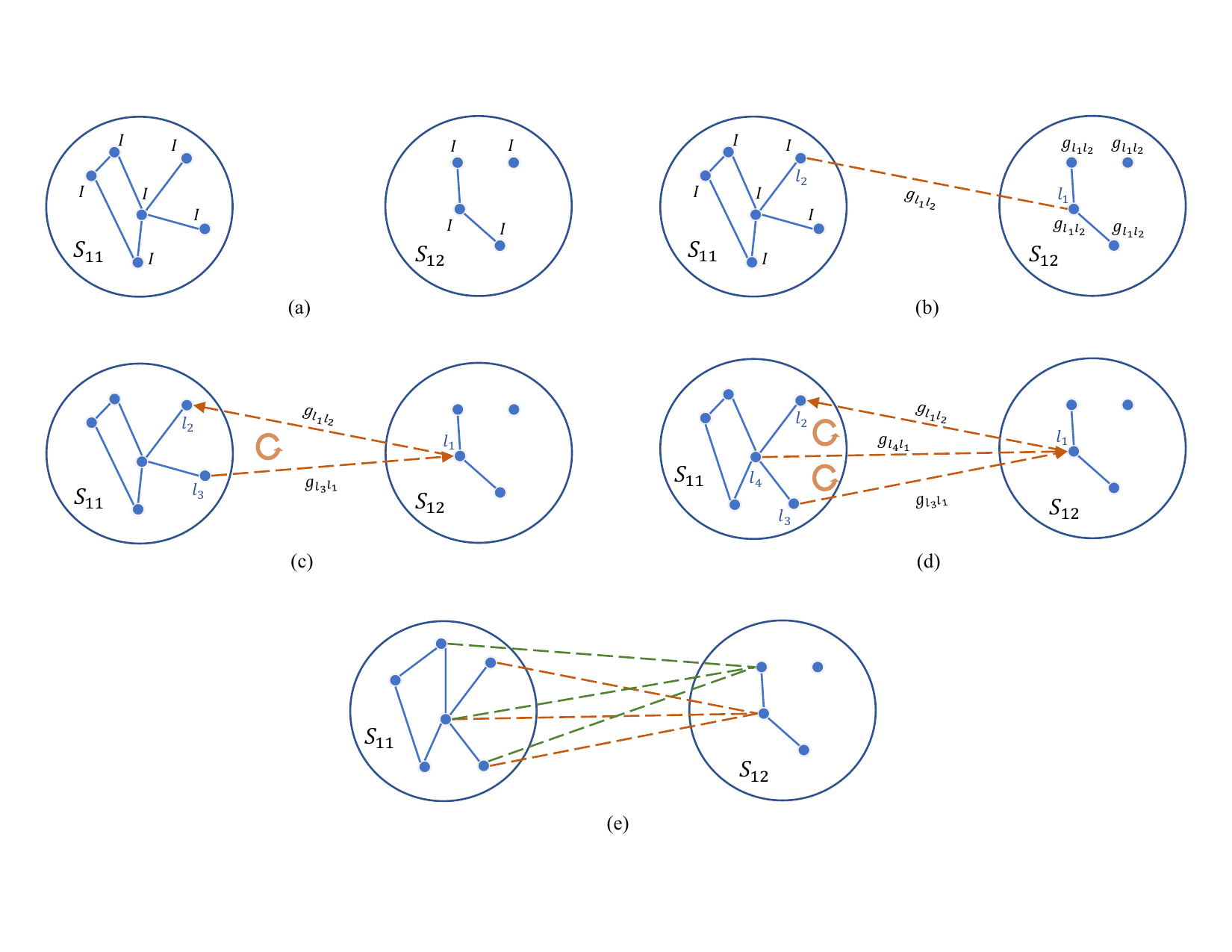}
	\vspace{-50pt}
	\caption{ %
     Some illustrations on $E_{2, S_1}$. We consider five different scenarios: (a) and (b): no edge or only one edge connected between $S_{11}$ and $S_{12}$, then $E_{2, S_1}$ occurs with probability 1; (c): assuming the subgraph formed by $S_{11}$ is connected, and two edges connected to the same node in $S_{12}$ form a cycle, then $E_{2, S_1}$ occurs when the cycle consistency is satisfied with probability $M^{-1}$; (d) and (e): more edges between $S_{11}$ and $S_{12}$ leads to more cycles and further lower probability of $E_{2, S_1}$. See text for details.
	}
	\label{fig:E_2_illustration}
\end{figure*}

To this end, by definition, $E_{2, S_1}$ occurs as long as there exists a hypothesis of the group elements $\bm{g}$ that satisfies the consistency $g_{ij} = g_ig_j^{-1}$ on each edge in $S_1$. However, determining the conditional probability $\mathcal{P}_{\bm{x}^*}\left(E_{2, S_1}\mid r_1\right)$ appears to be challenging as it involves two sources of randomness: 
\begin{enumerate}
	\vspace{0.2cm}
	\item For any pair of node $(i,j)$ within $S_{11}$ or $S_{12}$, they are connected with probability $p = \frac{a \log n}{n}$, and the group transformation observed is {\color{black} $g_{ij} = g_i^*(g_j^*)^{-1} = I$} (recall the assumption \eqref{eq:assumption_theta} that $g_i^* = I, \; i = 1, \ldots, n$) with probability 1. 
	\vspace{0.2cm}
	\item For any pair of nodes $(i,j)$ between $S_{11}$ and  $S_{12}$, the edges are randomly assigned under the constraint that the total number of edges between $S_{11}$ and  $S_{12}$ is $r_1$ (recall the definition of $r_1$ in \eqref{eq:r_1_r_2_definition}), and $g_{ij}$ is uniformly drawn from $\mathcal{G}_M$. 
	\vspace{0.2cm}
\end{enumerate}
Here, our analysis on $\mathcal{P}_{\bm{x}^*}\left(E_{2, S_1}\mid r_1\right)$ starts from several simple examples that correspond to different choices of $r_1$, as shown in Figure~\ref{fig:E_2_illustration} that, which are elaborated as follows:

\begin{itemize}
	\vspace{0.2cm}
	\item [(a)] We start from Figure~\ref{fig:E_2_illustration}(a) with $r_1 = 0$ such that no edges connect between $S_{11}$ and $S_{12}$. In this case, $g_{ij}$ on all edges within $S_{11}$ and $S_{12}$ satisfy $
		g_{ij} = g_i^*(g_j^*)^{-1} = I$.  
	As a result, no matter how the nodes within $S_{11}$ or $S_{12}$ are connected, we have $$\mathcal{P}_{\bm{x}^*}\left(E_{2, S_1}|r_1 = 0\right)= 1, $$ as a hypothesis $g_i = I, \; i = 1, \ldots, n$ always satisfies the consistency.  
	
	\vspace{0.2cm}
	\item [(b)] Now, suppose $r_1 = 1$ where there is an edge $(l_1,l_2)$ connecting $l_1 \in S_{12}$ and $l_2 \in S_{11}$. In this case, the group transformation $g_{l_1l_2}$ is uniformly drawn from $\mathcal{G}_M$. However, for any realization of $g_{l_1l_2}$, there always exists a hypothesis $\bm{g}$ satisfy the consistency by taking $$g_{i} = I, \; \forall i \in S_{11} \quad \text{and} \quad g_{i} = g_{l_1l_2}, \; \forall i \in S_{12}, $$ as shown in  Figure~\ref{fig:E_2_illustration}(b). Therefore, we still have $$\mathcal{P}_{\bm{x}^*}\left(E_{2, S_1}|r_1 = 1\right)= 1.$$ 
	
	\vspace{0.2cm}
	\item [(c)] Next, we consider when $r_1 = 2$ and further assume
	\begin{enumerate}[leftmargin=15pt]
		\vspace{0.2cm}
		\item The subgraph formed by $S_{11}$, denoted as $G(S_{11})$, is connected.
		\vspace{0.2cm}
		\item The $r_1 = 2$ edges between $S_{11}$ and $S_{12}$ connect the same node $l_1 \in S_{12}$.
		\vspace{0.1cm}
	\end{enumerate}
	Let us denote this event as $E_c$. As a result, let $l_2$ and $l_3 \in S_{11}$ denote the two nodes that connect with $l_1$. Then, $l_2$ and $l_3$ are  connected by some path in $G(S_{11})$ due to the assumption that $G(S_{11})$ is connected. This path, together with the two edges $(l_1, l_2)$ and $(l_1, l_3)$, form a \textit{cycle} across $S_{11}$ and $S_{12}$:
	\begin{equation}
		\begin{aligned}
					l_1 \; \xrightarrow{g_{l_1l_2}} \; l_2 \; \xrightarrow{{\color{white} g_{l_1l_2}}} \; \cdots \;   \xrightarrow{{\color{white} g_{l_1l_2}}} \; l_3 \; \xrightarrow{g_{l_3l_1}} \; l_1, 
		\end{aligned}
	\label{eq:def_E_c}
	\end{equation}
	which is shown in Figure~\ref{fig:E_2_illustration}(c).
	In this case, one can notice that $E_2$ occurs only if the cycle consistency is met, %
    i.e., 
	\begin{equation}
	g_{l_1l_2} \cdots g_{l_3l_1} = I, 
	\label{eq:cycle_consistency_proof}
	\end{equation}
	which is so-called the \textit{cycle consistency}. However, since  $g_{l_1l_2}$ and $g_{l_3l_1}$ are both randomly drawn from $\mathcal{G}_M$, no matter what the hypothesis $\bm{g}$ is, the probability that \eqref{eq:cycle_consistency_proof} holds is only $M^{-1}$, Hence, we have $$\mathcal{P}_{\bm{x}^*}\left(E_{2, S_1}|E_c, r_1 = 2\right) = M^{-1},$$
	given that $E_c$ occurs.
	
	\vspace{0.2cm}
	\item [(d)] Now, we consider $r_1 = 3$ and assume an event $E_d$ similar to $E_c$ above occurs, i.e., $G(S_{11})$ is connected and all the three edges connect the same node $l_1 \in S_{22}$. We denote the new node connect to $l_1$ as $l_4 \in S_{11}$. As a result, this gives rise another cycle  
	\begin{equation*}
		l_1 \; \xrightarrow{g_{l_1l_4}} \; l_4 \; \xrightarrow{{\color{white} g_{l_1l_2}}} \; \cdots \;   \xrightarrow{{\color{white} g_{l_1l_2}}} \; l_3 \; \xrightarrow{g_{l_3l_1}} \; l_1, 
	\end{equation*}
	as shown in Figure~\ref{fig:E_2_illustration}(d). Again, by applying the same argument as in part $(c)$, one can see that the new cycle satisfies the cycle consistency with probability $M^{-1}$ and is independent with the old one. Therefore, we have $$\mathcal{P}_{\bm{x}^*}\left(E_{2, S_1}|E_{d}, r_1 = 3\right) = M^{-2}, $$
	given that $E_d$ occurs.
	
	\vspace{0.2cm}
	\item [(e)] Lastly, we consider a more involved example when two nodes in $S_{12}$ connect with $S_{11}$, and the subgraph $G(S_{11})$ is still connected, as shown in Figure~\ref{fig:E_2_illustration}(e). We denote such an event as $E_e$. Here, each of the two nodes has three edge connections, and therefore there are at least four cycles. In this case, we have 
    $$\mathcal{P}_{\bm{x}^*}\left(E_{2, S_1}|E_e, r_1 = 6\right) \leq M^{-4}.$$
    Notably, here we use `$\leq$' instead of `$=$' since we have ignored the edge connections in $S_{12}$, and there will be more cycles exist by considering those edges. However, such an upper bound is sufficient to our analysis.	
\vspace{0.2cm}
\end{itemize}

Now, based on the previous discussions, our analysis in Figure~\ref{fig:E_2_illustration} can be generalized to the following result

\begin{lemma}
Let $E_{\mathrm{con}}$ denotes the event that the subgraph $G(S_{11})$ is connected. Then, 
\begin{equation}
	\mathcal{P}_{\bm{x}^*}\left(E_{2, S_1}\mid r_1, E_{\mathrm{con}}\right) \leq \min \left\{1, \; M^{\Delta - r_1}\right\}, 
	\label{eq:bound_E_2_S_1_connect}
\end{equation} 
where $\Delta := |S_{12}| = |S_{21}|$ is defined in \eqref{eq:delta_definition}. 
\label{lemma:bound_E_2_S_1_connect}
\end{lemma}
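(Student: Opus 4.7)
The plan is to reduce $E_{2,S_1}$ to a cycle-counting problem on an auxiliary multigraph, then bound the probability of all induced consistency constraints by a short independence argument. Since $E_{2,S_1}$ only asks for the \emph{existence} of a hypothesis $\bm g$, a global group shift $\bm g \mapsto \bm g \bar g$ preserves all consistency relations, so we may fix $g_{i_0} = g_0$ for any single reference node $i_0 \in S_{11}$. Under assumption~\eqref{eq:assumption_theta}, every edge \emph{inside} the true cluster $S_1^*$ or $S_2^*$ carries $g_{ij} = g_0$, so the consistency equation $g_{ij} = g_i g_j^{-1}$ on such edges is equivalent to $g_i = g_j$. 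Since $G(S_{11})$ is connected under $E_{\mathrm{con}}$, this forces $g_i = g_0$ for every $i \in S_{11}$; likewise, $g_l$ must be constant on each connected component of $G(S_{12})$.

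Next I would collapse the graph to expose the structure. Let $C$ denote the number of connected components of $G(S_{12})$; trivially $C \leq |S_{12}| = \Delta$. Collapse $S_{11}$ to a single vertex $v_0$ and each component of $G(S_{12})$ to a vertex, producing an auxiliary multigraph $\widetilde G$ whose $r_1$ edges are precisely the cross edges in $\mathcal{E}(S_{11}, S_{12})$, all incident to $v_0$. Under this reduction, $E_{2,S_1}$ is the event that one can assign a group element to each super-node of $\widetilde G$ so that every cross edge $(i,l)$ satisfies $g_l = g_{il}^{-1}$. Now pick any spanning forest $F$ of $\widetilde G$ with $T$ edges; because every edge touches $v_0$, we have $T \leq C \leq \Delta$. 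The forest edges uniquely determine the values $g_l$ on the super-nodes they reach, values on unreached super-nodes may be set arbitrarily, and $E_{2,S_1}$ holds iff each of the $r_1 - T$ non-forest cross edges satisfies its induced cycle-consistency equation.

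Finally I would verify the independent factor-of-$M^{-1}$ contribution per cycle. Under $\mathcal{P}_{\bm x^*}$, the cross-edge transformations $\{g_{ij} : (i,j) \in \mathcal{E}(S_{11}, S_{12})\}$ are i.i.d.\ uniform on $\mathcal{G}_M$. Condition on the topology of $\widetilde G$ (which is a function of the edge placement only) and on the values of the forest edges. Each non-forest edge closes exactly one cycle with $F$, producing an equation of the form (fixed product of forest $g_{ij}$'s) $\cdot$ (non-forest $g_{ij}$) $= g_0$; since the non-forest transformations are mutually independent uniform and independent of the forest transformations, each of the $r_1 - T$ equations holds independently with probability $M^{-1}$. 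Hence
\[
\mathcal{P}_{\bm x^*}\!\left(E_{2,S_1} \mid r_1, E_{\mathrm{con}}\right) \leq M^{-(r_1 - T)} \leq M^{-(r_1 - C)} \leq M^{-(r_1 - \Delta)} = M^{\Delta - r_1},
\]
and combining with the trivial bound $1$ yields $\min\{1, M^{\Delta - r_1}\}$. The main subtlety is the independence claim in the last step; it is handled cleanly by first conditioning on the graph topology (so $F$ and all cycles are deterministic) and exploiting that the cross-edge transformations are i.i.d.\ under $\mathcal{P}_{\bm x^*}$, so the non-forest $g_{ij}$'s are uniform and independent of each other and of the forest $g_{ij}$'s.
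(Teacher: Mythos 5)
Your proof is correct and rests on the same core argument as the paper's: conditional on $G(S_{11})$ being connected, every cross edge beyond a spanning structure closes a cycle whose consistency constraint is met with probability $M^{-1}$, and these constraints are independent because the cross-edge transformations are i.i.d.\ uniform on $\mathcal{G}_M$ and independent of the (deterministic, identity-labelled) within-cluster edges. The only difference is bookkeeping: the paper counts $\max\{0,\,r_{1,i}-1\}$ cycles separately for each node $i \in S_{12}$ and simply ignores the edges inside $S_{12}$, whereas your collapse-to-a-multigraph/spanning-forest formulation groups cross edges by components of $G(S_{12})$ and yields the slightly sharper exact value $M^{-(r_1-T)}$ with $T \leq \Delta$, which you then relax to the same bound $\min\{1,\,M^{\Delta-r_1}\}$.
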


\begin{proof}
For each node $i \in S_{12}$, let $r_{1, i}$ denotes the number of edges it connects to $S_{11}$. Importantly, since $G(S_{11})$ is assumed to be connected, there are always at least $\max\{0, \; r_{1, i}-1\}$ number of cycles formed by $i$ like \eqref{eq:def_E_c}. Therefore, let 
$E_{\text{cycle}, i}$ denotes the event that all of them satisfy the cycle consistency such as \eqref{eq:cycle_consistency_proof}. Then,
	\if\arXivver1
\begin{align*}
	\mathcal{P}_{\bm{x}^*}\left(E_{2, S_1}\mid r_1, E_{\mathrm{con}}\right)  &\leq \mathcal{P}_{\bm{x}^*}\left(\bigcap_{i \in S_{12}} E_{\text{cycle}, i} \mid r_1, E_{\mathrm{con}}\right) \overset{(a)}{=} \prod_{i \in S_{12}} \mathcal{P}_{\bm{x}^*}\left(E_{\text{cycle}, i} \mid r_1, E_{\mathrm{con}}\right)\\
	&= \prod_{i \in S_{12}} M^{-\max\{0, \; r_{1, i}-1\}} = \prod_{i \in S_{12}} \min \left\{1, \; M^{1 - r_{1,i} }\right\} \\
	&\leq \min \left\{1, \; \prod_{i \in S_{12}} M^{1 - r_{1, i}}\right\} = \min\left\{1, \; M^{\sum_{i \in S_{12}} \left(r_{1,i}-1\right)}\right\}\\
	&\overset{(b)}{=} \min \left\{1, \; M^{\Delta - r_1}\right\}, 
\end{align*}
\else
\begin{align*}
	&\mathcal{P}_{\bm{x}^*}\left(E_{2, S_1}\mid r_1, E_{\mathrm{con}}\right)  \\
	&\leq \mathcal{P}\left(\bigcap_{i \in S_{12}} E_{\text{cycle}, i}\right) \overset{(a)}{=} \prod_{i \in S_{12}} \mathcal{P}\left(E_{\text{cycle}, i}\right)\\
	&= \prod_{i i \in S_{12}} M^{-\max\{0, \; r_{1, i}-1\}} = \prod_{i \in S_{12}} \min \left\{1, \; M^{1 - r_{1,i} }\right\} \\
	&\leq \min \left\{1, \; \prod_{i \in S_{12}} M^{1 - r_{1, i}}\right\} = \min\left\{1, \; M^{\sum_{i \in S_{12}} \left(r_{1,i}-1\right)}\right\}\\
	&\overset{(b)}{=} \min \left\{1, \; M^{\Delta - r_1}\right\}
\end{align*}
\fi
where $(a)$ uses the fact that the set of events $\{E_{\text{cycle}, i}\}_{i \in S_{12}}$ are mutually independent as they involve disjoint sets of edges; $(b)$ holds since $\sum_{i \in S_{12}} r_{1, i} = r_1$.
\end{proof}

However, Lemma~\ref{lemma:bound_E_2_S_1_connect} is still one step away from the desired bound as it requires $G(S_{11})$ to be connected. Fortunately, as $G(S_{11})$ follows an Erdős–Rényi model, when $p = \Theta\left(\frac{\log n}{n}\right)$, we can show that at least a giant component of size $(1-o(1)) \cdot |S_{11}|$ exists in $G(S_{11})$ w.h.p.~(see Theorem~\ref{the:bound_Z_n_3}). In this way, an upper bound similar to \eqref{eq:bound_E_2_S_1_connect} can be obtained as follows:

\begin{lemma}
Given $p = \Theta\left(\frac{\log n}{n}\right)$, it satisfies
	\if\arXivver1
\begin{equation}
	\mathcal{P}_{\bm{x}^*}\left(E_{2, S_1}\mid r_1\right) \leq n^{-\Theta(\log n)} + \min\left\{1, M^{\Delta - (1-o(1))r_1}\right\}, 
\end{equation}
\else
\begin{equation}
	\begin{aligned}
		\mathcal{P}_{\bm{x}^*}\left(E_{2, S_1}\mid r_1\right) &\leq n^{-\Theta(\log n)} \\
		&\quad+ \min\left\{1, M^{\Delta - (1-o(1))r_1}\right\}, 
	\end{aligned}
\end{equation}
\fi
which holds for $\Delta \leq \ceil*{n/4}$. 
\label{lemma:bound_E_2_S_1}
\end{lemma}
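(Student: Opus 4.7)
The plan is to drop the connectedness assumption of Lemma~\ref{lemma:bound_E_2_S_1_connect} by working with a giant component of $G(S_{11})$ in place of the whole subgraph. Let $\widetilde{S}_{11}$ denote the largest connected component of $G(S_{11})$ and let $\mathcal{A}$ be the event that $|S_{11}\setminus\widetilde{S}_{11}|\leq |S_{11}|/\log\log n$. Because internal and cross edges are independent, $\mathcal{A}$ is independent of $r_1$, and the sharp giant-component estimate \eqref{eq:large_component_main}---equivalently Theorem~\ref{the:bound_Z_n_3} in the appendix---gives $\mathcal{P}_{\bm{x}^*}(\mathcal{A}^c)\leq n^{-\Theta(\log n)}$. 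The decomposition
\begin{equation*}
\mathcal{P}_{\bm{x}^*}(E_{2,S_1}\mid r_1)\leq \mathcal{P}_{\bm{x}^*}(\mathcal{A}^c)+\mathcal{P}_{\bm{x}^*}(E_{2,S_1}\mid r_1,\mathcal{A})
\end{equation*}
absorbs the $n^{-\Theta(\log n)}$ term, and it remains to bound the second term by $\min\{1,M^{\Delta-(1-o(1))r_1}\}$.

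On $\mathcal{A}$, let $r_1'$ denote the number of cross edges whose $S_{11}$-endpoint lies in $\widetilde{S}_{11}$. I would then rerun the cycle-counting argument of Lemma~\ref{lemma:bound_E_2_S_1_connect} verbatim but with $\widetilde{S}_{11}$ playing the role of $S_{11}$ and $r_1'$ playing the role of $r_1$. Each $i\in S_{12}$ sending $r_{1,i}'$ edges into the connected set $\widetilde{S}_{11}$ induces $\max\{0,r_{1,i}'-1\}$ independent cycle-consistency constraints on the i.i.d.\ uniform group elements $g_{il}$, each satisfied with probability $M^{-1}$. Summing over $i$ and combining with the trivial upper bound $1$ yields
\begin{equation*}
\mathcal{P}_{\bm{x}^*}(E_{2,S_1}\mid r_1,r_1',\mathcal{A}) \leq \min\{1,\,M^{\Delta-r_1'}\}.
\end{equation*}
Cycles that pass through small components of $G(S_{11})$ are deliberately ignored; this can only enlarge the upper bound and so is safe.

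The remaining concentration step is to show that $r_1-r_1'\leq o(r_1)$ on a subevent of $\mathcal{A}$ of probability $1-n^{-\Theta(\log n)}$, which turns $M^{\Delta-r_1'}$ into $M^{\Delta-(1-o(1))r_1}$. Conditional on $r_1$ and on the component structure inside $S_{11}$, the $r_1$ cross edges are exchangeable over the $|S_{11}|\cdot|S_{12}|$ possible pairs, so $r_1-r_1'$ is hypergeometric with mean at most $r_1/\log\log n$ on $\mathcal{A}$. A standard hypergeometric Chernoff bound then gives $r_1-r_1'\leq r_1/\log\log n+O(\sqrt{r_1\log^2 n})$ except with probability $n^{-\Theta(\log n)}$, which is $o(r_1)$ whenever $r_1$ is large enough for the square-root fluctuation to be subleading; in the complementary range of small $r_1$ one has $\Delta-(1-o(1))r_1\geq 0$, so the $\min$ in the claim is attained by the trivial value $1$ and nothing further is needed. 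The main obstacle I anticipate is calibrating the $o(1)$ uniformly---choosing a single rate $\varepsilon_n\to 0$ that simultaneously dominates the systematic defect $1/\log\log n$ from the giant component and the stochastic deviation $\sqrt{\log^2 n/r_1}$ from the hypergeometric concentration, while preserving the failure probability at the $n^{-\Theta(\log n)}$ level needed for the downstream union bound of Section~\ref{sec:proof_clus_upper_bound}.
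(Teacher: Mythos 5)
Your skeleton---condition on the giant component of $G(S_{11})$ via Theorem~\ref{the:bound_Z_n_3}, absorb its failure probability into the $n^{-\Theta(\log n)}$ term, and rerun the cycle count of Lemma~\ref{lemma:bound_E_2_S_1_connect} inside the giant component---is exactly the paper's route, and the decomposition and independence claims are fine. The gap is in the final step, where you convert the defect $r_1-r_1'$ into a high-probability statement. Your concentration bound gives $r_1-r_1'\leq r_1/\log\log n+O\big(\sqrt{r_1\log^2 n}\big)$ with failure probability $n^{-\Theta(\log n)}$, but the fluctuation term is $o(r_1)$ only when $r_1=\omega(\log^2 n)$; and in the complementary range your fallback claim ``$\Delta-(1-o(1))r_1\geq 0$, so the min is $1$'' is false. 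The lemma conditions on an arbitrary value of $r_1$, and for small $\Delta$ the relevant values satisfy $r_1=\Theta(\Delta\log n)\gg\Delta$ (e.g.\ $\Delta=1$ with $r_1\approx\frac{b}{2}\log n$), so $M^{\Delta-(1-o(1))r_1}\ll 1$ and the trivial bound $1$ destroys the estimate precisely in the regime that dominates the union bound in Section~\ref{sec:proof_clus_upper_bound} and produces the sharp threshold. Moreover, at that scale no hypergeometric tail bound can deliver a deviation of size $o(r_1)$ with failure probability $n^{-\Theta(\log n)}$: even the multiplicative bound $\mathcal{P}(r_1-r_1'\geq\epsilon r_1)\leq\big(e z'/(\epsilon|S_{11}|)\big)^{\epsilon r_1}$ is only of order $n^{-\Theta(\log\log\log n)}$ when $r_1=\Theta(\log n)$, so you would be forced to weaken the additive error and redo Lemma~\ref{lemma:bound_P_Delta} with it.

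The paper avoids this by never making a high-probability claim about $r_1'$: it bounds a conditional expectation node by node, showing $\mathbb{E}\big[M^{1-r_{1,\mathrm{giant},i}}\mid r_{1,i},E_{\mathrm{giant}}\big]\leq M^{1-(1-o(1))r_{1,i}}$ directly from the hypergeometric law of $r_{1,\mathrm{giant},i}$, via a ratio test on $f(k)=\mathcal{P}(r_{1,\mathrm{giant},i}=k)M^{1-k}$ (the ratio $f(k-1)/f(k)$ is $o(1)$ for $k\leq(1-\epsilon)r_{1,i}$ with $\epsilon=\Theta(1/\log\log\log n)$), so the $M^{-k}$ weight annihilates the mass below $(1-o(1))r_{1,i}$ with no extra additive probability; multiplying over $i\in S_{12}$ gives $M^{\Delta-(1-o(1))r_1}$ for every $r_1$. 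If you prefer your aggregate formulation, the analogous fix is to bound a moment generating function rather than a tail: write $\mathbb{E}\big[\min\{1,M^{\Delta-r_1'}\}\mid r_1,\mathcal{A}\big]\leq\min\big\{1,\,M^{\Delta-r_1}\,\mathbb{E}\big[M^{\,r_1-r_1'}\mid r_1,\mathcal{A}\big]\big\}$ and use that $r_1-r_1'$ is hypergeometric with success fraction $z'/|S_{11}|=O(1/\log\log n)$, whence $\mathbb{E}\big[M^{\,r_1-r_1'}\big]\leq\big(1+(M-1)z'/|S_{11}|\big)^{r_1}=M^{o(r_1)}$; this yields the stated bound uniformly in $r_1$ with no additional failure event.
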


The proof is deferred to Appendix~\ref{sec:proof_bound_E_2_S_1}. As a result, Lemma~\ref{lemma:bound_E_2_S_1} relaxes the condition of the subgraph $G(S_{11})$ being connected made in~Lemma~\ref{lemma:bound_E_2_S_1_connect}, and thus can be applied to \eqref{eq:bound_E_2_r_1_2} directly. By symmetry we have the same result for $S_2$ as
\begin{equation*}
	\mathcal{P}_{\bm{x}^*}\left(E_{2, S_2}\mid r_2\right) \leq n^{-\Theta(\log n)} + \min\left\{1, M^{\Delta - (1-o(1))r_2}\right\}.
\end{equation*}
Plugging these into \eqref{eq:bound_E_2_r_1_2} yields
\if\arXivver1
\begin{align}
\mathcal{P}_{\bm{x}^*}\left(E_2 \mid r_1, r_2, r^*\right) &\leq \left(n^{-\Theta(\log n)} + \min\left\{1, M^{\Delta - (1-o(1))r_1}\right\}\right) \cdot \left(n^{-\Theta(\log n)} + \min\left\{1, M^{\Delta - (1-o(1))r_2}\right\}\right) \nonumber\\
&\leq n^{-\Theta(\log n)} + M^{2\Delta - (1 - o(1))(r_1 + r_2)} \nonumber \\
&= n^{-\Theta(\log n)} + M^{2\Delta - (1 - o(1))r}.
\label{eq:bound_E_2_r_1_r_2}
\end{align}
\else
\begin{align}
	&\mathcal{P}_{\bm{x}^*}\left(E_2 \mid r_1, r_2, r^*\right) \nonumber \\
	&\leq \left(n^{-\Theta(\log n)} + \min\left\{1, M^{\Delta - (1-o(1))r_1}\right\}\right) \nonumber \\
	&\quad \cdot \left(n^{-\Theta(\log n)} + \min\left\{1, M^{\Delta - (1-o(1))r_2}\right\}\right) \nonumber\\
	&\leq n^{-\Theta(\log n)} + M^{2\Delta - (1 - o(1))(r_1 + r_2)} \nonumber \\
	&= n^{-\Theta(\log n)} + M^{2\Delta - (1 - o(1))r}.
	\label{eq:bound_E_2_r_1_r_2}
\end{align}
\fi
As a result, the upper bound \eqref{eq:bound_E_2_r_1_r_2} does not depend on $r_1$ and $r_2$ but only their sum $r$. 

Now, by plugging \eqref{eq:bound_E_2_r_1_r_2} into \eqref{eq:bound_P_Delta_middle} and taking the summation over $r$ and $r^*$, we obtain the following upper bound on $\mathcal{P}_\Delta$. The details are left in Appendix~\ref{sec:proof_bound_P_Delta}. 
\begin{lemma}
$\mathcal{P}_\Delta$ in \eqref{eq:bound_P_Delta_middle} satisfies
\begin{equation*}
\mathcal{P}_\Delta \leq M^{2\Delta} \cdot \exp\left(-\frac{N_\Delta \log n}{n} \left(a + b - (2 +o(1))\sqrt{\frac{ab}{M}}\right)\right),
\end{equation*}
where $N_\Delta:= 2\Delta\left(\frac{n}{2} - \Delta \right)$.
\label{lemma:bound_P_Delta}
\end{lemma}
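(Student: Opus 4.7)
The plan is to substitute the conditional estimate \eqref{eq:bound_E_2_r_1_r_2} into \eqref{eq:bound_P_Delta_middle} and evaluate the resulting sums in closed form via the binomial moment generating functions of $r$ and $r^*$. Under the truth $\bm{x}^*$, the variable $r = r_1 + r_2$ counts edges between the true communities and is distributed as $\mathrm{Bin}(N_\Delta, q)$, while $r^*$ counts edges inside the true communities and is independently $\mathrm{Bin}(N_\Delta, p)$; this independence is what lets me factor the double sum across $r$ and $r^*$. The additive $n^{-\Theta(\log n)}$ term of \eqref{eq:bound_E_2_r_1_r_2} is handled separately: simply bounding the indicator by $1$ gives a contribution of at most $n^{-\Theta(\log n)}$, which is subdominant to the target bound throughout the relevant range of $\Delta$.

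For the main term $M^{2\Delta - (1 - o(1))r}$, I would use the inequality $\mathbbm{1}\!\left[(Ma/b)^{r - r^*} > 1\right] \leq (Ma/b)^{\alpha(r - r^*)}$ valid for any $\alpha > 0$, and then factor across the independent variables $r$ and $r^*$ to write this contribution as
\begin{equation*}
M^{2\Delta} \cdot \mathbb{E}\!\left[t_1^{\,r}\right] \cdot \mathbb{E}\!\left[t_2^{\,r^*}\right], \qquad t_1 := (Ma/b)^\alpha\, M^{-(1-o(1))}, \quad t_2 := (b/Ma)^\alpha.
\end{equation*}
Applying the elementary bound $(1 + p(t-1))^N \leq \exp(pN(t-1))$ to each binomial moment generating function and substituting $p = a\log n/n$, $q = b\log n/n$, the product collapses into
\begin{equation*}
M^{2\Delta} \exp\!\left(-\tfrac{N_\Delta \log n}{n}\bigl[(a+b) - bt_1 - at_2\bigr]\right).
\end{equation*}

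The last step is to optimize over $\alpha > 0$. Setting $u := (Ma/b)^\alpha$, the quantity $bt_1 + at_2 = (b/M^{1-o(1)})\,u + a/u$ is minimized by AM--GM at $u^\ast = \sqrt{a M^{1-o(1)}/b}$, with minimum value $2\sqrt{ab/M^{1-o(1)}} = (2 + o(1))\sqrt{ab/M}$, where I use that $M$ is a fixed integer independent of $n$, so $M^{o(1)} = 1 + o(1)$. Choosing $\alpha = 1/2$ realizes this optimum up to the same $1 + o(1)$ factor (a direct substitution gives $bt_1 = \sqrt{ab/M}\,(1 + o(1))$ and $at_2 = \sqrt{ab/M}$), and plugging back produces exactly the inequality claimed. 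The main subtlety I expect is bookkeeping on the $(1 - o(1))$ factor inherited from Lemma~\ref{lemma:bound_E_2_S_1} as it propagates through the m.g.f.\ computation, and verifying that it lands cleanly inside the $(2 + o(1))\sqrt{ab/M}$ of the final bound; the key observation that permits this is again the fixedness of $M$, so that every $M^{\pm o(1)}$ factor contributes only a $1 + o(1)$ correction.
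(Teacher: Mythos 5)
Your proposal follows essentially the same route as the paper's proof: plug \eqref{eq:bound_E_2_r_1_r_2} into \eqref{eq:bound_P_Delta_middle}, use the independence of $r\sim\mathrm{Binom}(N_\Delta,q)$ and $r^*\sim\mathrm{Binom}(N_\Delta,p)$ to factor the tilted expectations, bound each moment generating function by $\exp(-pN(1-t))$, and optimize the tilt, where your explicit choice $\alpha=1/2$ agrees with the paper's optimizer $\alpha_0$ up to the harmless $M^{o(1)}=1+o(1)$ factor and yields the same $(2+o(1))\sqrt{ab/M}$. The only deviation is cosmetic: you dispose of the additive $n^{-\Theta(\log n)}$ term by bounding the indicator by $1$, whereas the paper retains the tilted factor and absorbs that term into the $(1+o(1))$ of the main exponent, and both treatments assert this absorption at the same level of rigor.
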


\if\arXivver1
\subsection{Step 4: Sum over $\Delta$}
\else
\vspace{0.2cm}
\subsubsection{Step 4: Sum over $\Delta$}
\fi
Now, we are ready to bound the error probability $\mathcal{P}_{e, \text{c}}(\psi_{\text{MLE}})$ in \eqref{eq:bound_error_MLE_delta}, which is given as 
\if\arXivver1
\begin{align}
\mathcal{P}_{e, \text{c}}(\psi_{\text{MLE}}) &\leq 2\sum_{\Delta = 1}^{\ceil*{n/4}} \binom{n/2}{\Delta}^2 \cdot  M^{2\Delta} \cdot \exp\left(-\frac{N_\Delta \log n}{n} \left(a + b - (2 +o(1))\sqrt{\frac{ab}{M}}\right)\right) \nonumber \\
&\overset{(a)}{\leq} 2 \sum_{\Delta = 1}^{\ceil*{n/4}} \underbrace{\left(\frac{en}{2\Delta}\right)^{2\Delta} \cdot M^{2\Delta} \cdot \exp\left(-\frac{N_\Delta \log n}{n} \left(a + b - (2 +o(1))\sqrt{\frac{ab}{M}}\right)\right) }_{=: g(\Delta)} \nonumber \\
&\leq 2\sum_{\Delta = 1}^{\ceil*{n/4}} g(\Delta), 
\label{eq:bound_P_e_g_Delta}
\end{align}
\else
\begin{align}
	&\mathcal{P}_{e, \text{c}}(\psi_{\text{MLE}}) \nonumber \\
	&\leq 2\sum_{\Delta = 1}^{\ceil*{n/4}} \Bigg[{n/2 \choose \Delta}^2 \cdot  M^{2\Delta} \nonumber \\
	&\quad \cdot \exp\left(-\frac{N_\Delta \log n}{n} \left(a + b - (2 +o(1))\sqrt{\frac{ab}{M}}\right)\right)\Bigg] \nonumber \\
	&\overset{(a)}{\leq} 2 \sum_{\Delta = 1}^{\ceil*{n/4}} \Bigg[\left(\frac{en}{2\Delta}\right)^{2\Delta} \cdot M^{2\Delta} \nonumber \\
	&\quad \cdot \exp\left(-\frac{N_\Delta \log n}{n} \left(a + b - (2 +o(1))\sqrt{\frac{ab}{M}}\right)\right) \Bigg]\nonumber \\
	&=: 2\sum_{\Delta = 1}^{\ceil*{n/4}} g(\Delta)
	\label{eq:bound_P_e_g_Delta}
\end{align}
\fi
where $(a)$ uses the property of binomial coefficient that $\binom{n}{k} \leq \left(e\frac{n}{k}\right)^k$ and we define each term in the summation as $g(\Delta)$. To proceed, we take the logarithm of $g(\Delta)$ as 
	\if\arXivver1
\begin{align*}
	\log g(\Delta) &= 2\Delta\log \left(\frac{en}{2\Delta}\right) + 2\Delta \log M  - \frac{N_\Delta \log n}{n} \cdot \left(a + b - (2 +o(1))\sqrt{\frac{ab}{M}}\right)\\
	&= 2\Delta \left[ \log\left(\frac{en}{2\Delta}\right) + \log M - \left(\frac{1}{2} - \frac{\Delta}{n}\right) \cdot \left(a + b - (2 +o(1))\sqrt{\frac{ab}{M}}\right) \log n\right].
\end{align*}
\else
\begin{align*}
	&\log g(\Delta) \\
	&= 2\Delta\log \left(\frac{en}{2\Delta}\right) + 2\Delta \log M  \\
	&\quad- \frac{N_\Delta \log n}{n} \cdot \left(a + b - (2 +o(1))\sqrt{\frac{ab}{M}}\right)\\
	&= 2\Delta \Bigg[ \log\left(\frac{en}{2\Delta}\right) + \log M \\
	&\quad -\left(\frac{1}{2} - \frac{\Delta}{n}\right) \cdot \left(a + b - (2 +o(1))\sqrt{\frac{ab}{M}}\right) \log n\Bigg].
\end{align*}
\fi
Then, we study $g(\Delta)$ on the following two cases depending $\Delta$: 
\begin{itemize}[leftmargin=*]
	\item \textbf{Case 1:} When $1 \leq \Delta \leq \delta_1 n$, where $\delta_1 = \frac{c - c_1}{2(1 + c)}$ for any $c_1 \in \left[\frac{2c}{3}, c\right]$ (recall that $c$ is given in the condition \eqref{eq:cond_upper_clus}). Then, 
		\if\arXivver1 
	\begin{align}
	\log g(\Delta) &\overset{(a)}{\leq} 2\Delta\left(\log n - \left(\frac{1}{2} - \delta_1\right) \cdot \left(a + b - (2 +o(1))\sqrt{\frac{ab}{M}}\right) \log n + O(1)\right) \nonumber \\
	&\overset{(b)}{\leq} 2\Delta \left(\log n - (1 - 2\delta_1)(1+c) \log n + O(1)\right) \nonumber \\
	&= -2(1 - o(1))\cdot c_1\Delta \log n,  
	\label{eq:bound_g_Delta_case_1}
	\end{align}
\else
\begin{align}
	&\log g(\Delta) \nonumber \\
	&\overset{(a)}{\leq} 2\Delta\Bigg(\log n - \left(\frac{1}{2} - \delta_1\right) \nonumber \\
	&\quad \cdot \left(a + b - (2 +o(1))\sqrt{\frac{ab}{M}}\right) \log n + O(1)\Bigg) \nonumber \\
	&\overset{(b)}{\leq} 2\Delta \left(\log n - (1 - 2\delta_1)(1+c) \log n + O(1)\right) \nonumber \\
	&= -2(1 - o(1))\cdot c_1\Delta \log n,  
	\label{eq:bound_g_Delta_case_1}
\end{align}
\fi
	where $(a)$ comes from the range $1 \leq \Delta \leq \delta_1 n$ and $(b)$ holds by plugging in the condition \eqref{eq:cond_upper_clus}. As a result, $g(\Delta)$ converges to zero exponentially fast in this case.
	
	\vspace{0.2cm}
	\item \textbf{Case 2:} When $\delta_1 n < \Delta \leq \ceil*{\frac{n}{4}}$, similarly
		\if\arXivver1
	\begin{align}
	\log g(\Delta)  &\leq 2\Delta\left(\log\left(\frac{2e}{\delta_1}\right) + \log M - \frac{1}{4} \cdot  \left(a + b - (2 +o(1))\sqrt{\frac{ab}{M}}\right) \log n\right)\nonumber \\
	&\leq 2\Delta \left(-\left(\frac{1 + c}{2}\right)\log n + O(1) \right) \nonumber \\
	& = -(1 + c) \Delta \log n , 
	\label{eq:bound_g_Delta_case_2}
	\end{align}
\else
\begin{align}
	\log g(\Delta)  &\leq 2\Delta\Bigg(\log\left(\frac{2e}{\delta_1}\right) + \log M \nonumber\\
	&\quad- \frac{1}{4} \cdot  \left(a + b - (2 +o(1))\sqrt{\frac{ab}{M}}\right) \log n\Bigg)\nonumber \\
	&\leq 2\Delta \left(-\left(\frac{1 + c}{2}\right)\log n + O(1) \right) \nonumber \\
	& = -(1 + c) \Delta \log n 
	\label{eq:bound_g_Delta_case_2}
\end{align}
\fi
	which implies $g(\Delta)$ converges to zero super-exponentially fast in this case.
\end{itemize}

Finally, assembling \eqref{eq:bound_g_Delta_case_1} and \eqref{eq:bound_g_Delta_case_2} and plugging them into \eqref{eq:bound_P_e_g_Delta} yields the upper bound on $\mathcal{P}_{e, \text{c}}(\psi_{\text{MLE}})$ as 
\begin{align*}
\mathcal{P}_{e, \text{c}}(\psi_{\text{MLE}}) &\leq 2\sum_{\Delta = 1}^{\floor*{\delta_1 n}} n^{-2(1-o(1)) c_1\Delta} + 2\sum_{\Delta = \floor{\delta_1 n}+1}^{\ceil*{n/4}} n^{-(1 + c)\Delta} \\ 
&\leq n^{-2(1 - o(1))c_1} \leq n^{-c},
\end{align*}
where the last step uses $c_1 \in \left[\frac{2c}{3}, c\right]$. This completes the proof.

\section{Proof of Theorem~\ref{the:cond_lower_clus}}
\label{sec:proof_clus_lower_bound}
Our proof technique is mainly inspired by the one of \cite[Theorem 1]{abbe2015exact}. Similar to the proof of Theorem~\ref{sec:proof_clus_upper_bound}, we still follow the assumption \eqref{eq:assumption_theta}. To begin with, recall the MLE \eqref{eq:MLE} fails to exactly recover the communities $\bm{\kappa}^*$ if and only if there exists an alternative hypothesis $\bm{x} = \{\bm{\kappa}, \bm{g}\}$  that 
\begin{equation*}
	\text{dist}_{\mathrm{c}}(\bm{\kappa}, \bm{\kappa}^*) > 0 \quad \text{and} \quad \mathcal{P}(\bm{y}|\bm{x}) \geq \mathcal{P}(\bm{y}|\bm{x}^*).
\end{equation*}
Also, recall the likelihood ratio given in \eqref{eq:likelihood_ratio}:
	\if\arXivver1
\begin{equation*}
	\frac{\mathcal{P}(\bm{y}|\bm{x})}{\mathcal{P}(\bm{y}|\bm{x}^*)} = \left(\frac{Mp(1-q)}{q(1-p)}\right)^{r - r^*} \cdot 	\frac{\mathbbm{1}(g_{ij} = g_ig_j^{-1}; \; \forall (i,j) \in \mathcal{E}_{\text{inner}})}{\mathbbm{1}(g_{ij} = g_i^*(g_j^*)^{-1}; \; \forall (i,j) \in \mathcal{E}_{\text{inner}}^*)}.
\end{equation*}
\else
\begin{equation*}
	\begin{aligned}
		\frac{\mathcal{P}(\bm{y}|\bm{x})}{\mathcal{P}(\bm{y}|\bm{x}^*)} &= \left(\frac{Mp(1-q)}{q(1-p)}\right)^{r - r^*} \\
		&\quad \cdot 	\frac{\mathbbm{1}(g_{ij} = g_ig_j^{-1}; \; \forall (i,j) \in \mathcal{E}_{\text{inner}})}{\mathbbm{1}(g_{ij} = g_i^*(g_j^*)^{-1}; \; \forall (i,j) \in \mathcal{E}_{\text{inner}}^*)}.
	\end{aligned}
\end{equation*}
\fi
Then, depending on the value of $\frac{Mp(1-q)}{q(1-p)}$, we consider the following two cases: 

\if\arXivver1
\subsection{Case 1}
\else
\vspace{0.2cm}
\subsubsection{Case 1}
\fi
We first consider when $\frac{Mp(1-q)}{q(1-p)} \geq 1$, and the MLE fails only if $r \geq r^*$.  
For ease of presentation, without specification, we denote $\mathcal{P}(\cdot)$ as the probability conditioning on the ground truth $\bm{x}^*$ and omit the subscript. Then, let us define the following events:
\begin{itemize}[leftmargin=40pt]
	\vspace{0.2cm}
	\item [$F$:] the MLE \eqref{eq:MLE} fails;
	\vspace{0.2cm}
	\item [$E_A^{(i)}$:] node $i \in S_1^*$ (resp. $S_2^*$) is connecting to more nodes in $S_2^*$ (resp. $S_1^*$) than in $S_1^*$ (resp. $S_2^*$),  i.e.,  $|\mathcal{E}(i, S_1^* \backslash i)| < |\mathcal{E}(i, S_2^*)| $ (\text{resp. $|\mathcal{E}(i, S_2^* \backslash i)| < |\mathcal{E}(i, S_1^*)|$}); 
	\vspace{0.2cm}
	\item [$E_B^{(i, g_i)}$:] node $i \in S_1^*$ (resp. $S_2^*$) satisfies $g_{ij} = g_i$, for any $(i,j) \in \mathcal{E}(i, S_2^*)$ (resp. $\mathcal{E}(i, S_1^*)$). 
	\vspace{0.2cm}
\end{itemize}
We further define 
\begin{align}
E^{(i)} &:= E_A^{(i)} \bigcap \left( \bigcup_{g_i \in \mathcal{G}_M} E_B^{(i, g_i)}\right), \nonumber \\
E_{1} &:= \bigcup_{i \in S_1^*} E^{(i)}, \label{eq:def_E_1_2} \\
 E_{2} &:= \bigcup_{i \in S_2^*} E^{(i)}. \nonumber
\end{align} 
Given the above, we have the following relation between $E_1$ and $F$:
\begin{lemma}
If $\mathcal{P}(E_1) \geq \frac{4}{5}$, then $\mathcal{P}(F) \geq \frac{3}{5}$.
\label{lemma:proof_E_1_F}
\end{lemma}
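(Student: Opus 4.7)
The plan is to exhibit, on an event of probability at least $3/5$, a wrong hypothesis $\bm{x}$ whose likelihood beats or ties that of the ground truth, which by the characterization of $F$ recalled at the start of Section~\ref{sec:proof_clus_lower_bound} forces MLE to fail. The single witness $i \in S_1^*$ supplied by $E_1$ cannot be used on its own because the MLE \eqref{eq:MLE_rewrite} enforces $n_k = n_k^*$, so moving $i$ alone to the other cluster is not a valid hypothesis; we must swap $i$ with a counterpart $j \in S_2^*$ satisfying the symmetric event, which is exactly $E_2$.

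Since the statistical model is invariant under relabeling $S_1^* \leftrightarrow S_2^*$ (both communities have the same size and identical intra/inter-community distributions), the joint law of $\bm{y}$ is symmetric in the two communities, hence $\mathcal{P}(E_2) = \mathcal{P}(E_1) \geq 4/5$. A single Bonferroni step then yields $\mathcal{P}(E_1 \cap E_2) \geq \mathcal{P}(E_1) + \mathcal{P}(E_2) - 1 \geq 3/5$, so it suffices to show $E_1 \cap E_2 \subseteq F$.

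On $E_1 \cap E_2$ pick $i \in S_1^*$ and $g_i \in \mathcal{G}_M$ realizing $E_A^{(i)} \cap E_B^{(i, g_i)}$, and analogously $j \in S_2^*$, $g_j$. Define $\bm{x} = \{\bm{\kappa}, \bm{g}\}$ by exchanging the cluster labels of $i$ and $j$, assigning them group elements $g_i, g_j$, and keeping $g_k = g_0$ for every other node. In the notation of Section~\ref{sec:four_subsets} this gives $S_{11} = S_1^* \setminus \{i\}$, $S_{22} = S_2^* \setminus \{j\}$, $S_{12} = \{j\}$, $S_{21} = \{i\}$, so by \eqref{eq:def_r_in_out},
\begin{equation*}
	r - r^* = \bigl(|\mathcal{E}(i, S_2^* \setminus \{j\})| - |\mathcal{E}(i, S_1^* \setminus \{i\})|\bigr) + \bigl(|\mathcal{E}(j, S_1^* \setminus \{i\})| - |\mathcal{E}(j, S_2^* \setminus \{j\})|\bigr).
\end{equation*}
Combining $E_A^{(i)}$ with $|\mathcal{E}(i, S_2^* \setminus \{j\})| \geq |\mathcal{E}(i, S_2^*)| - 1$ forces the first bracket to be $\geq 0$, and the same argument via $E_A^{(j)}$ handles the second, so $r - r^* \geq 0$. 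The events $E_B^{(i, g_i)}, E_B^{(j, g_j)}$ together with the standing assumption $g_k^* = g_0$ from Section~\ref{sec:proof_clus_upper_bound} ensure $g_{kl} = g_k g_l^{-1}$ on every edge of $\mathcal{E}_{\text{inner}}$ under $\bm{\kappa}$: old in-cluster edges are inherited from $\bm{x}^*$, while the new ones of type $(i,\cdot)$ with the other endpoint in $S_2^* \setminus \{j\}$ and $(j,\cdot)$ with the other endpoint in $S_1^* \setminus \{i\}$ are exactly the edges covered by $E_B$. Thus the indicator in \eqref{eq:likelihood_ratio} equals $1$, and since Case~1 places $Mp(1-q)/q(1-p) \geq 1$, the likelihood ratio is $\geq 1$. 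Hence $\bm{x}$ causes MLE to fail on $E_1 \cap E_2$, and $\mathcal{P}(F) \geq \mathcal{P}(E_1 \cap E_2) \geq 3/5$.

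The only subtle step is the $\pm 1$ bookkeeping around the possible edge $(i,j)$ when converting $|\mathcal{E}(i, S_2^*)|$ to $|\mathcal{E}(i, S_2^* \setminus \{j\})|$ (and analogously for $j$); this is precisely the reason $E_A^{(\cdot)}$ is defined with a strict inequality, which supplies the single unit of integer slack that is consumed here. No concentration or combinatorial estimates are required beyond this accounting.
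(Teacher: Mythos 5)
Your proof is correct and follows essentially the same route as the paper: symmetry gives $\mathcal{P}(E_2)=\mathcal{P}(E_1)$, a Bonferroni bound gives $\mathcal{P}(E_1\cap E_2)\geq 3/5$, and on $E_1\cap E_2$ the swap of $i$ and $j$ with group elements $g_i,g_j$ produces a wrong hypothesis whose likelihood is at least that of $\bm{x}^*$. The extra bookkeeping you supply (the $\pm 1$ slack from the strict inequality in $E_A^{(\cdot)}$ and the verification of the consistency indicator) just fills in details the paper leaves implicit.
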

\begin{proof}
By symmetry, we have $\mathcal{P}(E_1) = \mathcal{P}(E_2) \geq \frac{4}{5}$. Then, consider the event $E = E_1 \cap E_2$ occurs, which indicates there exists a pair of nodes $i \in S_1^*$ and $j \in S_2^*$ with the two events $E^{(i)}$ and $E^{(j)}$ occur with some $g_i$ and $g_j$.  As a result, one can construct an alternative hypothesis $\bm{x}$ from $\bm{x}^*$ by swapping the community assignment of $i$ and $j$,  i.e., $i \in S_2$ and $j \in S_1$, and let the associated group elements be $g_i$ and $g_j$, then $\mathcal{P}(\bm{y}|\bm{x}) > \mathcal{P}(\bm{y}|\bm{x}^*)$, indicating $F$ occurs. Therefore, 
\begin{equation*}
\mathcal{P}(F) \geq \mathcal{P}(E_1 \cap E_2) \geq \mathcal{P}(E_1) + \mathcal{P}(E_2) - 1 \geq \frac{3}{5},
\end{equation*}
which completes the proof.
\end{proof}

To proceed, let us denote the following two quantities:
\begin{equation*}
	\gamma := \log^3 n, \quad \delta := \frac{\log n}{\log\log n}, 
\end{equation*}
and we assume them to be integers (otherwise round them to the nearest ones). Then, let $H$ be a fixed subset of $S_1^*$ of size $\frac{n}{\gamma}$, we further define the following two events:
\begin{itemize}[leftmargin=40pt]
	\vspace{0.2cm}
	\item [$F_H$:] no node in $H$ connects to at least $\delta$ number of other nodes in $H$, 
	\vspace{0.2cm}
	\item [$E_{H, A}^{(i)}$:] node $i \in H$ satisfies $|\mathcal{E}(i, S_1^* \backslash H)| + \delta \leq |\mathcal{E}(i, S_2^*)|$,
	\vspace{0.2cm}
\end{itemize}
and we denote
\begin{equation*}
	\begin{aligned}
		E_H^{(i)} &:= E_{H,A}^{(i)} \bigcap \left( \bigcup_{g_i \in \mathcal{G}_M} E_B^{(i, g_i)}\right),\\
		E_H &:= \bigcup_{i \in H} E_H^{(i)}.
	\end{aligned}
\end{equation*}
Then we have
\begin{lemma}
If $\mathcal{P}(E_H) \geq \frac{9}{10}$, then $\mathcal{P}(F) \geq \frac{3}{5}$. 
\label{lemma:P_E_H_P_F}
\end{lemma}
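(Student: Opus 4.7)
The plan is to reduce this lemma to the previously established Lemma~\ref{lemma:proof_E_1_F} by showing that the hypothesis $\mathcal{P}(E_H) \geq 9/10$, combined with a high-probability event $F_H$ about the sparse internal structure of $H$, forces $\mathcal{P}(E_1) \geq 4/5$. Once that is in hand, Lemma~\ref{lemma:proof_E_1_F} immediately yields $\mathcal{P}(F) \geq 3/5$.

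The first step is a deterministic containment: I want to show $E_H \cap F_H \subseteq E_1$. Pick any $i \in H$ with $E_H^{(i)}$ realized. Splitting $|\mathcal{E}(i, S_1^* \setminus \{i\})| = |\mathcal{E}(i, S_1^* \setminus H)| + |\mathcal{E}(i, H \setminus \{i\})|$, the event $E_{H,A}^{(i)}$ gives $|\mathcal{E}(i, S_1^* \setminus H)| + \delta \leq |\mathcal{E}(i, S_2^*)|$, while $F_H$ ensures $|\mathcal{E}(i, H \setminus \{i\})| < \delta$. Adding these shows $|\mathcal{E}(i, S_1^* \setminus \{i\})| < |\mathcal{E}(i, S_2^*)|$, i.e.\ $E_A^{(i)}$ holds; since the $\bigcup_{g_i} E_B^{(i,g_i)}$ factor is identical in both definitions, $E^{(i)}$ holds, and hence $E_1$.

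The main technical step is bounding $\mathcal{P}(F_H^c)$. For each $i \in H$, the number of neighbors of $i$ inside $H$ is $\text{Binomial}(|H|-1, p)$ with mean of order $a/\log^2 n$, since $|H| = n/\gamma = n/\log^3 n$ and $p = a\log n/n$. A standard tail bound gives
\begin{equation*}
\mathcal{P}\bigl(|\mathcal{E}(i, H\setminus\{i\})| \geq \delta\bigr) \leq \binom{|H|-1}{\delta} p^{\delta} \leq \left(\frac{e|H|p}{\delta}\right)^{\delta} = \left(\frac{ea\log\log n}{\log^3 n}\right)^{\log n/\log\log n},
\end{equation*}
and a short logarithmic computation shows this quantity is at most $n^{-3+o(1)}$. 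Applying a union bound over the $|H| \leq n$ nodes yields $\mathcal{P}(F_H^c) \leq n^{-2+o(1)} = o(1)$, so $\mathcal{P}(F_H) \geq 1 - o(1)$.

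Finally, combining the two ingredients via the inclusion-exclusion inequality,
\begin{equation*}
\mathcal{P}(E_1) \geq \mathcal{P}(E_H \cap F_H) \geq \mathcal{P}(E_H) + \mathcal{P}(F_H) - 1 \geq \tfrac{9}{10} + (1 - o(1)) - 1 \geq \tfrac{4}{5}
\end{equation*}
for all sufficiently large $n$. Invoking Lemma~\ref{lemma:proof_E_1_F} completes the proof. The delicate point, and the one where the careful calibration of $\gamma = \log^3 n$ and $\delta = \log n/\log\log n$ pays off, is precisely that the tail bound on $|\mathcal{E}(i, H \setminus \{i\})|$ must beat the size $|H|$ in the union bound; the choices above produce $\mathcal{P}(F_H) \to 1$ with ample margin, while keeping $|H|$ large enough so that $E_{H,A}^{(i)}$ can hold with non-negligible probability, as will be needed in the next step of the argument.
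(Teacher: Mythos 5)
Your proposal is correct and follows essentially the same route as the paper: the deterministic containment $F_H \cap E_H \subseteq E_1$, the inclusion–exclusion bound $\mathcal{P}(E_1) \geq \mathcal{P}(F_H) + \mathcal{P}(E_H) - 1 \geq 4/5$, and then Lemma~\ref{lemma:proof_E_1_F}. The only difference is that where the paper cites \cite[Lemma 10]{abbe2015exact} for $\mathcal{P}(F_H) \geq \tfrac{9}{10}$, you supply a self-contained (and correct) binomial-tail plus union-bound argument giving $\mathcal{P}(F_H) \geq 1 - n^{-2+o(1)}$, which is a harmless strengthening of the same step.
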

\begin{proof}
By definition, $E_H$ occurs when there exists a node $i \in H$ with some $g_i$ that such that $E_{H,A}^{(i)}$ and $E_{B}^{(i, g_i)}$ occur. Then one can see that $F_H \cap E_H \subseteq E_1$, as $F_H \cap E_H^{(i)}$ indicates $E^{(i)}$ occurs.  Therefore, by applying $\mathcal{P}(F_H) \geq \frac{9}{10}$ from \cite[Lemma 10]{abbe2015exact}, we have 
\begin{equation*}
\mathcal{P}(E_1) \geq \mathcal{P}(F_H \cap E_H) \geq \mathcal{P}(F_H) + \mathcal{P}(E_H) - 1  \geq \frac{4}{5}.
\end{equation*}
By further applying Lemma~\ref{lemma:proof_E_1_F} we complete the proof.
\end{proof}

Now, we focus on bounding $\mathcal{P}(E_{H})$. Importantly, it satisfies
\if\arXivver1
\begin{align}
\mathcal{P}(E_H) &= \mathcal{P}\left(\bigcup_{i \in H} E_H^{(i)}\right) = 1 - \mathcal{P}\left(\bigcap_{i \in H}\left(E_H^{(i)}\right)^c\right) \overset{(a)}{=} 1 - \prod_{i \in H} \mathcal{P}\left(\left(E_H^{(i)}\right)^c\right) \nonumber  \\
&= 1 - \prod_{i \in H} \left(1 - \mathcal{P}\left(E_H^{(i)}\right)\right) = 1 - \prod_{i \in H} \left(1 - \mathcal{P}\left(E_{H,A}^{(i)} \bigcap \left( \bigcup_{g_i \in \mathcal{G}_M} E_B^{(i, g_i)}\right)\right)\right) \nonumber \\ 
&\overset{(b)}{\geq} 1 - \prod_{i \in H} \left(1 - \mathcal{P}\left(E_{H,A}^{(i)} \bigcap E_B^{(i, 0)}\right)\right), 
\label{eq:bound_P_E_H_start}
\end{align} 
\else
\begin{align}
	\mathcal{P}(E_H) &= \mathcal{P}\left(\bigcup_{i \in H} E_H^{(i)}\right) = 1 - \mathcal{P}\left(\bigcap_{i \in H}\left(E_H^{(i)}\right)^c\right) \nonumber  \\
	&\overset{(a)}{=} 1 - \prod_{i \in H} \mathcal{P}\left(\left(E_H^{(i)}\right)^c\right) = 1 - \prod_{i \in H} \left(1 - \mathcal{P}\left(E_H^{(i)}\right)\right) \nonumber \\
	&= 1 - \prod_{i \in H} \left(1 - \mathcal{P}\left(E_{H,A}^{(i)} \bigcap \left( \bigcup_{g_i \in \mathcal{G}_M} E_B^{(i, g_i)}\right)\right)\right) \nonumber \\ 
	&\overset{(b)}{\geq} 1 - \prod_{i \in H} \left(1 - \mathcal{P}\left(E_{H,A}^{(i)} \bigcap E_B^{(i, 0)}\right)\right), 
	\label{eq:bound_P_E_H_start}
\end{align} 
\fi
where $(a)$ holds since the set of events $\left\{E_H^{(i)}\right\}_{i \in H}$ are independent, as they involve mutually disjoint sets of edge connections, and $(b)$ comes by focusing on the case of $g_i = 0$. Then, our next job is to lower bound $\mathcal{P}\left(E_{H,A}^{(i)} \bigcap E_B^{(i, 0)}\right)$. To this end, let us denote
\begin{equation}
	r_1 = |\mathcal{E}(i, S_1^* \backslash H)|, \quad r_2 = |\mathcal{E}(i, S_2^*)|.
	\label{eq:r_1_r_2_definition_v2}
\end{equation}
Notice that $r_1$ and $r_2$ are two independent binomial random variables such that 
\begin{equation}
	r_1 \sim \textrm{Binom}\left(\frac{n}{2} - \frac{n}{\gamma}, p\right), \quad  r_2  \sim \textrm{Binom}\left(\frac{n}{2}, q\right).
	\label{eq:r_1_r_2_distribution}
\end{equation}
Then, it satisfies
	\if\arXivver1
\begin{align}
	\mathcal{P}\left(E_{H,A}^{(i)} \bigcap E_B^{(i, 0)}\right) &= \sum_{r_1}\sum_{r_2} \mathcal{P}(r_1) \mathcal{P}(r_2)	\mathcal{P}\left(E_{H,A}^{(i)} \bigcap E_B^{(i, 0)} \mid r_1, r_2\right)  \nonumber \\
	&\overset{(a)}{=} \sum_{r_1}\sum_{r_2} \mathcal{P}(r_1) \mathcal{P}(r_2)	\mathcal{P}\left(E_{H,A}^{(i)} \mid r_1, r_2\right) \cdot   \mathcal{P}\left(E_B^{(i, 0)} \mid r_1, r_2\right) \nonumber \\
	&\overset{(b)}{=} \sum_{r_1}\sum_{r_2} \mathcal{P}(r_1) \mathcal{P}(r_2) \cdot \mathbbm{1}\left(r_2 \geq r_1 + \delta\right) \cdot M^{-r_2}.
	\label{eq:bound_E_H_A_E_B}
\end{align}
\else
\begin{align}
	&\mathcal{P}\left(E_{H,A}^{(i)} \bigcap E_B^{(i, 0)}\right) \nonumber \\
	&= \sum_{r_1}\sum_{r_2} \mathcal{P}(r_1) \mathcal{P}(r_2)	\mathcal{P}\left(E_{H,A}^{(i)} \bigcap E_B^{(i, 0)} \mid r_1, r_2\right)  \nonumber \\
	&\overset{(a)}{=} \sum_{r_1}\sum_{r_2} \Big[\mathcal{P}(r_1) \mathcal{P}(r_2)	\mathcal{P}\left(E_{H,A}^{(i)} \mid r_1, r_2\right) \nonumber \\
	&\quad \cdot   \mathcal{P}\left(E_B^{(i, 0)} \mid r_1, r_2\right)\Big] \nonumber \\
	&\overset{(b)}{=} \sum_{r_1}\sum_{r_2} \mathcal{P}(r_1) \mathcal{P}(r_2) \cdot \mathbbm{1}\left(r_2 \geq r_1 + \delta\right) \cdot M^{-r_2}.
	\label{eq:bound_E_H_A_E_B}
\end{align}
\fi
Here, $(a)$ comes from the fact that the two events $E_{H,A}^{(i)}$ and $E_B^{(i, 0)}$ occurs independently given $r_1$ and $r_2$; $(b)$ holds as given $r_1$ and $r_2$, $E_{H,A}^{(i)}$ is deterministic, and by definition $E_B^{(i, 0)}$ occurs only if {\color{black} $g_{ij} = I$ } for all $(i,j) \in \mathcal{E}(i, S_2^*)$, and each {\color{black} $g_{ij} = I$ } occurs independent with probability $M^{-1}$. Following this way with some heavy calculation, we obtain
\begin{lemma}
Given $p = \frac{a\log n}{n}$ and $q = \frac{b \log n}{n}$, \eqref{eq:bound_E_H_A_E_B} satisfies
	\if\arXivver1 
\begin{equation}
	\log \mathcal{P}\left(E_{H,A}^{(i)} \bigcap E_B^{(i, 0)}\right) \geq \left(-\frac{a + b}{2} + \sqrt{\frac{ab}{M}}\right) \log n - o(\log n).
	\label{eq:bound_P_E_H_A_B_final}
\end{equation}
\else
\begin{equation}
	\begin{aligned}
		\log \mathcal{P}\left(E_{H,A}^{(i)} \bigcap E_B^{(i, 0)}\right) &\geq \left(-\frac{a + b}{2} + \sqrt{\frac{ab}{M}}\right) \log n \\
		&\quad- o(\log n).
	\end{aligned}
	\label{eq:bound_P_E_H_A_B_final}
\end{equation}
\fi
\label{lemma:bound_P_E_H_A_B}
\end{lemma}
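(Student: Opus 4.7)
The plan is to lower-bound the double sum in \eqref{eq:bound_E_H_A_E_B} by a single, carefully chosen summand indexed by $(r_1,r_2) = (k^*, k^* + \delta)$, which saturates the constraint $r_2 \geq r_1 + \delta$. The saddle-point choice is
$$k^* := \left\lfloor \tfrac{1}{2}\sqrt{ab/M}\,\log n \right\rfloor = \Theta(\log n),$$
so that $\delta = \log n/\log\log n = o(k^*)$ and $k^* + \delta = O(\log n) \ll n$. Keeping only this term yields
$$\mathcal{P}\!\left(E_{H,A}^{(i)} \cap E_B^{(i,0)}\right) \;\geq\; \mathcal{P}(r_1 = k^*)\cdot \mathcal{P}(r_2 = k^* + \delta)\cdot M^{-(k^* + \delta)}.$$

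Next, I would apply Stirling's formula to both binomial pmfs. For $r \sim \mathrm{Binom}(n', \lambda\log n/n)$ with $n' = (1-o(1))n/2$ and $k = \Theta(\log n)$, a direct expansion (using $\binom{n'}{k} = (n'/k)^k e^{k - o(k)}$ and $(1 - \lambda\log n/n)^{n' - k} = e^{-\lambda\log n/2}\cdot e^{o(1)}$) gives
$$\log \mathcal{P}(r = k) \;=\; k\log\!\left(\frac{e\lambda\log n}{2k}\right) - \frac{\lambda}{2}\log n + o(\log n).$$
Summing the three contributions at $(k^*, k^* + \delta)$ and absorbing the shift by $\delta$ (since $\delta/k^* = O(1/\log\log n) = o(1)$, one has $\log(1 + \delta/k^*) = o(1)$ and the $\delta$-corrections contribute $O(\delta) + O(k^*\cdot \delta/k^*) = o(\log n)$), the exponent collapses to
$$k^*\log\!\left(\frac{e^2\, ab\,(\log n)^2}{4(k^*)^2 M}\right) \;-\; \frac{a+b}{2}\log n \;+\; o(\log n).$$
Plugging in $k^* = \tfrac{\log n}{2}\sqrt{ab/M}$ makes the argument of the logarithm equal to exactly $e^2$, so the bracketed quantity becomes $2k^* = \sqrt{ab/M}\,\log n$, delivering the desired bound \eqref{eq:bound_P_E_H_A_B_final}.

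The main obstacle will be keeping every error term genuinely $o(\log n)$ rather than $O(\log n)$: the saddle-point value of $k^*$ is essential here, as any constant-factor deviation leaves an $\Omega(\log n)$ error in the exponent and produces a suboptimal threshold. The most delicate point is that $\log(eb\log n/(2(k^*+\delta))) = \log(e\sqrt{bM/a}) + o(1)$ is only an $O(1)$ constant at the saddle point; this is what makes both $\delta \cdot \log(\cdot) = O(\delta)$ and $(k^*+\delta)\log(1+\delta/k^*) = O(\delta)$ fall below $\log n$. Secondary technical items — that replacing $n/2 - n/\gamma$ by $n/2$ in the support of $r_1$ changes $\binom{\cdot}{k^*}$ by a factor $n^{o(1)}$ since $n/\gamma = n/\log^3 n$, and that $(1-p)^{n/2 - k^*}$ and $(1-p)^{n/2}$ agree up to $e^{o(1)}$ — are routine and follow from $k^* = \Theta(\log n)$ and $p = \Theta(\log n/n)$. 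Once these approximations are verified, the rest of the computation is purely mechanical.
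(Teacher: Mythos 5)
Your proposal is correct and follows essentially the same route as the paper: both arguments lower-bound \eqref{eq:bound_E_H_A_E_B} by a single term that saturates $r_2 = r_1+\delta$ at the saddle point $k \approx \tfrac{1}{2}\sqrt{ab/M}\,\log n$ and then evaluate the two binomial probabilities (and the factor $M^{-r_2}$) by Stirling-type estimates, with all $\delta$-corrections of size $O(\delta)=o(\log n)$. The only cosmetic differences are that you plug in the optimal $k^*$ explicitly rather than maximizing $f(k)$ over $k$ (the paper's Lemma B.2 locates the argmax at $\Theta(\log n)$ first), and you work with $r_1\sim\mathrm{Binom}(n/2-n/\gamma,p)$ directly instead of first passing to $\mathrm{Binom}(n/2,p)$ by stochastic dominance; neither change affects the computation or the error accounting.
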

The proof of Lemma~\ref{lemma:bound_P_E_H_A_B} is deferred to Appendix~\ref{sec:bound_P_E_H_A_B_proof}. As a result, plugging \eqref{eq:bound_P_E_H_A_B_final} into \eqref{eq:bound_P_E_H_start} yields:
\begin{equation}
	\begin{aligned}
	\mathcal{P}(E_H) &\geq 1 - \prod_{i \in H} \left(1 - \mathcal{P}\left(E_{H,A}^{(i)} \bigcap E_B^{(i, 0)}\right)\right) \\
	&\overset{(a)}{\geq} 1 - \prod_{i \in H} \exp\left(- \mathcal{P}\left(E_{H,A}^{(i)} \bigcap E_B^{(i, 0)}\right)\right) \\
	&\overset{(b)}{=} 1 - \exp\left(- |H| \cdot \mathcal{P}\left(E_{H,A}^{(i)} \bigcap E_B^{(i, 0)}\right)\right) \\
	&\geq 1 - \exp\left( - n^{-\frac{a + b}{2} + \sqrt{\frac{ab}{M}} + 1 - o(1)}\right) \\
	&\overset{(c)}{\geq} \frac{9}{10}, 
\end{aligned}
\label{eq:bound_E_H_final}
\end{equation}
which holds as $n$ is sufficiently large. Here, $(a)$ holds by using $1 - x \leq e^{-x}, \; \forall x \in \mathbb{R}$; $(b)$ comes from the fact that $\mathcal{P}\left(E_{H,A}^{(i)} \bigcap E_B^{(i, 0)}\right)$ should be identical to all $i \in H$; $(c)$ applies the condition in \eqref{eq:cond_lower_clus}. Therefore, by further applying Lemma~\ref{lemma:P_E_H_P_F} we complete the proof for this case.

\if\arXivver1
\subsection{Case 2}
\else
\vspace{0.2cm}
\subsubsection{Case 2}
\fi 
Now we consider $\frac{Mp(1-q)}{q(1-p)} < 1$ where MLE fails only if $r \leq r^*$.  The proof is very similar to Case 1 where most of the result still applies here. Therefore, we will skip those repeated analyses. To begin with, we reuse all the notations of events in Case 1 but redefine the following two:
\begin{itemize}[leftmargin=40pt]
	\vspace{0.2cm}
	\item [$E_{A}^{(i)}:$] node $i \in S_1^*$ (resp. $S_2^*$) is connecting to more nodes in $S_1^*$ (resp. $S_2^*$) than in $S_2^*$ (resp. $S_1^*$) i.e., $|\mathcal{E}(i, S_1^* \backslash i)| > |\mathcal{E}(i, S_2^*)|$ (resp. $|\mathcal{E}(i, S_2^* \backslash i)| < |\mathcal{E}(i, S_1^*)|$),
	\vspace{0.2cm}
	\item [$E_{H, A}^{(i)}:$] node $i \in H$ satisfies $|\mathcal{E}(i, S_1^* \backslash H)| > |\mathcal{E}(i, S_2^*)|$.
	\vspace{0.2cm}
\end{itemize}
where $H$ is still the fixed subset of $S_1^*$ of size $\frac{n}{\gamma}$ and $\gamma = \log^3 n$. In this way, Lemmas~\ref{lemma:proof_E_1_F} and \ref{lemma:P_E_H_P_F} still hold in this case.\footnote{A slight difference is that we no longer need the event $F_H$, and therefore $E_{H}$ implies $E_1$ directly.} Furthermore, we follow the steps in \eqref{eq:bound_E_H_A_E_B} and provide a lower bound on $\mathcal{P}\left(E_{H,A}^{(i)} \bigcap E_B^{(i, 0)}\right)$ similar to Lemma~\ref{lemma:bound_P_E_H_A_B}. To this end, by using $r_1$ and $r_2$ defined in \eqref{eq:r_1_r_2_definition_v2}, we have 
	\if\arXivver1
\begin{align}
	\mathcal{P}\left(E_{H,A}^{(i)} \bigcap E_B^{(i, 0)}\right) &= \sum_{r_1}\sum_{r_2} \mathcal{P}(r_1) \mathcal{P}(r_2)	\mathcal{P}\left(E_{H,A}^{(i)} \bigcap E_B^{(i, 0)} \mid r_1, r_2\right)  \nonumber \\
	&= \sum_{r_1}\sum_{r_2} \mathcal{P}(r_1) \mathcal{P}(r_2)	\mathcal{P}\left(E_{H,A}^{(i)} \mid r_1, r_2\right) \cdot   \mathcal{P}\left(E_B^{(i, 0)} \mid r_1, r_2\right) \nonumber \\
	&= \sum_{r_1}\sum_{r_2} \mathcal{P}(r_1) \mathcal{P}(r_2) \cdot \mathbbm{1}\left(r_1 > r_2\right) \cdot M^{-r_2} \nonumber \\
	&= \sum_{r_2} \mathcal{P}(r_2) \mathcal{P}(r_1 > r_2) \cdot M^{-r_2}.
	\label{eq:bound_E_H_A_E_B_v2}
\end{align}
\else
\begin{align}
	&\mathcal{P}\left(E_{H,A}^{(i)} \bigcap E_B^{(i, 0)}\right) \nonumber \\
	&= \sum_{r_1}\sum_{r_2} \mathcal{P}(r_1) \mathcal{P}(r_2)	\mathcal{P}\left(E_{H,A}^{(i)} \bigcap E_B^{(i, 0)} \mid r_1, r_2\right)  \nonumber \\
	&= \sum_{r_1}\sum_{r_2} \mathcal{P}(r_1) \mathcal{P}(r_2)	\mathcal{P}\left(E_{H,A}^{(i)} \mid r_1, r_2\right) \cdot   \mathcal{P}\left(E_B^{(i, 0)} \mid r_1, r_2\right) \nonumber \\
	&= \sum_{r_1}\sum_{r_2} \mathcal{P}(r_1) \mathcal{P}(r_2) \cdot \mathbbm{1}\left(r_1 > r_2\right) \cdot M^{-r_2} \nonumber \\
	&= \sum_{r_2} \mathcal{P}(r_2) \mathcal{P}(r_1 > r_2) \cdot M^{-r_2}.
	\label{eq:bound_E_H_A_E_B_v2}
\end{align}
\fi
Then, similar to Lemma~\ref{lemma:bound_P_E_H_A_B}, we obtain
\begin{lemma}
	Given $p = \frac{a\log n}{n}$ and $q = \frac{b \log n}{n}$, \eqref{eq:bound_E_H_A_E_B_v2} satisfies
	\begin{equation}
		\begin{aligned}
		\log \mathcal{P}\left(E_{H,A}^{(i)} \bigcap E_B^{(i, 0)}\right) &\geq \left(-\frac{a + b}{2} + \sqrt{\frac{ab}{M}}\right) \log n  - o(\log n).
		\end{aligned}
		\label{eq:bound_P_E_H_A_B_final_v2}
	\end{equation}
	\label{lemma:bound_P_E_H_A_B_v2}
\end{lemma}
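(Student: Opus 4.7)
The strategy mirrors that of Lemma~\ref{lemma:bound_P_E_H_A_B}: rather than attempting to handle the sum in \eqref{eq:bound_E_H_A_E_B_v2} in its entirety, I would drop all but a single, carefully chosen term and lower bound
\begin{equation*}
	\mathcal{P}\left(E_{H,A}^{(i)} \bigcap E_B^{(i,0)}\right) \;\geq\; \mathcal{P}(r_2 = k^*)\cdot \mathcal{P}(r_1 > k^*)\cdot M^{-k^*}
\end{equation*}
for a suitable integer $k^*$. Write $\mu_1 := \mathbb{E}[r_1] = (1-o(1))\,a\log n/2$ and $\mu_2 := \mathbb{E}[r_2] = b\log n/2$; the correction $n/\gamma$ in the sample size of $r_1$ perturbs $\mu_1$ only by $o(1)$ since $\gamma = \log^3 n$. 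In the sparse regime $np,\,nq,\,k^* = \Theta(\log n)$ the Poisson/Stirling approximation is sharp, giving $\log \mathcal{P}(r_2 = k^*) = k^*\log(\mu_2/k^*) + k^* - \mu_2 - O(\log\log n)$, and, by retaining only the single term $\mathcal{P}(r_1 = k^* + 1)$, also $\log \mathcal{P}(r_1 > k^*) \geq k^*\log(\mu_1/k^*) + k^* - \mu_1 - O(\log\log n)$.

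Adding these estimates together with $-k^*\log M$ yields
\begin{equation*}
	\log\bigl[\mathcal{P}(r_2 = k^*)\mathcal{P}(r_1 > k^*)M^{-k^*}\bigr] \;\geq\; k^*\log\!\left(\frac{\mu_1\mu_2}{k^{*2} M}\right) + 2k^* - \mu_1 - \mu_2 - O(\log\log n).
\end{equation*}
The saddle-point choice $k^* := \lfloor\sqrt{\mu_1\mu_2/M}\rfloor = \lfloor\sqrt{ab/M}\cdot(\log n)/2\rfloor$ annihilates the logarithmic term and leaves exactly $2k^* - \mu_1 - \mu_2 = (\sqrt{ab/M} - (a+b)/2)\log n$, which is the claim modulo the permitted $o(\log n)$ slack. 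Crucially, this choice is self-consistent: the Case~2 hypothesis $Ma < b$ is precisely $\mu_1 < \mu_2/M$, which forces $k^* > \mu_1$, so $\{r_1 > k^*\}$ really is a tail event and the single-term Stirling lower bound is legitimate.

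The main technical obstacle is the tail estimate for $r_1$ in the regime where $k^* = \Theta(\log n)$ sits strictly above, but on the same order as, the mean $\mu_1$: naive Chernoff bounds recover the correct exponential rate but lose polynomial pre-factors, which would corrupt the $o(\log n)$ accounting. The single-term lower bound $\mathcal{P}(r_1 > k^*) \geq \mathcal{P}(r_1 = k^*+1)$ combined with a direct Stirling expansion circumvents this cleanly. Beyond that, the proof is routine bookkeeping—absorbing the Stirling remainders, the Binomial-to-Poisson approximation error, and the harmless $n/\gamma$ shift into the $o(\log n)$ term—entirely analogous to the treatment in the proof of Lemma~\ref{lemma:bound_P_E_H_A_B}.
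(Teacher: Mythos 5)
Your proposal is correct and follows essentially the same route as the paper's proof: both reduce the sum in \eqref{eq:bound_E_H_A_E_B_v2} to a single term via $\mathcal{P}(r_1 > r_2) \geq \mathcal{P}(r_1 = r_2 + 1)$, apply Stirling-type estimates to the binomial probabilities (your Poisson phrasing is just a cosmetic repackaging of the paper's binomial-coefficient bounds), and optimize at the same saddle point $k^* = (1+o(1))\tfrac{1}{2}\sqrt{ab/M}\,\log n$, with the $n/\gamma$ correction and polylogarithmic prefactors absorbed into the $o(\log n)$ term. Your observation that $Ma<b$ forces $k^*>\mu_1$ is a nice sanity check on tightness, though it is not needed for the validity of the one-term lower bound.
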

The proof of Lemma~\ref{lemma:bound_P_E_H_A_B_v2} is deferred to Appendix~\ref{sec:proof_bound_P_E_H_A_B_v2}, which is highly similar to the one of Lemma~\ref{lemma:bound_P_E_H_A_B}. As a result, by following the same steps as in \eqref{eq:bound_E_H_final} we complete the proof.

\section{Proof of Theorem~\ref{the:cond_upper_rot}}
\label{sec:proof_cond_upper_rot}

To begin with, we follow the same path of proving Theorem~\ref{the:info_upper_bound_1} in Section~\ref{sec:proof_clus_upper_bound} by writing out the probability that the MLE~\eqref{eq:MLE} fails to exactly recover the group elements as
	\if\arXivver1
\begin{align}
	\mathcal{P}_{e, \text{g}}(\psi_{\text{MLE}}) &= \mathcal{P}\left(\exists \bm{x} = \{\bm{\kappa}, \bm{g}\}: \text{dist}_{\text{g}}(\bm{g}, \bm{g}^*) > 0,  \;	\frac{\mathcal{P}(\bm{y}|\bm{x})}{\mathcal{P}(\bm{y}|\bm{x}^*)}  \geq 1\right) \nonumber \\
	&\overset{(a)}{\leq} \mathcal{P}\left(\exists \bm{x} = \{\bm{\kappa}, \bm{g}\}: \text{dist}_{\text{c}}(\bm{\kappa}, \bm{\kappa}^*) > 0, \; \text{dist}_{\text{g}}(\bm{g}, \bm{g}^*) > 0,  \;	\frac{\mathcal{P}(\bm{y}|\bm{x})}{\mathcal{P}(\bm{y}|\bm{x}^*)}  \geq 1\right) \nonumber \\
	&+ \mathcal{P}\left(\exists \bm{x} = \{\bm{\kappa}, \bm{g}\}: \text{dist}_{\text{c}}(\bm{\kappa}, \bm{\kappa}^*) = 0, \; \text{dist}_{\text{g}}(\bm{g}, \bm{g}^*) > 0,  \;	\frac{\mathcal{P}(\bm{y}|\bm{x})}{\mathcal{P}(\bm{y}|\bm{x}^*)}  \geq 1\right)  \nonumber \\
	&\overset{(b)}{\leq} n^{-c} + \mathcal{P}\left(\exists \bm{x} = \{\bm{\kappa}^*, \bm{g}\}:  \text{dist}_{\text{g}}(\bm{g}, \bm{g}^*) > 0,  \;	\frac{\mathcal{P}(\bm{y}|\bm{x})}{\mathcal{P}(\bm{y}|\bm{x}^*)}  \geq 1\right).
	\label{eq:bound_rot_upper_1}
\end{align}
\else
\begin{align}
	&\mathcal{P}_{e, \text{g}}(\psi_{\text{MLE}}) \nonumber \\
	&= \mathcal{P}\bigg(\exists \bm{x} = \{\bm{\kappa}, \bm{g}\}: \text{dist}_{\text{g}}(\bm{g}, \bm{g}^*) > 0,  \;	\frac{\mathcal{P}(\bm{y}|\bm{x})}{\mathcal{P}(\bm{y}|\bm{x}^*)}  \geq 1\bigg) \nonumber \\
	&\overset{(a)}{\leq} \mathcal{P}\bigg(\exists \bm{x} = \{\bm{\kappa}, \bm{g}\}: \text{dist}_{\text{c}}(\bm{\kappa}, \bm{\kappa}^*) > 0, \nonumber \\
	&\quad \quad \text{dist}_{\text{g}}(\bm{g}, \bm{g}^*) > 0,  \;	\frac{\mathcal{P}(\bm{y}|\bm{x})}{\mathcal{P}(\bm{y}|\bm{x}^*)}  \geq 1\bigg) \nonumber \\
	&+ \mathcal{P}\bigg(\exists \bm{x} = \{\bm{\kappa}, \bm{g}\}: \text{dist}_{\text{c}}(\bm{\kappa}, \bm{\kappa}^*) = 0, \nonumber \\ 
	&\quad \quad \text{dist}_{\text{g}}(\bm{g}, \bm{g}^*) > 0,  \;	\frac{\mathcal{P}(\bm{y}|\bm{x})}{\mathcal{P}(\bm{y}|\bm{x}^*)}  \geq 1\bigg)  \nonumber \\
	&\overset{(b)}{\leq} n^{-c} + \mathcal{P}\bigg(\exists \bm{x} = \{\bm{\kappa}^*, \bm{g}\}:  \text{dist}_{\text{g}}(\bm{g}, \bm{g}^*) > 0,  \nonumber \\	
	&\quad\quad \frac{\mathcal{P}(\bm{y}|\bm{x})}{\mathcal{P}(\bm{y}|\bm{x}^*)}  \geq 1\bigg).
	\label{eq:bound_rot_upper_1}
\end{align}
\fi
Here, in step $(a)$ we split it into two cases whether the community assignment $\bm{\kappa}$ is correct or not, then apply the union bound; $(b)$ uses the fact that the condition
\begin{equation*}
	\frac{a + b}{2} - (1 + o(1))\sqrt{\frac{ab}{M}} > 1 + c
\end{equation*}
is satisfied and therefore from Theorem~\ref{the:info_upper_bound_1}, $\bm{\kappa}$ is incorrect w.p.~at most $n^{-c}$. It remains to consider when $\bm{\kappa} = \bm{\kappa}^*$. In this case, the likelihood ratio in \eqref{eq:likelihood_ratio} satisfies
\begin{equation*}
	\frac{\mathcal{P}(\bm{y}|\bm{x})}{\mathcal{P}(\bm{y}|\bm{x}^*)} 
	=  \frac{\mathbbm{1}(g_{ij} = g_ig_j^{-1}; \; \forall (i,j) \in \mathcal{E}^*_{\text{inner}})}{\mathbbm{1}(g_{ij} = g_i^*(g_j^*)^{-1}; \; \forall (i,j) \in \mathcal{E}_{\text{inner}}^*)},  
\end{equation*}
where we use the fact that $r = r^*$ and $\mathcal{E}_{\text{inner}}  = \mathcal{E}_{\text{inner}}^* $. 
Therefore, recall the definition $\mathcal{E}_{\text{inner}}^* = \mathcal{E}(S_1^*) \cup \mathcal{E}(S_2^*)$, for any $\bm{x} = \{\bm{\kappa}^*, \bm{g}\}$ with a wrong hypothesis $\bm{g}$,  one can see that $\mathcal{P}(\bm{y}|\bm{x}) \geq \mathcal{P}(\bm{y}|\bm{x}^*)$ occurs only if both subgraphs formed by $S_1^*$ and $S_2^*$ are disconnected (otherwise there exists an edge $(i,j) \in \mathcal{E}_{\text{inner}}^*$ with $g_{ij} \neq g_ig_j^{-1}$). Therefore, by denoting $E_1$ and $E_2$ as the events that the subgraph of $S_1^*$ and $S_2^*$ are connected respectively, we have 
	\if\arXivver1
\begin{align}
	\mathcal{P}\left(\exists \bm{x} = \{\bm{\kappa}^*, \bm{g}\}:  \text{dist}_{\text{g}}(\bm{g}, \bm{g}^*) > 0,  \;	\frac{\mathcal{P}(\bm{y}|\bm{x})}{\mathcal{P}(\bm{y}|\bm{x}^*)}  \geq 1\right) &\leq \mathcal{P}(E_1^c \cap E_2^c) \nonumber \\
	&= 1 -  \mathcal{P}(E_1 \cap E_2) \nonumber \\
	&= 1 - \mathcal{P}(E_1) \cdot \mathcal{P}(E_2) \nonumber  \\
	&\leq 1 - (1 - n^{-c+o(1)}) \cdot (1 - n^{-c+o(1)}) \nonumber \\
	&= n^{-c + o(1)}.
	\label{eq:bound_rot_upper_2}
\end{align}
\else
\begin{align}
	&\mathcal{P}\left(\exists \bm{x} = \{\bm{\kappa}^*, \bm{g}\}:  \text{dist}_{\text{g}}(\bm{g}, \bm{g}^*) > 0,  \;	\frac{\mathcal{P}(\bm{y}|\bm{x})}{\mathcal{P}(\bm{y}|\bm{x}^*)}  \geq 1\right) \nonumber \\
	&\leq \mathcal{P}(E_1^c \cap E_2^c) \nonumber \\
	&= 1 -  \mathcal{P}(E_1 \cap E_2) \nonumber \\
	&= 1 - \mathcal{P}(E_1) \cdot \mathcal{P}(E_2) \nonumber  \\
	&\leq 1 - (1 - n^{-c+o(1)}) \cdot (1 - n^{-c+o(1)}) \nonumber \\
	&= n^{-c + o(1)}.
	\label{eq:bound_rot_upper_2}
\end{align}
\fi
Here, we apply Lemma~\ref{lemma:connectivity} which suggests 
\begin{equation*}
	\mathcal{P}(E_1) = \mathcal{P}(E_2) \geq 1 - n^{-c + o(1)}, 
\end{equation*}
given the condition $\frac{a}{2} > 1 + c$ is satisfied. Plugging  \eqref{eq:bound_rot_upper_2} into \eqref{eq:bound_rot_upper_1} completes the proof.

\section{Proof of Theorem~\ref{the:cond_lower_rot}}
\label{sec:proof_cond_lower_rot}
We will show the necessity of the two conditions for the exact recovery of group elements separately. First, when $\frac{a}{2} < 1$, Lemma~\ref{lemma:connectivity} suggests that the two subgraphs formed by $S_1^*$ and $S_2^*$ are disconnected with prob.~$1 - o(1)$. As a result, when either one is disconnected, there exist at least two components in the subgraph, and furthermore, there is no way to synchronize the group elements between them but only guess and randomly draw from $\mathcal{G}_M$.  Therefore, in this case, the probability of having a wrong $\bm{g}$ estimated is at least $1 - M^{-1} \geq 1/2$, where the equality holds when $M = 2$. 

Secondly, when the condition  
\begin{equation*}
	\frac{a + b}{2} -\sqrt{\frac{ab}{M}} < 1, 
\end{equation*}
is satisfied, Theorem~\ref{the:cond_lower_clus} suggests that the estimated community assignment $\bm{\kappa}$ is inaccurate with probability at least $3/5$. When this occurs, there exists at least one pair of wrong classified vertices $(i,j)$ such that $i \in S_1^*$ and $j \in S_2^*$ but $i \in S_2$ and $j \in S_1$. As a result, the estimated group elements on $i$ and $j$, denoted by $g_i$ and $g_j$, are completely determined by those edges across $S_1^*$ and $S_2^*$, whose group transformations are randomly drawn from $\mathcal{G}_M$. Therefore, the probability of having a wrong estimated $g_i$ or $g_j$  is at least $1 - M^{-2}$. Combining this with the event of having a wrong community assignment $\bm{\kappa}$ yields the total probability of having the wrong group elements $\bm{g}$ at least $\frac{3}{5} \cdot (1 - M^{-2}) \geq 0.45$, where the equality holds when $M = 2$. This completes the proof.

\section*{Acknowledgments}
ZZ and YF acknowledge the support from NSF grant DMS-1854791 and Alfred P. Sloan foundation. We thank Yuehaw Khoo for valuable discussions.

\bibliographystyle{plain}
\bibliography{ref}

\appendix

\section{The giant connected component of Erdős–Rényi  graphs}
\label{sec:erdos_renyi_intro}
\textcolor{black}{In this section, we briefly review several classical properties of the Erdős–Rényi graph~\cite{erdHos1960evolution} that support our theoretical framework. While most of the results presented are not original, we include detailed proofs for completeness. For readers interested in a comprehensive treatment of random graph theory, we refer to the excellent book by Frieze and Karoński~\cite{frieze2016introduction}.}

To begin with, given a network of size $n$, the Erdős–Rényi model denoted by $G(n ,p)$ 
\footnote{The model originally proposed by Erdős and Rényi is slightly different: instead of connecting each pair of vertices independently, it generates the graph by randomly assigning exactly $m$ edges to the network, and the resulting model is denoted by $G(n, m)$. However, the two models $G(n,m)$ and $G(n, p)$ are closely related, see~\cite[Chapter 1]{frieze2016introduction}. } 
generates a random graph (denoted by $\mathscr{G}_n$) by connecting each pair of vertices independently with some probability $p$. In this section, our study focuses on the size of the largest connected component in $\mathscr{G}_n$, denoted by $Z_n$. 

First, when $p = \frac{a\log n}{n}$ follows the scaling $\Theta\left(\frac{\log n}{n}\right)$ for some constant $a > 0$, it is well-known that $a = 1$ serves as the sharp threshold for $\mathscr{G}_n$ being connected. Formally, we have:
\begin{lemma}[Connectivity]
For any $c > 0$,  
\begin{align*}
	\mathcal{P}( \text{ $\mathscr{G}_n$ is connected }) \geq 1 - n^{-c + o(1)}, &\quad \text{when } a > 1+c, \\
	\mathcal{P}( \text{ $\mathscr{G}_n$ is disconnected }) \geq 1 - n^{-c + o(1)}, &\quad \text{when } a < 1-c, 
\end{align*}
where the second one also assumes $c < 1$.
\label{lemma:connectivity}
\end{lemma}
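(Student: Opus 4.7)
The classical heuristic at the scale $p = a\log n/n$ is that the dominant obstruction to connectivity is the presence of isolated vertices, so both directions of the lemma reduce to a sharp analysis of the number $X_1$ of isolated vertices in $\mathscr{G}_n$, augmented (only in the first direction) by a separate bound ruling out small non-singleton components. The plan is to first record $\mathbb{E}[X_1] = n(1-p)^{n-1} = n^{1-a+o(1)}$, using $(1-p)^{n-1} = \exp(-(n-1)p + O(np^2)) = n^{-a+o(1)}$.

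For the direction $a > 1+c$, Markov's inequality gives $\mathcal{P}(X_1 \geq 1) \leq \mathbb{E}[X_1] = n^{-c+o(1)}$. To upgrade ``no isolated vertex'' to ``$\mathscr{G}_n$ is connected,'' I would bound, for each $2 \leq k \leq n/2$, the probability that some component has exactly $k$ vertices. Any such component contains a spanning tree, so by Cayley's formula the number of candidate (vertex set, spanning tree) pairs is at most $\binom{n}{k} k^{k-2}$, and each such configuration is realized with probability at most $p^{k-1}(1-p)^{k(n-k)}$ (all $k-1$ tree edges present, all $k(n-k)$ boundary edges absent). Splitting the resulting sum at some $k_0 = \Theta(\log n / \log\log n)$ and using Stirling together with $(1-p)^{k(n-k)} = \exp(-a k(n-k)\log n / n \cdot (1+o(1)))$ shows the total contribution is $n^{-c+o(1)}$. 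A union bound with the isolated-vertex estimate then yields the claim.

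For the direction $a < 1-c$ with $c < 1$, one has $\mathbb{E}[X_1] = n^{1-a+o(1)} \geq n^{c+o(1)}$, and I would apply the second moment method. A direct computation gives
\[
\mathbb{E}[X_1^2] = \mathbb{E}[X_1] + n(n-1)(1-p)^{2n-3},
\]
and hence
\[
\frac{\mathrm{Var}(X_1)}{\mathbb{E}[X_1]^2} = \frac{1}{\mathbb{E}[X_1]} + \frac{np - 1}{n(1-p)} = n^{-c+o(1)} + O\!\left(\frac{\log n}{n}\right) = n^{-c+o(1)},
\]
where the bound $c < 1$ absorbs the second term. Since $\{X_1 = 0\} \subseteq \{|X_1 - \mathbb{E}[X_1]| \geq \mathbb{E}[X_1]\}$, Chebyshev's inequality yields $\mathcal{P}(X_1 = 0) \leq n^{-c+o(1)}$, and the event $X_1 \geq 1$ already forces $\mathscr{G}_n$ to be disconnected.

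The main technical obstacle is the small-components bound in the first direction: one has to control all sizes $2 \leq k \leq n/2$ simultaneously, and the resulting sum is driven by two different regimes (very small $k$, where the combinatorial factor $\binom{n}{k}k^{k-2}$ dominates, and $k$ of order $n$, where the exponential factor $(1-p)^{k(n-k)}$ takes over), so a careful split and a slightly delicate estimate near $k = n/2$ are required. The second-moment step in the other direction is by comparison routine once $\mathbb{E}[X_1]$ has been pinned down.
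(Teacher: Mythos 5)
Your proposal matches the paper's own argument in both directions: for $a>1+c$ the paper likewise union-bounds over components of every size $1\le k\le n/2$ via Cayley's $k^{k-2}$ spanning-tree count together with $p^{k-1}$ and the isolation factor, then applies Markov and splits the sum into small and large $k$; for $a<1-c$ it runs the identical second-moment (Chebyshev) computation on the number of isolated vertices, with the same variance ratio and the same use of $c<1$. Your only cosmetic deviations — singling out $k=1$ before the spanning-tree bound and splitting at $k_0=\Theta(\log n/\log\log n)$ rather than at a fixed constant — do not change the argument, so the proposal is correct and essentially the same as the paper's proof.
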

\textcolor{black}{For completeness, a proof of Lemma~\ref{lemma:connectivity} is given in Appendix~\ref{sec:proof_lemma_connectivity}, which is basically built from the analysis provided in \cite[Theorem~4.1]{frieze2016introduction}.} Lemma~\ref{lemma:connectivity} indicates that the largest connected component will be the whole network w.h.p.~when $a > 1$. Furthermore, when $a < 1$, it can be shown that the largest connected component will be giant w.h.p.~such that 
\begin{equation*}
	\mathcal{P}(Z_n > (1 - o(1))n) = 1 - o(1).
	\label{eq:Z_n_bound_1}
\end{equation*}
To be concrete, \textcolor{black}{by following standard techniques commonly used in the study of random graphs (e.g., \cite[Theorem 2.14]{frieze2016introduction})\footnote{\cite[Theorem 2.14]{frieze2016introduction} focuses on the regime where $p = \Theta(1/n)$, whereas our interest lies in the denser regime $p = \Theta(\log n / n)$. Therefore, we cannot directly apply \cite[Theorem 2.14]{frieze2016introduction}, but we adopt the same proof techniques to derive the bound in \eqref{eq:Z_n_bound_3_main}. In particular, we first show that there is no component of intermediate size (i.e., between $\beta_1 \log n$ and $\beta_2 n$ for some constants $\beta_1, \beta_2 > 0$), and then we upper bound the total number of vertices in small components (those of size less than $\beta_1 \log n$.}, we can derive the following result regarding $Z_n$,}
\begin{equation}
	\mathcal{P}\left(Z_n > \left(1 -  n^{-a + \epsilon + o(1)}\right)n\right) > 1 - n^{-\epsilon}, 
	\label{eq:Z_n_bound_3}
\end{equation}
which holds for any $\epsilon < a$. \eqref{eq:Z_n_bound_3} suggests that a giant connected component exists with a probability at least $n^{-\Theta(1)}$. 

Unfortunately, \eqref{eq:Z_n_bound_3} is too loose to be applied for our analysis in Section~\ref{sec:proof_clus_upper_bound} and a tighter result is necessary. For example, we hope
\begin{equation}
	\mathcal{P}(Z_n > (1 - o(1))n) > 1 - n^{-\omega(n)},
	\label{eq:bound_z_n_desire}
\end{equation}
where $\omega(n)$ stands for some slowly growing function such that $\omega(n) \rightarrow \infty$ as $n \rightarrow \infty$. To this end, we derive the following results that meet the requirement in \eqref{eq:bound_z_n_desire}:
\begin{lemma}[No medium component when disconnected]
Given the Erdős–Rényi model $G(n, p)$ with $p = \frac{a\log n}{n}$ for some $a \in (0, 1)$, the largest connected component $Z_n$ satisfies
\begin{equation}
	\mathcal{P}\left(\beta_1 \log n \leq Z_n \leq n - \frac{\beta_2 n}{\log \log n}\right) \leq n^{-\Theta(\log n)}, 
	\label{eq:bound_Z_n_1}
\end{equation}
for any constants $\beta_1, \beta_2 > 0$. 
\label{the:bound_Z_n_1}
\end{lemma}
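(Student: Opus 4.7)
\textbf{Proof plan for Lemma~\ref{the:bound_Z_n_1}.}

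The plan is to bound the probability by a union bound over the location and size of a hypothetical ``medium-sized'' connected component. If $Z_n \in [\beta_1 \log n, \, n - \beta_2 n /\log\log n]$, then the largest connected component $C$ is a set of vertices whose cardinality $k$ lies in this range, and by definition of a component, no edge of $\mathscr{G}_n$ crosses the cut $(C, V\setminus C)$. Therefore
\begin{equation*}
\mathcal{P}\!\left(\beta_1 \log n \leq Z_n \leq n - \tfrac{\beta_2 n}{\log\log n}\right) \leq \sum_{k = \lceil\beta_1 \log n\rceil}^{\lfloor n - \beta_2 n/\log\log n\rfloor} \binom{n}{k}\, \mathcal{P}\!\left(G[S_k] \text{ connected}\right)\cdot (1-p)^{k(n-k)},
\end{equation*}
where $S_k$ is any fixed vertex set of size $k$. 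The two factors on the right are independent because they involve disjoint edge sets, and by Cayley's formula combined with a union bound over spanning trees, $\mathcal{P}(G[S_k] \text{ connected}) \leq k^{k-2}p^{k-1}$.

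Next I will plug in $p = a\log n/n$ and estimate each summand. Using $\binom{n}{k} \leq (en/k)^k$ and $(1-p)^{k(n-k)} \leq \exp(-pk(n-k))$, a direct computation reduces the logarithm of the $k$-th term to
\begin{equation*}
k\bigl[\,\log(ea\log n) - a(1 - k/n)\log n\,\bigr] \;-\; \log(p k^2) \;+\; O(1).
\end{equation*}
I will then split the range of $k$ into two regimes and handle each. For \emph{small} $k$, say $\beta_1 \log n \leq k \leq n/2$, the bracket is at most $-\tfrac{a}{2}\log n$ for large $n$, so the $k$-th term is bounded by $n^{-(a\beta_1/2 + o(1))\log n}$; summing the (at most $n$) such terms preserves the $n^{-\Theta(\log n)}$ bound. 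For \emph{large} $k$, say $n/2 \leq k \leq n - \beta_2 n/\log\log n$, writing $m = n - k \geq \beta_2 n/\log\log n$ and using $\binom{n}{k} = \binom{n}{m}$ gives a combinatorial factor of at most $m\log(en/m) = O(n\log\log\log n/\log\log n)$, while the cut penalty contributes $-pkm = -\Theta(n\log n/\log\log n)$, so the bracket wins by a factor of $\log n/(\log\log\log n \cdot \log\log n)$ and the corresponding terms are crushed to $\exp(-\Omega(n\log n/\log\log n))$, which is far smaller than $n^{-\Theta(\log n)}$.

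The main obstacle will be the delicate balance near the upper endpoint $k \approx n - \beta_2 n/\log\log n$, since there the cut has only $m = \beta_2 n/\log\log n$ vertices on the other side and the ``connectedness'' factor $k^{k-2}p^{k-1}$ can be as large as $(\log n)^{\Theta(n)}$. The key arithmetic check is that the cut penalty $p k m \sim a\beta_2 n\log n/\log\log n$ still dominates $k \log(k p) + \log\binom{n}{m}$ by a factor that diverges with $n$, which is exactly the role of the shift $n - \beta_2 n/\log\log n$ rather than, say, $n - o(n)$; this is where the specific $1/\log\log n$ scaling in the statement gets used. Once both regimes are bounded, summing over the at most $n$ values of $k$ only costs an extra $\log n$ in the exponent and yields the claimed $n^{-\Theta(\log n)}$.
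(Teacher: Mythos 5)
Your proposal is correct and follows essentially the same route as the paper: a first-moment/union bound over the size and location of a medium component, the Cayley spanning-tree bound $k^{k-2}p^{k-1}$ for connectivity times the isolation factor $(1-p)^{k(n-k)}$, and a two-regime split in $k$ where small $k$ gives $n^{-\Theta(\log n)}$ and large $k$ is crushed by the cut penalty $pk(n-k)=\Theta(n\log n/\log\log n)$. The paper merely phrases the same computation as $\sum_k \mathbb{E}[r_k]$ via Markov's inequality and splits at $\beta n$ instead of $n/2$, which is immaterial.
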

\begin{lemma}[No small component when disconnected]
Given the Erdős–Rényi model $G(n, p)$ with $p = \frac{a\log n}{n}$ for some $a \in (0, 1)$, the largest connected component $Z_n$ satisfies
	\begin{equation}
		\mathcal{P}\left(Z_n \leq \beta \log n \right) \leq n^{-\frac{an(1-o(1))}{2}}, 
		\label{eq:bound_Z_n_2}
	\end{equation}
	for any constant $\beta > 0$. 
	\label{the:bound_Z_n_2}
\end{lemma}

Due to the complexity, the proofs of Lemmas~\ref{the:bound_Z_n_1} and \ref{the:bound_Z_n_2} are deferred to Appendices~\ref{sec:proof_bound_Z_n_1} and \ref{sec:proof_bound_Z_n_2}, respectively. Remarkably, the proof of Lemma~\ref{the:bound_Z_n_2} is original and involves a significant amount of calculation and optimization, because traditional approaches based on the moment method do not work in this case.

Given the above, combining Lemmas~\ref{the:bound_Z_n_1} and \ref{the:bound_Z_n_2} yields a desired result like \eqref{eq:bound_z_n_desire} as follows:

\begin{theorem}[Giant connected component%
]
	Given the Erdős–Rényi model $G(n, p)$ with $p = \frac{a\log n}{n}$ for some $a \in (0, 1)$, the largest connected component $Z_n$ satisfies
	\begin{equation}
		\mathcal{P}\left(Z_n >  n - \frac{\beta n}{\log \log n}\right) > 1 - n^{- \Theta(\log n)}, 
	\end{equation}
	for any constant $\beta > 0$.
	\label{the:bound_Z_n_3}
\end{theorem}

\begin{proof}[Proof of Theorem~\ref{the:bound_Z_n_3}]
Plugging in \eqref{eq:bound_Z_n_1} and \eqref{eq:bound_Z_n_2} as 
	\if\arXivver1
\begin{align*}
	\mathcal{P}\left(Z_n >  n - \frac{\beta n}{\log \log n}\right)  &= 1 - \mathcal{P}\left(Z_n \leq n - \frac{\beta n}{\log \log n}\right) \\
	&= 1 - 	\mathcal{P}\left(\beta \log n \leq Z_n \leq n - \frac{\beta n}{\log \log n}\right)  -  \mathcal{P}\left(Z_n \leq \beta \log n \right) \\
	&> 1 - n^{-\Theta(\log n)}.
\end{align*}
\else
\begin{align*}
	&\mathcal{P}\left(Z_n >  n - \frac{\beta n}{\log \log n}\right) \nonumber \\
	&= 1 - \mathcal{P}\left(Z_n \leq n - \frac{\beta n}{\log \log n}\right) \\
	&= 1 - 	\mathcal{P}\left(\beta \log n \leq Z_n \leq n - \frac{\beta n}{\log \log n}\right)  \\
	&\quad -  \mathcal{P}\left(Z_n \leq \beta \log n \right) \\
	&> 1 - n^{-\Theta(\log n)}.
\end{align*}
\fi
This completes the proof.
\end{proof}

\subsection{Proof of Lemma~\ref{lemma:connectivity}}
\label{sec:proof_lemma_connectivity}
We first consider the case when $a > 1 + c$ for any $c > 0$. Given the graph $\mathscr{G}_n$, let $r_k$ denote the number of connected components of size $k$. The proof strategy is to sharply bound the expectation $\mathbb{E}[r_k]$ for $1 \leq k \leq n/2$ then apply Markov's inequality. That is 
\begin{align}
	\mathcal{P}(\text{$\mathscr{G}_n$ is connected}) &= \mathcal{P}(r_k = 0, k = 1, \ldots, \floor*{n/2}) \nonumber \\
	&= \mathcal{P}\left(\sum_{k = 1}^{\floor*{n/2}} r_k = 0\right) \nonumber \\
	&= 1 - \mathcal{P}\left(\sum_{k = 1}^{\floor*{n/2}} r_k \geq 1\right) \nonumber \\
	&\geq 1 - \sum_{k = 1}^{\floor*{n/2}}\mathbb{E}\left[r_k\right],
	\label{eq:bound_connected_r_k}
\end{align}
where the last step applies Markov's inequality. To determine $\mathbb{E}[r_k]$, let $\mathscr{G}_{k}$ denotes any subgraph of $\mathscr{G}_n$ with size $k$,  and let $\mathscr{G}_{k}^{c}$ be the subgraph that contains the remaining vertices, then we have
\if\arXivver1
\begin{align}
	\mathbb{E}[r_k] &\overset{(a)}{\leq}\binom{n}{k} \cdot \mathcal{P}\left( \mathscr{G}_k \text{ is connected and isolated to $\mathscr{G}_{k}^{c}$} \right) \nonumber \\
	&\overset{(b)}{\leq} \binom{n}{k} \cdot \mathcal{P}\left( \text{A spanning tree of $\mathscr{G}_k$ exists and is isolated to $\mathscr{G}_{k}^{c}$} \right) \nonumber \\
	&\overset{(c)}{\leq}  \binom{n}{k} \cdot k^{k-2} \cdot \mathcal{P}\left( \text{The spanning tree exists and is isolated to $\mathscr{G}_{k}^{c}$} \right) \nonumber \\
	&\leq \binom{n}{k} \cdot k^{k-2} \cdot p^{k-1} \cdot (1-p)^{n-k}.
	\label{eq:bound_X_k_exp}
\end{align}
\else
\begin{align}
	\mathbb{E}[r_k] &\overset{(a)}{\leq} {n \choose k} \cdot \mathcal{P}\left( \mathscr{G}_k \text{ is connected and isolated to $\mathscr{G}_{k}^{c}$} \right) \nonumber \\
	&\overset{(b)}{\leq} {n \choose k} \cdot \mathcal{P}(\text{A spanning tree of $\mathscr{G}_k$ exists} \nonumber \\
	&\quad\quad \text{and is isolated to $\mathscr{G}_{k}^{c}$}) \nonumber \\
	&\overset{(c)}{\leq}  {n \choose k} \cdot k^{k-2} \cdot \mathcal{P}( \text{The spanning tree exists} \nonumber \\
	&\quad\quad \text{and is isolated to $\mathscr{G}_{k}^{c}$} ) \nonumber \\
	&\leq  {n \choose k} \cdot k^{k-2} \cdot p^{k-1} \cdot (1-p)^{n-k}.
	\label{eq:bound_X_k_exp}
\end{align}
\fi
Here, step $(a)$ holds by applying the union bound over all subgraphs of size $k$; $(b)$ comes from the fact that $\mathscr{G}_k$ is connected only if a spanning tree of $\mathscr{G}_k$ exists; $(c)$ uses Cayley's formula that there are $k^{k-2}$ ways of choosing a tree among $\mathscr{G}_k$, and then applies the union bound over them. Now, by plugging $p = \frac{a\log n}{n} $ into \eqref{eq:bound_X_k_exp} we obtain
\begin{align}
	\mathbb{E}[r_k] &\leq \left(\frac{en}{k}\right)^k \cdot k^{k-2} \cdot \left(\frac{a\log n}{n}\right)^{k-1} \cdot e^{-\frac{a\log n}{n} \cdot k(n-k)} \nonumber \\
	&= \left(\frac{e^k}{k^2}\right) \cdot (a\log n)^{k-1} \cdot n^{-ak\left(1 - \frac{k}{n}\right) + 1}, 
	\label{eq:bound_X_k_exp_2}
\end{align}
where the first step uses $\binom{n}{k} \leq \left(\frac{en}{k}\right)^{k}$ and $(1-p) \leq e^{-p}$.
To proceed, notice that when $a > 1 + c$, \eqref{eq:bound_X_k_exp_2} satisfies 
\begin{equation*}
	\mathbb{E}[r_k] \leq n^{-(1 + c)k\left(1 - \frac{k}{n}\right)(1 -o(1)) + 1},  
\end{equation*}
which holds for any $1 \leq k \leq n/2$. Furthermore, when $k < k_0$ for some large constant $k_0$, say $k_0 = 10$, it satisfies
\begin{equation*}
	\mathbb{E}[r_k] \leq n^{-(1 + c)(k-1)\left(1 - o(1)\right) -c + o(1)} \leq n^{-c + o(1)},
\end{equation*}
and when $k \geq 10$ and $k \leq n/2$, we have
\begin{equation*}
	\mathbb{E}[r_k] \leq n^{-(1+c) \cdot \frac{k(1 - o(1))}{2} + 1} \leq n^{-5(1 + c)(1 - o(1)) + 1} \leq n^{-c + 1}.
\end{equation*}
Given the above, \eqref{eq:bound_connected_r_k} satisfies
\begin{align*}
	\mathcal{P}(\text{$\mathscr{G}_n$ is connected}) &\geq 1 - \sum_{k = 1}^9 \mathbb{E}[r_k] - \sum_{k = 10}^{\floor*{n/2}}\mathbb{E}[r_k] \\
	&\geq 1 - 9 \cdot n^{-c + o(1)} - \frac{n}{2} \cdot n^{-c + 1} \\
	&\geq 1 - n^{-c + o(1)},
\end{align*}
which completes the proof for $a > 1+c$.

\vspace{0.2cm}
Now, we consider when $a < 1 - c$ for any $c \in (0, 1)$. In this case, we have 
\begin{equation}
	\begin{aligned}
		\mathcal{P}(\text{$\mathscr{G}_n$ is disconnected}) &\geq \mathcal{P}(r_1 \geq 0) \\
	&= 1 - \mathcal{P}(r_1 = 0).
	\end{aligned}
	\label{eq:bound_disconnect}
\end{equation}
Furthermore, $\mathcal{P}(r_1 = 0)$ can be upper bounded by 
	\if\arXivver1
\begin{align}
	\mathcal{P}(r_1 = 0) &= \mathcal{P}(r_1 - \mathbb{E}[r_1] = -\mathbb{E}[r_1]) \leq \mathcal{P}\left((r_1 - \mathbb{E}[r_1])^2 = \mathbb{E}^2[r_1]\right)\nonumber \\
	&\leq \mathcal{P}\left((r_1 - \mathbb{E}[r_1])^2 \geq \mathbb{E}^2[r_1]\right) \leq \frac{\text{Var}(r_1)}{\mathbb{E}^2[r_1]},
	\label{eq:bound_r_1_0}
\end{align}
\else
\begin{align}
	\mathcal{P}(r_1 = 0) &= \mathcal{P}(r_1 - \mathbb{E}[r_1] = -\mathbb{E}[r_1]) \nonumber \\
	&\leq \mathcal{P}\left((r_1 - \mathbb{E}[r_1])^2 = \mathbb{E}^2[r_1]\right)\nonumber \\
	&\leq \mathcal{P}\left((r_1 - \mathbb{E}[r_1])^2 \geq \mathbb{E}^2[r_1]\right) \nonumber \\
	&\leq \frac{\text{Var}(r_1)}{\mathbb{E}^2[r_1]}
	\label{eq:bound_r_1_0}
\end{align}
\fi
where the last step comes from Markov's inequality. Our next job is to determine or bound $\mathbb{E}^2[r_1]$ and $\text{Var}(r_1)$. To this end, for each vertex $i$ in $\mathscr{G}_n$,  let us define the binary random variable $x_i$ as
\begin{equation*}
	x_i = \begin{cases}
	1, &\quad \text{$i$ is isolated}\\
	0, &\quad \text{otherwise},
	\end{cases}
\end{equation*}
then immediately we have $r_1 = \sum_{i = 1}^n x_i$ and
	\if\arXivver1
\begin{equation*}
	\mathbb{E}[r_1] = \sum_{i = 1}^n \mathbb{E}[x_i] = n(1-p)^{n-1} = n \cdot e^{-p(n-1)(1-o(1))}  = n \cdot n^{a(1 - o(1))} = n^{c - o(1)},
\end{equation*}
\else
\begin{equation*}
	\begin{aligned}
		\mathbb{E}[r_1] &= \sum_{i = 1}^n \mathbb{E}[x_i] = n(1-p)^{n-1} \\
		&= n \cdot e^{-p(n-1)(1-o(1))}  = n \cdot n^{a(1 - o(1))} \\
		&= n^{c - o(1)}, 
	\end{aligned}
\end{equation*}
\fi
where we plugged in $p = a\log n /n $ and $a = 1+c$. Also, 
	\if\arXivver1
\begin{equation*}
	\mathbb{E}[r_1^2] = \sum_{i = 1}^n \mathbb{E}\left[x_i^2\right]  + \sum_{i}\sum_{j \neq i} \mathbb{E}[x_i x_j] = \mathbb{E}[r_1] + n(n-1) \cdot (1-p)^{2(n-2) + 1}.
\end{equation*} 
\else
\begin{equation*}
	\begin{aligned}
		\mathbb{E}[r_1^2] &= \sum_{i = 1}^n \mathbb{E}\left[x_i^2\right]  + \sum_{i}\sum_{j \neq i} \mathbb{E}[x_i x_j] \\
		&= \mathbb{E}[r_1] + n(n-1) \cdot (1-p)^{2(n-2) + 1}.
	\end{aligned}
\end{equation*} 
\fi
Given the above, \eqref{eq:bound_r_1_0} satisfies
	\if\arXivver1
\begin{align*}
	\mathbb{P}(r_1 = 0) &\leq \frac{\mathbb{E}[r_1^2] - \mathbb{E}^2[r_1]}{\mathbb{E}^2[r_1]} \leq \frac{\mathbb{E}[r_1^2]}{\mathbb{E}^2[r_1]} - 1 \leq \frac{1}{\mathbb{E}[r_1]} + \frac{n(n-1) \cdot (1-p)^{2(n-2) + 1}}{n^2 \cdot (1 - p)^{2(n-1)}} - 1 \\
	&= n^{-c + o(1)} + \left(1 - n^{-1}\right) \cdot (1 - p)-1 = n^{-c + o(1)}.
\end{align*}
\else
\begin{align*}
	\mathbb{P}(r_1 = 0) &\leq \frac{\mathbb{E}[r_1^2] - \mathbb{E}^2[r_1]}{\mathbb{E}^2[r_1]} \leq \frac{\mathbb{E}[r_1^2]}{\mathbb{E}^2[r_1]} - 1 \\
	&\leq \frac{1}{\mathbb{E}[r_1]} + \frac{n(n-1) \cdot (1-p)^{2(n-2) + 1}}{n^2 \cdot (1 - p)^{2(n-1)}} - 1 \\
	&= n^{-c + o(1)} + \left(1 - n^{-1}\right) \cdot (1 - p)-1 \\
	&= n^{-c + o(1)}.
\end{align*}
\fi
Plugging this back into \eqref{eq:bound_disconnect} completes the proof.

\subsection{Proof of Lemma~\ref{the:bound_Z_n_1}}
\label{sec:proof_bound_Z_n_1}
Again, given the graph $\mathscr{G}_n$, we denote $r_k$ as the number of connected component with size $k$, then we have
\if\arXivver1
\begin{align}
\mathcal{P}\left(\beta_1 \log n \leq Z_n \leq n - \frac{\beta_2 n}{\log \log n}\right) &= \sum_{\beta_1 \log n \leq k \leq n - \frac{\beta_2 n}{\log \log n}} \mathcal{P}(Z_n = k) \nonumber \\
&\leq \sum_{\beta_1 \log n \leq k \leq n - \frac{\beta_2 n}{\log \log n}} \mathcal{P}(r_k \geq 1) \nonumber\\
&\leq \sum_{\beta_1 \log n \leq k \leq n - \frac{\beta_2 n}{\log \log n}} \mathbb{E}[r_k], 
\label{eq:bound_z_n_middle}
\end{align}
\else
\begin{align}
	&\mathcal{P}\left(\beta_1 \log n \leq Z_n \leq n - \frac{\beta_2 n}{\log \log n}\right) \nonumber \\
	&= \sum_{\beta_1 \log n \leq k \leq n - \frac{\beta_2 n}{\log \log n}} \mathcal{P}(Z_n = k) \nonumber \\
	&\leq \sum_{\beta_1 \log n \leq k \leq n - \frac{\beta_2 n}{\log \log n}} \mathcal{P}(r_k \geq 1) \nonumber\\
	&\leq \sum_{\beta_1 \log n \leq k \leq n - \frac{\beta_2 n}{\log \log n}} \mathbb{E}[r_k],
	\label{eq:bound_z_n_middle}
\end{align}
\fi
where the last step holds by Markov's inequality. Now, recall the result obtained in \eqref{eq:bound_X_k_exp_2} that 
\begin{align*}
	\mathbb{E}[r_k] = \left(\frac{e^k}{k^2}\right) \cdot (a\log n)^{k-1} \cdot n^{-ak\left(1 - \frac{k}{n}\right) + 1},
\end{align*}
by taking the logarithm we obtain
	\if\arXivver1 
\begin{equation}
	\log \mathbb{E}[r_k] \leq -\left[a k \left(1 - \frac{k}{n}\right) - 1\right] \log n + (k-1) \log\log n + O(k).
	\label{eq:bound_X_k_exp_3}
\end{equation}
\else
\begin{equation}
	\begin{aligned}
		\log \mathbb{E}[r_k] &\leq -\left[a k \left(1 - \frac{k}{n}\right) - 1\right] \log n \\
		&\quad+ (k-1) \log\log n + O(k).
	\end{aligned}
	\label{eq:bound_X_k_exp_3}
\end{equation}
\fi
To proceed, on the one hand, consider any $k$ that satisfies $\beta_1 \log n \leq k \leq \beta n$ for some constant $\beta > 0$, then \eqref{eq:bound_X_k_exp_3} satisfies
\begin{align*}
	\log \mathbb{E}[r_k] &\leq -ak(1-\beta)\log n  + (k-1)\log\log n + O(k) \\
	&\leq -(1-o(1)) \cdot a\beta_1(1-\beta)\log^2 n\\
	&\leq -\Theta(\log^2n),
\end{align*}
which further implies $\mathbb{E}[r_k] \leq n^{-\Theta(\log n)}$. On the other hand, let $\beta n \leq k \leq n - \frac{\beta_2n}{\log\log n}$, similarly we have 
\begin{align*}
\log \mathbb{E}[r_k] &\leq -a \cdot \beta n \cdot \frac{\beta_2}{\log\log n} \cdot \log n + n \log\log n + O(n)\\
&\leq -\Theta\left(\frac{n\log n}{\log\log n}\right),
\end{align*}
which implies $\mathbb{E}[r_k] \leq n^{-\Theta\left(\frac{n}{\log\log n}\right)}$. Finally, plugging the bounds on $\mathbb{E}[r_k]$ into \eqref{eq:bound_z_n_middle} yields
	\if\arXivver1
\begin{align*}
	\mathcal{P}\left(\beta_1 \log n \leq Z_n \leq n - \frac{\beta_2 n}{\log \log n}\right) \leq n \cdot n^{-\Theta(\log n)} = n^{-\Theta(\log n)}, 
\end{align*}
\else
\begin{align*}
	\mathcal{P}\left(\beta_1 \log n \leq Z_n \leq n - \frac{\beta_2 n}{\log \log n}\right) &\leq n \cdot n^{-\Theta(\log n)} \\
	&= n^{-\Theta(\log n)}, 
\end{align*}
\fi
which completes the proof.

\subsection{Proof of Lemma~\ref{the:bound_Z_n_2}}
\label{sec:proof_bound_Z_n_2}

First of all, the technique used for proving Theorem~\ref{the:bound_Z_n_1} does not apply in this case, since it can be shown that $\mathbb{E}[r_k] \geq 1$ when $k$ is small and then the upper bound by Markov's inequality  \eqref{eq:bound_X_k_exp} becomes trivial. Therefore, we have to upper bound $\mathcal{P}(Z_n \leq \beta \log n)$ more delicately. To this end, given the graph $\mathscr{G}_n$, let $r_k$ denotes the number of connected components of size $k$, and $k_{\text{m}} := \beta\log n$ as the largest size\footnote{Here, we assume $\beta\log n$ is an integer, otherwise round it to the nearest one.}.  Then, one can observe the following three constraints on $r_k$ that always hold:
\begin{equation}
\begin{aligned}
&(a): \quad \sum_{l = 1}^{k_\text{m}} r_k  \geq  \ceil *{\frac{n}{k_\text{m}}}, \\
&(b): \quad \sum_{k = 1}^{k_\text{m}} k \cdot r_k = n, \\
&(c): \quad r_k \geq 0, \quad k = 1, \ldots, k_\text{m}.
\end{aligned}
\label{eq:constraints_r_k}
\end{equation}
Next, in order to determine $\mathcal{P}(k \leq \beta \log n)$ by using $r_k$, for each feasible combination $\bm{r} := \{r_{k}\}_{k = 1}^{k_\text{m}}$, the following two questions need to be answered:
\begin{enumerate}[leftmargin=25pt]
	\vspace{0.1cm}
	\item [-] \textit{Q1}: what is the number of arrangements on $\{r_{k}\}_{k = 1}^{k_\text{m}}$?
	\vspace{0.1cm}
	\item [-] \textit{Q2}: what is the probability of each arrangement occurs?
	\vspace{0.1cm}
\end{enumerate}

\if\arXivver1
\vspace{0.2cm}
\paragraph{Answer for Q1.} 
\else
\subsubsection{Answer for Q1}
\fi

Let $m_{1, \bm{r}}$ denotes the number of arrangement, then we claim 
\begin{align}
m_{1, \bm{r}} &= \binom{n}{r_1, 2r_2, \ldots, k_\text{m}r_\text{max}} \cdot \prod_{k = 1}^{k_\text{m}} \left[ \binom{kr_k}{k, k, \ldots, k}\cdot \frac{1}{(r_k)!}\right]\nonumber \\
&= \frac{n!}{\prod_{k = 1}^{k_\text{m}} (k!)^{r_k} \cdot (r_k)!}.
\label{eq:claim_r_k_count}
\end{align}
To interpret \eqref{eq:claim_r_k_count},  first notice that {\color{black} there are} $$\binom{n}{r_1, 2r_2, \ldots, k_\text{m}r_\text{max}}$$ {\color{black} ways to assign }%
$kr_k$ nodes for those $r_k$ components of size $k$. Then, for each $kr_k$ nodes, $\binom{kr_k}{k, k, \ldots, k}$ further splits them into $r_k$ pieces. However, this overcounts the actual number by introducing the permutation over the $r_k$ pieces, therefore we divide it by $(r_k)!$. 

\vspace{0.2cm}
\if\arXivver1
\vspace{0.2cm}
\paragraph{Answer for Q2.} 
\else
\subsubsection{Answer for Q2}
\fi

Notice that a specific arrangement of $\{r_{k}\}_{k = 1}^{k_\text{m}}$ occurs if and only if the following two events happen: 
\begin{itemize}
\vspace{0.1cm}
    \item[-] E1: each component is connected,
    \vspace{0.1cm}
    \item[-] E2: each one is isolated to others.
    \vspace{0.1cm}
\end{itemize}
For E1, let $\mathcal{P}_k$ denotes the probability that a subgraph of size $k$ is connected, then we have 
	\if\arXivver1
\begin{align*}
	\mathcal{P}(\text{E1})  &= \prod_{k = 1}^{k_\text{m}} \left(\mathcal{P}_{k}\right)^{r_k} \leq  \prod_{k = 1}^{k_\text{m}}  \left(\mathcal{P}(\text{A spanning tree of size $k$ exists})\right)^{r_k} \\
	&\leq \prod_{k = 1}^{k_\text{m}}  \left(k^{k-2} \cdot p^{k-1}\right)^{r_k},
\end{align*}
\else
\begin{align*}
	\mathcal{P}(\text{E1})  &= \prod_{k = 1}^{k_\text{m}} \left(\mathcal{P}_{k}\right)^{r_k} \\
	&\leq  \prod_{k = 1}^{k_\text{m}}  \left(\mathcal{P}(\text{A spanning tree of size $k$ exists})\right)^{r_k} \\
	&\leq \prod_{k = 1}^{k_\text{m}}  \left(k^{k-2} \cdot p^{k-1}\right)^{r_k}
\end{align*}
\fi
where the last step comes from the union bound over the total $k^{k-2}$ number of trees as Cayley's formula suggests. For E2, we have 
\begin{equation*}
	\mathcal{P}(\text{E2}) = (1-p)^{m_{2, \bm{r}}},
\end{equation*}
where $m_{2, \bm{r}}$ stands for the number of node pairs across different components that satisfies  
\if\arXivver1 
\begin{align*}
m_{2, \bm{r}}
&\overset{(a)}{=} \sum_{k = 1}^{k_\text{m}} k^2 \cdot \left(\frac{r_k(r_k-1)}{2}\right) + \frac{1}{2}\sum_{k_1 = 1}^{k_\text{m}}\sum_{k_2 = k_1+1}^{k_\text{m}} k_1r_{k_1} \cdot k_2r_{k_2}\\
&= \frac{1}{2}\sum_{k_1 = 1}^{k_\text{m}}\sum_{k_2 = 1}^{k_\text{m}} k_1r_{k_1} \cdot k_2r_{k_2} - \frac{1}{2}\sum_{k = 1}^{k_\text{m}} r_k \cdot k^2\\
&\geq \frac{1}{2}  \left( \sum_{k = 1}^{k_\text{m}} kr_k\right)^2 - \frac{k_\text{m}}{2} \sum_{k=1}^{k_\text{m}} kr_k\\
&\overset{(b)}{=} \frac{n(n-k_\text{m})}{2},
\end{align*}
\else
\begin{align*}
	m_{2, \bm{r}}
	&\overset{(a)}{=} \sum_{k = 1}^{k_\text{m}} k^2 \cdot \left(\frac{r_k(r_k-1)}{2}\right) \\
	&\quad + \frac{1}{2}\sum_{k_1 = 1}^{k_\text{m}}\sum_{k_2 = k_1+1}^{k_\text{m}} k_1r_{k_1} \cdot k_2r_{k_2}\\
	&= \frac{1}{2}\sum_{k_1 = 1}^{k_\text{m}}\sum_{k_2 = 1}^{k_\text{m}} k_1r_{k_1} \cdot k_2r_{k_2} - \frac{1}{2}\sum_{k = 1}^{k_\text{m}} r_k \cdot k^2\\
	&\geq \frac{1}{2}  \left( \sum_{k = 1}^{k_\text{m}} kr_k\right)^2 - \frac{k_\text{m}}{2} \sum_{k=1}^{k_\text{m}} kr_k\\
	&\overset{(b)}{=} \frac{n(n-k_\text{m})}{2}
\end{align*}
\fi
where $(a)$ holds by first counting number of pairs that across the $r_k$ components of size $k$, then adding the pairs that across different $k$; (b) applies the constraint $ \sum_{k = 1}^{k_\text{m}} k \cdot r_k = n$ in \eqref{eq:constraints_r_k}. Given the above, the probability of a specific arrangement 
$\bm{r}$ occurs, denoted by $\mathcal{P}_{\bm{r}}$, is given as
\begin{align*}
\mathcal{P}_{\bm{r}} &= \mathcal{P}(\text{E1} \cap \text{E2}) = \mathcal{P}(\text{E1})\mathcal{P}(\text{E2})\\
&\leq (1-p)^{m_{2,\bm{r}}} \cdot \prod_{k = 1}^{k_\text{m}}  \left(k^{k-2} \cdot p^{k-1}\right)^{r_k},
\end{align*}
where we use the fact that E1 and E2 are independent as they involve two disjoint sets of edges.   

Now, we are able to write out $\mathcal{P}(Z_n \leq \beta \log n)$ as 
\if\arXivver1
\begin{align}
\mathcal{P}(Z_n \leq \beta \log n) &= \sum_{\bm{r}} m_{1,\bm{r}} \cdot \mathcal{P}_{\bm{r}} \nonumber \\
&\leq \sum_{\bm{r}} \left(\frac{n!}{\prod_{k = 1}^{k_\text{m}} (k!)^{r_k} \cdot (r_k)!}\right) \cdot (1-p)^{m_{2,\bm{r}}} \cdot \prod_{k = 1}^{k_\text{m}}  \left(k^{k-2} \cdot p^{k-1}\right)^{r_k} \nonumber\\
&= \sum_{\bm{r}} n!(1-p)^{m_{2, \bm{r}}}  \prod_{k = 1}^{k_\text{m}} \left(\frac{k^{k-2}p^{k-1}}{k!}\right)^{r_k} \cdot \frac{1}{(r_k)!} \nonumber \\
&\leq \sum_{\bm{r}} n!(1-p)^{m_{2, \bm{r}}}  \prod_{k = 1}^{k_\text{m}} \frac{(pk)^{r_k(k-1)}}{(r_k)!} \nonumber \\
&\leq \sum_{\bm{r}} n! \cdot e^{-p \cdot m_{2, \bm{r}}}  \cdot \left( \frac{(pk)^{n-\sum_{k = 1}^{k_\text{m}} r_k}}{\prod_{k = 1}^{k_\text{m}}(r_k)!}\right),
\label{eq:bound_k_logn_1}
\end{align}
\else
\begin{align}
	&\mathcal{P}(Z_n \leq \beta \log n) = \sum_{\bm{r}} m_{1,\bm{r}} \cdot \mathcal{P}_{\bm{r}} \nonumber \\
	&\leq \sum_{\bm{r}} \Bigg[\left(\frac{n!}{\prod_{k = 1}^{k_\text{m}} (k!)^{r_k} \cdot (r_k)!}\right) \cdot (1-p)^{m_{2,\bm{r}}} \nonumber \\
	&\quad \cdot \prod_{k = 1}^{k_\text{m}}  \left(k^{k-2} \cdot p^{k-1}\right)^{r_k}\Bigg] \nonumber\\
	&= \sum_{\bm{r}} n!(1-p)^{m_{2, \bm{r}}}  \prod_{k = 1}^{k_\text{m}} \left(\frac{k^{k-2}p^{k-1}}{k!}\right)^{r_k} \cdot \frac{1}{(r_k)!} \nonumber \\
	&\leq \sum_{\bm{r}} n!(1-p)^{m_{2, \bm{r}}}  \prod_{k = 1}^{k_\text{m}} \frac{(pk)^{r_k(k-1)}}{(r_k)!} \nonumber \\
	&\leq \sum_{\bm{r}} n! \cdot e^{-p \cdot m_{2, \bm{r}}}  \cdot \left( \frac{(pk)^{n-\sum_{k = 1}^{k_\text{m}} r_k}}{\prod_{k = 1}^{k_\text{m}}(r_k)!}\right)
	\label{eq:bound_k_logn_1}
\end{align}
\fi
where in the last step we applies $(1-p) \leq e^{-p}$ and the constraint $ \sum_{k = 1}^{k_\text{m}} k \cdot r_k = n$ in \eqref{eq:constraints_r_k}, and we remark that summation is over all combinations $\bm{r}$ that satisfy the three constraints in \eqref{eq:constraints_r_k}. To proceed, by applying $p = \frac{a\log n}{n}$ and Stirling's approximation that $n! = \sqrt{2\pi n} \left(\frac{n}{e}\right)^n \left(1 + O\left(\frac{1}{n}\right)\right)$, \eqref{eq:bound_k_logn_1} can be rewritten as 
\if\arXivver1
\begin{align}
\mathcal{P}(k \leq \beta \log n)  &\leq \sum_{\bm{r}} \sqrt{2\pi n} \left(\frac{n}{e}\right)^n \left(1 + O\left(\frac{1}{n}\right)\right) \cdot n^{-\frac{a(n-k)}{2}} \cdot \left(\frac{ak\log n}{n}\right)^{n-\sum_{k = 1}^{k_\text{m}} r_k} \cdot \frac{1}{\prod_{k = 1}^{k_\text{m}}(r_k)!} \nonumber \\
&\overset{(a)}{=} \sum_{\bm{r}}\left(\frac{\sqrt{2\pi n}}{e^n}\right) \cdot n^{-\frac{an(1 - o(1))}{2}} \cdot \left(a\beta \log^2 n\right)^{n - \sum_{k = 1}^{k_\text{m}} r_k} \cdot \left(\frac{n^{\sum_{k = 1}^{k_\text{m}} r_k}}{\prod_{k = 1}^{k_\text{m}}(r_k)!}\right) \nonumber \\
&\overset{(b)}{=} \sum_{\bm{r}} n^{-\frac{an(1-o(1))}{2}} \cdot \underbrace{\left(\frac{n^{\sum_{k = 1}^{k_\text{m}} r_k}}{\prod_{k = 1}^{k_\text{m}}(r_k)!}\right)}_{=: f(\bm{r})} \nonumber \\
&=\sum_{\bm{r}} n^{-\frac{an(1-o(1))}{2}} \cdot f(\bm{r}),
\label{eq:bound_k_logn_2}
\end{align} 
\else
\begin{align}
	&\mathcal{P}(k \leq \beta \log n) \nonumber \\
	&\leq \sum_{\bm{r}} \Bigg[\sqrt{2\pi n} \left(\frac{n}{e}\right)^n \left(1 + O\left(\frac{1}{n}\right)\right) \cdot n^{-\frac{a(n-k)}{2}} \nonumber\\
	&\quad \cdot \left(\frac{ak\log n}{n}\right)^{n-\sum_{k = 1}^{k_\text{m}} r_k} \cdot \frac{1}{\prod_{k = 1}^{k_\text{m}}(r_k)!} \Bigg]\nonumber \\
	&\overset{(a)}{=} \sum_{\bm{r}} \Bigg[ \left(\frac{\sqrt{2\pi n}}{e^n}\right) \cdot n^{-\frac{an(1 - o(1))}{2}} \cdot \left(a\beta \log^2 n\right)^{n - \sum_{k = 1}^{k_\text{m}} r_k} \nonumber\\
	&\quad \cdot \left(\frac{n^{\sum_{k = 1}^{k_\text{m}} r_k}}{\prod_{k = 1}^{k_\text{m}}(r_k)!}\right)\Bigg] \nonumber \\
	&\overset{(b)}{=} \sum_{\bm{r}} n^{-\frac{an(1-o(1))}{2}} \cdot \underbrace{\left(\frac{n^{\sum_{k = 1}^{k_\text{m}} r_k}}{\prod_{k = 1}^{k_\text{m}}(r_k)!}\right)}_{=: f(\bm{r})} \nonumber \\
	&=\sum_{\bm{r}} n^{-\frac{an(1-o(1))}{2}} \cdot f(\bm{r})
	\label{eq:bound_k_logn_2}
\end{align} 
	\fi
where in step $(a)$ we plug in $k_\text{m} = \beta \log n$ and $(b)$ holds as $n^{-\frac{an(1-o(1))}{2}}$ is dominant and thus can absorb all other terms but only one denoted as 
\begin{equation}
f(\bm{r}) := \frac{n^{\sum_{k = 1}^{k_\text{m}} r_k}}{\prod_{k = 1}^{k_\text{m}}(r_k)!}.
\label{eq:f_func_def}
\end{equation}
At first glance,  it is not clear if $f(\bm{r})$ can be also absorbed. To confirm this, we derive the following lemma that sharply bounds $f(\bm{r})$:
\begin{lemma}
Given $f(\bm{r})$ defined in \eqref{eq:f_func_def} with $k_\textup{m} = \Theta(\log n)$, it satisfies
\begin{equation}
	f(\bm{r})  \leq n^{o(n)}, 
	\label{eq:bound_f_r}
\end{equation}
which holds for  any $\bm{r} = \{r_k\}_{k = 1}^{k_\text{m}}$ that satisfies the three constraints in \eqref{eq:constraints_r_k}.
\label{lemma:bound_f_r}
\end{lemma}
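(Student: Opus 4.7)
The plan is to take logarithms and do a constrained optimization over $\bm r$. Setting $R := \sum_{k=1}^{k_\textup{m}} r_k$, constraint (b) forces $R \leq n$, and
$$\log f(\bm r) \;=\; R\log n \;-\; \sum_{k=1}^{k_\textup{m}} \log(r_k!).$$
I want to lower bound $\sum_k \log(r_k!)$ by a quantity that almost cancels the $R\log n$ term. This is done in two steps: a pointwise Stirling estimate followed by a convexity argument.

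First, for each $k$ with $r_k \geq 1$, Stirling gives $\log(r_k!) \geq r_k \log r_k - r_k$ (and the $r_k = 0$ terms contribute nothing), so
$$\log f(\bm r) \;\leq\; R\log n + R - \sum_{k:\,r_k\geq 1} r_k\log r_k.$$
Second, let $K := |\{k: r_k \geq 1\}|$, so $K \leq k_\textup{m}$ and $R \geq K$. Since $x\mapsto x\log x$ is convex, Jensen's inequality applied to the $K$ nonzero values yields $\sum r_k \log r_k \geq R\log(R/K)$. When $R \geq k_\textup{m}$ I further use $K \leq k_\textup{m}$ to obtain $\sum r_k \log r_k \geq R\log(R/k_\textup{m})$; when $R < k_\textup{m}$ I simply drop the sum using $R\log(R/K) \geq 0$.

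Plugging back in, in the large regime $R \geq k_\textup{m}$ we get
$$\log f(\bm r) \;\leq\; R\log(n k_\textup{m}/R) + R \;=:\; h(R).$$
The function $h$ has derivative $\log(nk_\textup{m}/R)$, hence is increasing on $(0, n k_\textup{m}]$; since $R \leq n \leq nk_\textup{m}$ the maximum occurs at $R = n$, giving $h(n) = n\log k_\textup{m} + n = O(n\log\log n)$ because $k_\textup{m} = \Theta(\log n)$. In the small regime $R < k_\textup{m}$, the trivial bound $\log f(\bm r) \leq R\log n + R \leq k_\textup{m}(\log n + 1) = O(\log^2 n)$ is even smaller. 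Either way $\log f(\bm r) = o(n\log n)$, i.e.\ $f(\bm r) \leq n^{o(n)}$.

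The main obstacle, and the point of the lemma, is that a priori $f(\bm r)$ could be as large as the number of set partitions of $[n]$, which would overwhelm the $n^{-an(1-o(1))/2}$ factor in \eqref{eq:bound_k_logn_2}. The convexity step prevents this: spreading the total mass $R$ over at most $k_\textup{m} = O(\log n)$ buckets forces the factorials in the denominator to be large enough to neutralize the $n^R$ in the numerator up to a sub-$n\log n$ correction. After establishing the lemma, summing \eqref{eq:bound_k_logn_2} over the at most $(n+1)^{k_\textup{m}} = n^{O(\log n)} = n^{o(n)}$ feasible tuples $\bm r$ preserves the $n^{-an(1-o(1))/2}$ decay and yields the desired estimate \eqref{eq:bound_Z_n_2}.
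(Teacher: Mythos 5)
Your proof is correct, and it takes a genuinely different and more elementary route than the paper. The paper attacks the same quantity by relaxing the $r_k$ to nonnegative reals and minimizing $\sum_k r_k(\log r_k-1)$ subject to $\sum_k r_k=m$ and $\sum_k k r_k=n$ via KKT conditions; the minimizer turns out to be a geometric profile $r_k^*=c\,s^k$, and the bulk of the paper's proof is then spent approximating $c$ and $s$ through the auxiliary function $h(x)=\frac{1}{1-x}-\frac{k_{\mathrm m}x^{k_{\mathrm m}}}{1-x^{k_{\mathrm m}}}$ and a two-case analysis in $m$, plus separate treatment of $m=n$ and $m=o(n)$. You replace all of this by a single Stirling bound $\log(r_k!)\ge r_k\log r_k-r_k$ and one application of Jensen's inequality over the at most $K\le k_{\mathrm m}$ nonzero buckets, $\sum_k r_k\log r_k\ge R\log(R/K)$, followed by a one-variable maximization of $h(R)=R\log(nk_{\mathrm m}/R)+R$ over $R\le n$ (your two regimes $R\ge k_{\mathrm m}$ and $R<k_{\mathrm m}$ are handled cleanly, and $R\log(R/K)\ge 0$ does hold since $R\ge K$). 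This yields the explicit uniform bound $\log f(\bm r)=O(n\log\log n)=o(n\log n)$, which is exactly what $f(\bm r)\le n^{o(n)}$ requires, and it even dispenses with constraint (a) of \eqref{eq:constraints_r_k}, using only $\sum_k k r_k=n$ (hence $R\le n$) and nonnegativity. What the paper's heavier optimization buys is the identification of the extremal (geometric) configuration, but that extra precision is not needed for the lemma or for its use in \eqref{eq:bound_k_logn_2}; your closing remark about summing over at most $(n+1)^{k_{\mathrm m}}=n^{O(\log n)}$ feasible tuples matches the paper's final step.
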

For ease of presentation, the proof of Lemma~\ref{lemma:bound_f_r} is deferred to Appendix~\ref{sec:proof_bound_f_r}.  By plugging \eqref{eq:bound_f_r} into \eqref{eq:bound_k_logn_2} we get 
\begin{align*}
\mathcal{P}(k \leq \beta \log n)  &\leq \sum_{\bm{r}} n^{-\frac{an(1-o(1))}{2}} \cdot n^{o(n)} =  \sum_{\bm{r}} n^{-\frac{an(1-o(1))}{2}} \\
&\overset{(a)}{\leq} (n+1)^{k_\text{m}} \cdot n^{-\frac{an(1-o(1))}{2}} \\
&= (n +1)^{\beta \log n} \cdot n^{-\frac{an(1-o(1))}{2}} \\
&= n^{-\frac{an(1-o(1))}{2}}.
\end{align*}
Here, $(a)$ applies the union bound over the set of all possible $\bm{r}$ and the size of the set is upper bounded by $(n+1)^{k_\text{m}}$, since each $r_k$ takes values from $0$ to $n$. This completes the proof.

\subsection{Proof of Lemma~\ref{lemma:bound_f_r}}
\label{sec:proof_bound_f_r}
To begin with, we denote 
\begin{equation*}
	m := \sum_{k = 1}^{k_\text{m}} r_k, 
\end{equation*}
that represents the number of isolated components. 
Our proof strategy is to study $f(\bm{r})$ under different choices of $m$. In this way, we can immediately prove Lemma~\ref{lemma:bound_f_r} on two special cases: 
\begin{enumerate}
	\vspace{0.1cm}
	\item when $m = n$, this indicates that only $r_1 = n$ is nonzero and and all other $r_k = 0$ for $k = 2, \ldots, k_\text{m}$.  Then we have $f(\bm{r}) \leq e^{n} = n^{o(n)}$ which satisfies \eqref{eq:bound_f_r},
	\vspace{0.1cm}
	\item when $m = o(n)$, then $f(\bm{r}) \leq n^m = n^{o(n)}$, which is the desired \eqref{eq:bound_f_r}.
	\vspace{0.1cm}
\end{enumerate}

Now, we consider the remaining cases of $m$ that satisfy $m < n$ and $m = \Theta(n)$, which excludes the two special cases mentioned above. First, $f(\bm{r})$ can be upper bounded  as 
\if\arXivver1
\begin{align*}
f(\bm{r}) = \frac{n^{m}}{\prod_{k = 1}^{k_\text{m}}(r_k)!}  \overset{(a)}{\leq} \frac{n^{m}}{\prod_{k = 1}^{k_\text{m}} \frac{r_k^{r_k}}{e^{r_k-1}}} \leq n^m \Big/ \underbrace{\prod_{k = 1}^{k_\text{m}} \left(\frac{r_k}{e}\right)^{r_k}}_{=: g(\bm{r})} = \frac{n^m}{g(\bm{r})},
\end{align*}
\else
\begin{align*}
	f(\bm{r}) &= \frac{n^{m}}{\prod_{k = 1}^{k_\text{m}}(r_k)!}  \overset{(a)}{\leq} \frac{n^{m}}{\prod_{k = 1}^{k_\text{m}} \frac{r_k^{r_k}}{e^{r_k-1}}} \leq n^m \Big/ \underbrace{\prod_{k = 1}^{k_\text{m}} \left(\frac{r_k}{e}\right)^{r_k}}_{=: g(\bm{r})} \\
	&= \frac{n^m}{g(\bm{r})}, 
\end{align*}
	\fi
where step $(a)$ uses the fact that $n! \geq \frac{n^n}{e^{n-1}}$.  Then, under a fixed $m$, we focus on finding the minimum of the defined $g(\bm{r})$ under the constraints of $\bm{r}$ in \eqref{eq:constraints_r_k} and $\sum_{k = 1}^{k_\text{m}} r_k = m$. To this end, by noticing that 
\begin{equation*}
	\log g(\bm{r}) = \sum_{k = 1}^{k_\text{m}} r_k (\log r_k - 1), 
\end{equation*}
we can formulate the following convex optimization program:
\begin{equation}
	\begin{aligned}
	\min_{r_k \in \mathbb{R}_+, k = 1, \ldots, k_\text{m}} \quad &\sum_{k = 1}^{k_\text{m}} r_k (\log r_k - 1) \\
	\mathrm{s.t.} \quad & \sum_{k = 1}^{k_\text{m}} r_k = m \\
	& \sum_{k = 1}^{k_\text{m}} k r_k = n \\
	& r_k \geq 0, \; k = 1, \ldots, k_\text{m}. 
	\end{aligned}
\label{eq:opt_r_program}
\end{equation}
Notably, \eqref{eq:opt_r_program} ignores the constraint that $r_k \in \mathbb{Z}_+$ is an integer and instead extends to real number i.e.,$r_k \in \mathbb{R}_+$, which significantly facilitates our analysis. The resulting optima is still a lower bound of $g(\bm{r})$ under $r_k \in \mathbb{Z}_+$. 

We can solve \eqref{eq:opt_r_program} from its KKT conditions, where the Lagrangian is given as
\if\arXivver1
\begin{equation*}
L\left(\bm{r}, \lambda, \nu, \{\mu_k\}_{k = 1}^{k_\text{m}}\right)  = \sum_{k = 1}^{k_\text{m}} r_k (\log r_k - 1) + \lambda \left(\sum_{k = 1}^{k_\text{m}}r_k - m\right) + \nu \left( \sum_{k = 1}^{k_\text{m}} k r_k - n\right) - \sum_{k = 1}^{k_\text{m}} \mu_k r_k.
\label{eq:opt_r_lagrangian}
\end{equation*}
\else
\begin{equation*}
	\begin{aligned}
		&L\left(\bm{r}, \lambda, \nu, \{\mu_k\}_{k = 1}^{k_\text{m}}\right)  \\
		&= \sum_{k = 1}^{k_\text{m}} r_k (\log r_k - 1) + \lambda \left(\sum_{k = 1}^{k_\text{m}}r_k - m\right) \\
		&\quad + \nu \left( \sum_{k = 1}^{k_\text{m}} k r_k - n\right) - \sum_{k = 1}^{k_\text{m}} \mu_k r_k.
	\end{aligned}
	\label{eq:opt_r_lagrangian}
\end{equation*}
\fi
Then, by stationarity, the optimum $\bm{r}^* = \{r_k^*\}_{k = 1}^{k_\text{m}}$ must satisfy
	\if\arXivver1
\begin{equation}
	0 = \frac{\partial L}{\partial r_k} \bigg|_{r_k = r_k^*} = \log r_k^* + \lambda + \nu k - \mu_k, \quad \Rightarrow \quad r_k^* = e^{\nu k + \lambda  - \mu_k}, \quad k = 1, \ldots, k_\text{m}.
	\label{eq:opt_r_stationarity}
\end{equation} 
\else
\begin{equation}
	\begin{aligned}
		0 &= \frac{\partial L}{\partial r_k} \bigg|_{r_k = r_k^*} = \log r_k^* + \lambda + \nu k - \mu_k, \\
		\Rightarrow \quad r_k^* &= e^{\nu k + \lambda  - \mu_k}, \quad k = 1, \ldots, k_\text{m}.
	\end{aligned}
	\label{eq:opt_r_stationarity}
\end{equation} 
\fi
Meanwhile, from the complementary slackness we have $\mu_k r_k = 0$ and $\mu_k \geq 0$ for all $k = 1, \ldots, k_\text{m}$. Combining this fact with \eqref{eq:opt_r_stationarity} yields $\mu_k = 0$ and 
\begin{equation}
	r_k^* = c \cdot s^{k}, \quad k = 1, \ldots, k_\text{m},
	\label{eq:opt_r_sol}
\end{equation}
for some $c, s >0$.  As a result, the optima is given as 
	\if\arXivver1 
\begin{align}
	\log g(\bm{r}^*) &= \sum_{k = 1}^{k_\text{m}} r_k \left(\log c s^k - 1\right) = \sum_{k = 1}^{k_\text{m}} r_k\left( \log c  + k\log s - 1\right) \nonumber \\
	&= m\log c + n\log s - m,
	\label{eq:opt_r_optima}
\end{align}
\else
\begin{align}
	\log g(\bm{r}^*) &= \sum_{k = 1}^{k_\text{m}} r_k \left(\log c s^k - 1\right) \nonumber \\
	&= \sum_{k = 1}^{k_\text{m}} r_k\left( \log c  + k\log s - 1\right) \nonumber \\
	&= m\log c + n\log s - m 
	\label{eq:opt_r_optima}
\end{align}
\fi
where in the last step we apply the constraints in \eqref{eq:opt_r_program}. In order to determine $c$ and $s$,  plugging \eqref{eq:opt_r_sol} into the constraints of \eqref{eq:opt_r_program} yields 
\begin{equation}
\begin{aligned}
m &= \sum_{k = 1}^{k_\text{m}} r_k^*  = \sum_{k = 1}^{k_\text{m}} c \cdot s^k, \quad \Rightarrow \quad c = \frac{m(1-s)}{s(1-s^{k_\text{m}})}, \\
n &= \sum_{k = 1}^{k_\text{m}} k r_k^* = c \cdot \sum_{k = 1}^{k_\text{m}}  k s^k, \quad \Rightarrow \quad \frac{n}{m} = \frac{1}{1-s} - \frac{k_\text{m}s^{k_\text{m}}}{1 - s^{k_\text{m}}}.
\end{aligned}
\label{eq:opt_c_s_sol}
\end{equation}
However, finding a  solution of $c$ and $s$  from \eqref{eq:opt_c_s_sol} is nontrivial, and thus some approximation is necessary. To this end, let us define the following function
\begin{equation}
	h(x) := \frac{1}{1 - x} - \frac{k_\text{m} x^{k_\text{m}}}{1 - x^{k_\text{m}}}, 
\end{equation}
and $x = s$ is the solution to $h(x) = \frac{n}{m}$. Before we determine $s$, several importantly properties of $h(x)$ are listed below:
\begin{itemize}[leftmargin=20pt]
\vspace{0.2cm}
 \item   [-] P1: $h(0) = 1$, $h(1) = \frac{k_\text{m}+1}{2}$, and $\lim_{x \rightarrow \infty} h(x) = k_\text{m}$.
 \vspace{0.2cm}
 \item	 [-] P2: $h(x)$ is monotonically increasing as $x \geq 0$. 
 \vspace{0.2cm}
 \item   [-] P3: a unique solution $x= s$ to $h(x) = \frac{n}{m}$ exists for any $ \frac{n}{k_\text{m}} < m < n$. 
 \vspace{0.2cm}
\end{itemize}

Here, we provide some justifications for each property one by one. For P1, $h(0)$ and $\lim_{x \rightarrow \infty} h(x)$ can be observed immediately. The remaining $h(1) = \frac{k_\text{m}+1}{2}$ is obtained from L'Hôpital's rule. For P2,  the derivative of $h(x)$ is given as 
\if\arXivver1
\begin{align*}
h^\prime(x) &=  \frac{1}{(1-x)^2} - \frac{k^{2}_{\text{m}}x^{k_\text{m}-1}}{(1 - x^{k_\text{m}})^2} = \frac{(1 + x +\cdots + x^{k_\text{m}-1})^2 - k_\text{m}^2x^{k_\text{m}-1}}{(1 - x^{k_\text{m}})^2} \\
&= \frac{(1 + x + \cdots x^{k_\text{m}-1} - k_\text{m}x^{k_\text{m}-1})(1 + x + \cdots x^{k_\text{m}-1} + k_\text{m}x^{k_\text{m}-1})}{(1 - x^{k_\text{m}})^2}\\
&\geq 0,
\end{align*}
\else
\begin{align*}
	h^\prime(x) &=  \frac{1}{(1-x)^2} - \frac{k^{2}_{\text{m}}x^{k_\text{m}-1}}{(1 - x^{k_\text{m}})^2} \\
	&= \frac{(1 + x +\cdots + x^{k_\text{m}-1})^2 - k_\text{m}^2x^{k_\text{m}-1}}{(1 - x^{k_\text{m}})^2} \\
	&= \frac{(1 + x + \cdots x^{k_\text{m}-1} - k_\text{m}x^{k_\text{m}-1})}{(1 - x^{k_\text{m}})^2} \\
	&\quad \cdot (1 + x + \cdots x^{k_\text{m}-1} + k_\text{m}x^{k_\text{m}-1})\\
	&\geq 0 
\end{align*}
	\fi
where the last step stems from Jensen's inequality\footnote{By defining the convex function $f_0(k) = x^k$ and a random variable $y$ that uniformly takes values on $0, 1, \ldots, k_\text{m}-1$, then Jensen's inequality suggests $\mathbb{E}[f_0(y)] \geq f_0(\mathbb{E}[y])$, which is the desired inequality.} that $1 + x + \cdots x^{k_\text{m}-1} \geq k_\text{m}x^{k_\text{m}-1}$, and the equality only holds when $x = 1$, which implies $h^\prime(x) = 0$ only when $x = 1$. For P3, one can see that $1 < \frac{n}{m} < k_{\max}$ when $\frac{n}{k_\text{m}} < m < n$, which agrees with the range of $h(x)$ when $x \geq 0$, therefore the solution to $h(x) = \frac{n}{m}$ always exists and P2 suggests its uniqueness. 

\vspace{0.2cm}
Now, we can (approximately) determine the solution $x = s$ and the corresponding optima $g(\bm{r}^*)$ in \eqref{eq:opt_r_optima} for different choices of $m$ as follows:

\begin{itemize}[leftmargin=*]
	\vspace{0.2cm}
	\item \textbf{Case~1:}  When $m =  n - n_0$ for any $n_0 = o(n)$. In this case, $$\frac{n}{m} = 1 + \frac{n_0}{n} + o\left(\frac{n_0}{n}\right).$$ Notice that when $x = o(1)$, it satisfies $h(x) = 1 + x - o(x)$, therefore the solution to $h(x) = \frac{n}{m}$ is given as $$s = \frac{n_0}{n}(1 + o(1)),$$ plugging this back into \eqref{eq:opt_c_s_sol} gives 
	\begin{equation*}
		c = \frac{m(1 - s)}{s(1 - s^{k_\text{m}})} \geq \frac{nm}{n_0} \cdot (1-o(1)).
	\end{equation*}
	Therefore, the optima given in \eqref{eq:opt_r_optima} satisfies
		\if\arXivver1 
	\begin{align*}
		\log g(\bm{r}^*) &= m\log c + n\log s - m  \\
		&\geq m \log \left(\frac{nm}{n_0} \cdot (1-o(1)) \right) + n\log \left(\frac{n_0}{n} \cdot (1+o(1))\right) \\
		&= n(1 - o(1)) \cdot \log \left(\frac{n^2}{n_0} \cdot (1 - o(1))\right) - (1 + o(1)) \cdot n\log n \\
		&\geq  (1 - o(1)) n\log n.
	\end{align*}
\else
\begin{align*}
	\log g(\bm{r}^*) &= m\log c + n\log s - m  \\
	&\geq m \log \left(\frac{nm}{n_0} \cdot (1-o(1)) \right) \\
	&\quad + n\log \left(\frac{n_0}{n} \cdot (1+o(1))\right) \\
	&= n(1 - o(1)) \cdot \log \left(\frac{n^2}{n_0} \cdot (1 - o(1))\right) \\
	&\quad - (1 + o(1)) \cdot n\log n \\
	&\geq  (1 - o(1)) n\log n.
\end{align*}
	\fi
	This further leads to $g(\bm{r}^*) \geq n^{(1 - o(1))n}$ and $f(\bm{r}^*) \leq n^{o(n)}$, which is the desired result.
	
	\vspace{0.2cm}
	\item \textbf{Case~2:} When $m = n - n_0$ for $n_0 = \Theta(n)$ and $n_0 \leq n - \frac{2n}{k_\text{m}+1}$, this indicates that $$1 < \frac{n}{m} \leq \frac{k_\text{m}+1}{2}.$$ To find the solution $s$, recall that $h(1) = \frac{k_\text{m}+1}{2}$ and $h(x)$ is monotonically increasing, then we get $s \leq 1$. Also, notice that when $s \leq 1$, it satisfies
	\begin{equation*}
		h(s) = \frac{1}{1 - s} - \frac{k_\text{m}s^{k_\text{m}}}{1 - s^{k_\text{m}}} \leq \frac{1}{1 - s},
	\end{equation*}
	and $h(s) = \frac{n}{m} = \frac{1}{1 - \frac{n_0}{n}}$, this further leads to $s \geq \frac{n_0}{n}$. Given the above, we obtain the range of $s$ as $$s \in \left[\frac{n_0}{n}, 1\right].$$ Next, to determine $c$ in \eqref{eq:opt_c_s_sol}, notice that $c(s) = \frac{m(1-s)}{s(1 - s^{k_\text{m}})} $
	as a function of $s$ is monotonically decreasing when $s > 0$. To see this, again by taking the derivative we have
	\begin{equation*}
	c^\prime(s) = \frac{(k_\text{m}+1)s^{k_\text{m}} - k_\text{m}s^{k_\text{m}+1} - 1}{s^2(1 - s^{k_\text{m}})^2} \leq 0 , 
	\end{equation*}
	where the last steps holds by Jensen's inequality that $(k_\text{m}+1)s^{k_\text{m}} - k_\text{m}s^{k_\text{m}+1} - 1 \leq 0$. Therefore, we have 
	\begin{equation*}
		c \geq c(s)|_{s = 1} =  \frac{m(1-s)}{s(1 - s^{k_\text{m}})} \bigg|_{s = 1} = \frac{m}{k_\text{m}}.
	\end{equation*}
	Now, we can determine the optima in \eqref{eq:opt_r_optima} as 
		\if\arXivver1
	\begin{align*}
	\log g(\bm{r}^*) &= m\log c + n\log s - m \overset{(a)}{\geq} m \log\left(\frac{m}{k_\text{m}}\right) + n\log \left(\frac{n_0}{n}\right) - m \\
	&\overset{(b)}{\geq} m \log \left(\frac{2n}{k_\text{m}(k_\text{m}+1)}\right) - O(n)\\
	&= (1 - o(1))\cdot m\log n,
	\end{align*}
\else
\begin{align*}
	\log g(\bm{r}^*) &= m\log c + n\log s - m \\
	&\overset{(a)}{\geq} m \log\left(\frac{m}{k_\text{m}}\right) + n\log \left(\frac{n_0}{n}\right) - m \\
	&\overset{(b)}{\geq} m \log \left(\frac{2n}{k_\text{m}(k_\text{m}+1)}\right) - O(n)\\
	&= (1 - o(1))\cdot m\log n
\end{align*}
\fi
	where $(a)$ comes by plugging $c \geq \frac{m}{k_\text{m}}$ and $a \geq \frac{n_0}{n}$; (b) holds by using $m \geq \frac{2}{k_\text{m}+1}$ and $\frac{n_0}{n} = \Theta(1)$. In this way, we further obtain $f(\bm{r}^*) = \frac{n^m}{g(\bm{r}^*)} \leq n^{o(n)}$. 
	
\end{itemize}

\vspace{0.2cm}
Notably, cases 1 and 2 together cover all the scenarios when $m = \Theta(n)$, and we have shown that $f(\bm{r}) = n^{o(n)}$, which completes the proof.

\section{Proof of lemmas}

\subsection{Proof of Lemma~\ref{lemma:bound_E_2_S_1}}
\label{sec:proof_bound_E_2_S_1}
	To begin with, recall the subgraph $G(S_{11})$ with size $|S_{11}| = \frac{n}{2} - \Delta = \Theta(n)$ follows Erdős–Rényi model. Then, let $z$ denote the size of the largest connected component of $G_{S_{11}}$, 
	Theorem~\ref{the:bound_Z_n_3} suggests a giant connected component in $G(S_{11})$ exists with high probability:
	\begin{equation}
		\mathcal{P}\left(z \geq |S_{11}| - \frac{\beta n}{\log \log n}\right) \geq 1 - n^{-\Theta(\log n)}, 
		\label{eq:prob_Z_bound}
	\end{equation}  
	for some constant $\beta > 0$. In this case, let us denote
	\begin{equation*}
		E_\mathrm{giant}: \quad z \geq |S_{11}| -  \frac{\beta n}{\log \log n},
	\end{equation*}
	as the event of a giant connected component  and $E_\mathrm{giant}^c$ as its complement. Then, $\mathcal{P}_{\bm{x}^*}\left(E_{2, S_1}\mid r_1\right)$ can be written as 
			\if\arXivver1
	\begin{align}
		\mathcal{P}_{\bm{x}^*}\left(E_{2, S_1}\mid r_1\right) &= \mathcal{P}_{\bm{x}^*}\left(E_\mathrm{giant}^c\right) \mathcal{P}_{\bm{x}^*}\left(E_{2, S_1}\mid r_1, E_\mathrm{giant}^c\right) + \mathcal{P}_{\bm{x}^*}\left(E_\mathrm{giant}\right) \mathcal{P}_{\bm{x}^*}\left(E_{2, S_1}\mid r_1, E_\mathrm{giant}\right) \nonumber \\
		&\leq n^{-\Theta(\log n)} +  \min\{1, \; \mathcal{P}_{\bm{x}^*}\left(E_{2, S_1}\mid r_1, E_\mathrm{giant}\right)\},
		\label{eq:bound_E_2_S_1_last}
	\end{align}
\else
\begin{align}
	&\mathcal{P}_{\bm{x}^*}\left(E_{2, S_1}\mid r_1\right) \nonumber \\
	&= \mathcal{P}_{\bm{x}^*}\left(E_\mathrm{giant}^c\right) \mathcal{P}_{\bm{x}^*}\left(E_{2, S_1}\mid r_1, E_\mathrm{giant}^c\right) \nonumber \\
	&\quad + \mathcal{P}_{\bm{x}^*}\left(E_\mathrm{giant}\right) \mathcal{P}_{\bm{x}^*}\left(E_{2, S_1}\mid r_1, E_\mathrm{giant}\right) \nonumber \\
	&\leq n^{-\Theta(\log n)} +  \min\{1, \; \mathcal{P}_{\bm{x}^*}\left(E_{2, S_1}\mid r_1, E_\mathrm{giant}\right)\}
	\label{eq:bound_E_2_S_1_last}
\end{align}
\fi
	where in the last step we apply $\mathcal{P}_{\bm{x}^*}\left(E_\mathrm{giant}^c\right) = 1 - \mathcal{P}_{\bm{x}^*}\left(E_\mathrm{giant}\right) \leq n^{-\Theta(\log n)}$ from \eqref{eq:prob_Z_bound}.
	
	Now, we focus on bounding $\mathcal{P}_{\bm{x}^*}\left(E_{2, S_1}\mid r_1, E_\mathrm{giant}\right)$ with a giant connected component exists.  For each node $i \in S_{12}$, let us denote $r_{1, i}$ as the number of edges that it connects to $S_{11}$, and further, let $r_{1, \text{giant}, i} \leq r_{1, i}$ denote the number of edges that are connected to the giant connected component.  Then, under a fixed set of $\left\{r_{1, i}\right\}_{i = 1}^\Delta$ and $\left\{r_{1, \text{giant},i}\right\}_{i = 1}^{\Delta}$, the event $E_{2, S_1}$ occurs with probability
	\if\arXivver1
	\begin{equation}
		\mathcal{P}\left(E_{2, S_1} \mid r_1, \left\{r_{1, i}\right\}_{i = 1}^\Delta,  \left\{r_{1, \text{giant}, i}\right\}_{i = 1}^\Delta, E_\text{giant}\right) \leq \prod_{i = 1}^\Delta M^{1 - r_{1, \text{giant}, i}}.
	\label{eq:bound_E_2_S_1_giant}
	\end{equation}
\else
\begin{equation}
	\begin{aligned}
		&\mathcal{P}\left(E_{2, S_1} \mid r_1, \left\{r_{1, i}\right\}_{i = 1}^\Delta,  \left\{r_{1, \text{giant}, i}\right\}_{i = 1}^\Delta, E_\text{giant}\right) \\
		&\leq \prod_{i = 1}^\Delta M^{1 - r_{1, \text{giant}, i}}.
	\end{aligned}
	\label{eq:bound_E_2_S_1_giant}
\end{equation}
\fi
	\eqref{eq:bound_E_2_S_1_giant} can be obtained by using the same argument as in the proof of Lemma~\ref{lemma:bound_E_2_S_1_connect}. That is, for each number of $r_{1, \text{giant}, i}$ edges that connects to the giant connected component, it creates at least $r_{1, \text{giant}, i} - 1$ number of cycles, thus the corresponding probability that the cycle consistency is satisfied is upper bounded by $M^{1 - r_{1, \text{giant}, i}}$. To proceed, we have
		\if\arXivver1
	\begin{align}
		\mathcal{P}_{\bm{x}^*}\left(E_{2, S_1}\mid r_1,  \left\{r_{1, i}\right\}_{i = 1}^\Delta, E_\mathrm{giant}\right) &\leq \prod_{i = 1}^\Delta \mathbb{E}\left[M^{1 - r_{1, \text{giant}, i}} \mid r_1, \left\{r_{1, i}\right\}_{i = 1}^\Delta, E_\text{giant}\right] \nonumber \\
		&= \prod_{i = 1}^\Delta \mathbb{E}\left[M^{1 - r_{1, \text{giant}, i}} \mid r_{1, i}, E_\text{giant}\right],
		\label{eq:bound_P_E_2_S_1_r_1_i}
	\end{align}
\else
\begin{align}
	&\mathcal{P}_{\bm{x}^*}\left(E_{2, S_1}\mid r_1,  \left\{r_{1, i}\right\}_{i = 1}^\Delta, E_\mathrm{giant}\right) \nonumber \\
	&\leq \prod_{i = 1}^\Delta \mathbb{E}\left[M^{1 - r_{1, \text{giant}, i}} \mid r_1, \left\{r_{1, i}\right\}_{i = 1}^\Delta, E_\text{giant}\right] \nonumber \\
	&= \prod_{i = 1}^\Delta \mathbb{E}\left[M^{1 - r_{1, \text{giant}, i}} \mid r_{1, i}, E_\text{giant}\right] 
	\label{eq:bound_P_E_2_S_1_r_1_i}
\end{align}
\fi
	where the last step holds as $r_{1, \text{giant}, i}$ only depends on $r_{1, i}$. 
	
	Our next job is to upper bound $\mathbb{E}\left[M^{1 - r_{1, \text{giant}, i}} \mid r_{1, i}, E_\text{giant}\right]$. First, recall that $z$ is denoted as the giant connected component size. We also define $z^\prime := |S_{11}| - z$ as the size of the residual. Then,  under a fixed $r_{1, i}$, the distribution of $r_{1, \text{giant}, i}$ is given as 
	\begin{equation*}
		\mathcal{P}\left(r_{1, \text{giant}, i} = k \mid r_{1, i}, E_\text{giant}\right) = \frac{\binom{z}{k} \cdot \binom{z^\prime}{r_{1, i} - k}}{\binom{|S_{11}|}{r_{1, i}}}, 
	\end{equation*}
	where we use the fact that given $r_{1, i}$, each pair of nodes $(i,j)$ for $j \in S_{11}$ is assigned with an edge with the same probability. 
	Then, we have
		\if\arXivver1
	\begin{align}
	\mathbb{E}\left[M^{1 - r_{1, \text{giant}, i}} \mid r_{1, i}, E_\text{giant}\right] &=  \sum_{k = \max\{0, \; r_{1, i} - z^\prime\}}^{\min\{r_{1, i}, \; z\}}	\mathcal{P}\left(r_{1, \text{giant}, i} = k \mid r_{1, i}, E_\text{giant}\right) \cdot M^{1 - k}.
	\label{eq:exp_M_1_r_giant}
	\end{align}
\else
	\begin{align}
	&\mathbb{E}\left[M^{1 - r_{1, \text{giant}, i}} \mid r_{1, i}, E_\text{giant}\right] \nonumber \\
	&=  \sum_{k = \max\{0, \; r_{1, i} - z^\prime\}}^{\min\{r_{1, i}, \; z\}}	\mathcal{P}\left(r_{1, \text{giant}, i} = k \mid r_{1, i}, E_\text{giant}\right) \cdot M^{1 - k}.
	\label{eq:exp_M_1_r_giant}
\end{align}
	\fi
	It is important to note that the summation over $k$ ranges from $\max\{0, \; r_{1, i} - z^\prime\}$ to $\min\{r_{1, i}, \; z\}$ since there is at least $r_{1, i} - z^\prime$ edges connecting to the giant connected component when $r_{1, i} > z^\prime$, and also $r_{1,i}$ cannot exceed $z$. Then, in order to calculate \eqref{eq:exp_M_1_r_giant}, we consider two cases depending on the relation between $r_{1, i}$ and $z^\prime$:
	
	\begin{itemize}[leftmargin=*]
	\vspace{0.2cm}
	\item \textbf{Case 1:} When $z^\prime / r_{1, i} = o(1)$, \eqref{eq:exp_M_1_r_giant} satisfies
		\if\arXivver1
	\begin{align*}
	\mathbb{E}\left[M^{1 - r_{1, \text{giant}, i}} \mid r_{1, i}, E_\text{giant}\right] &= \sum_{k =  r_{1, i} - z^\prime}^{\min\{r_{1, i}, \; z\}}	\mathcal{P}\left(r_{1, \text{giant}, i} = k \mid r_{1, i}, E_\text{giant}\right) \cdot M^{1 - k} \\
	&\leq M^{1 - (r_{1, i} - z^\prime)} = M^{1 - (1 - o(1))r_{1, i}},
	\end{align*}
\else
	\begin{align*}
	&\mathbb{E}\left[M^{1 - r_{1, \text{giant}, i}} \mid r_{1, i}, E_\text{giant}\right] \\
	&= \sum_{k =  r_{1, i} - z^\prime}^{\min\{r_{1, i}, \; z\}}	\mathcal{P}\left(r_{1, \text{giant}, i} = k \mid r_{1, i}, E_\text{giant}\right) \cdot M^{1 - k} \\
	&\leq M^{1 - (r_{1, i} - z^\prime)} = M^{1 - (1 - o(1))r_{1, i}}
\end{align*}
	\fi
	which is a desired upper bound.
	
	\vspace{0.2cm}
	\item \textbf{Case 2:} When $z^\prime / r_{1, i} = \Omega (1)$, let us denote 
	\begin{equation*}
		f(k) := \mathcal{P}\left(r_{1, \text{giant}, i} = k \mid r_{1, i}, E_\text{giant}\right) \cdot M^{1 - k}.
	\end{equation*}
	To proceed, we study the ratio $f(k-1)/f(k)$, which is given as
		\if\arXivver1
	\begin{align}
	\frac{f(k-1)}{f(k)} = \frac{\binom{z}{k-1} \cdot \binom{z^\prime}{{r_{1, i} - k + 1}}}{\binom{z}{k} \cdot \binom{z^\prime}{{r_{1,1} - k}}} \cdot M = \frac{Mk(z^\prime - r_{1, i} + k)}{(z - k+1)(r_{1, i} - k + 1)}.
	\label{eq:ratio_f_and_k}
	\end{align}
\else
	\begin{align}
	\frac{f(k-1)}{f(k)} &= \frac{{z \choose {k-1}} \cdot {z^\prime \choose {r_{1, i} - k + 1}}}{{z \choose k} \cdot {z^\prime \choose {r_{1,1} - k}}} \cdot M \nonumber \\
	&= \frac{Mk(z^\prime - r_{1, i} + k)}{(z - k+1)(r_{1, i} - k + 1)}.
	\label{eq:ratio_f_and_k}
\end{align}
\fi
	Then, we consider $k \leq (1 - \epsilon) r_{1, i}$ for some $\epsilon = o(1)$, plugging this into \eqref{eq:ratio_f_and_k} yields
	\if\arXivver1
	\begin{align}
	\frac{f(k-1)}{f(k)} &\leq \frac{Mk(z^\prime - \epsilon r_{1, i})}{(z - k + 1)(\epsilon r_{1, i} +1)} \overset{(a)}{=} \frac{Mk \cdot z^\prime (1 - o(1))}{z(1 - o(1)) \cdot (\epsilon r_{1, i} + 1)} \nonumber \\
	&\leq \frac{Mkz^\prime}{z \epsilon r_{1, i}} \cdot (1 + o(1)) = \frac{Mz^\prime (1 - \epsilon)}{z \epsilon} (1 + o(1)) \nonumber \\
	&\leq \frac{Mz^\prime}{z\epsilon} (1 + o(1)),
	\label{eq:ratio_f_and_k_bound}
	\end{align}
\else
	\begin{align}
	\frac{f(k-1)}{f(k)} &\leq \frac{Mk(z^\prime - \epsilon r_{1, i})}{(z - k + 1)(\epsilon r_{1, i} +1)} \nonumber \\
	&\overset{(a)}{=} \frac{Mk \cdot z^\prime (1 - o(1))}{z(1 - o(1)) \cdot (\epsilon r_{1, i} + 1)} \nonumber \\
	&\leq \frac{Mkz^\prime}{z \epsilon r_{1, i}} \cdot (1 + o(1)) = \frac{Mz^\prime (1 - \epsilon)}{z \epsilon} (1 + o(1)) \nonumber \\
	&\leq \frac{Mz^\prime}{z\epsilon} (1 + o(1))
	\label{eq:ratio_f_and_k_bound}
\end{align}
\fi
	where $(a)$ comes from the assumption that $z^\prime / r_{1,i} = \Omega(1)$ and therefore $\epsilon r_{1, i} = o(z^\prime)$. 
	Now, given that the event $E_\text{giant}$ occurs, from \eqref{eq:prob_Z_bound} we have $z \geq |S_{11}| - \frac{\beta n}{\log \log n}$ and $z^\prime \leq \frac{\beta n}{\log\log n}$ for some $\beta > 0$. Then, by taking
	\begin{equation*}
		\epsilon = \Theta\left(\frac{1}{\log\log\log n}\right),
	\end{equation*}
	\eqref{eq:ratio_f_and_k_bound} satisfies
	\begin{equation*}
	\frac{f(k-1)}{f(k)} = O\left(\frac{\log\log\log n}{\log\log n}\right) = o(1), 
	\end{equation*}
	which holds for any $k \leq (1 - \epsilon) r_{1, i}$. 
	As a result, let $k_0 := (1 - \epsilon)r_{1, i} = (1 - o(1))r_{1, i}$ and assume it to be an integer, the summation in \eqref{eq:exp_M_1_r_giant} satisfies
	\if\arXivver1
	\begin{align*}
	\mathbb{E}\left[M^{1 - r_{1, \text{giant}, i}} \mid r_{1, i}, E_\text{giant}\right] &= \sum_{k = \min\{0, \; r_{1, i} - 1\}}^{r_{1, i}} f(k) = \sum_{k = \min\{0, \; r_{1, i} - 1\}}^{k_0 - 1} f(k) + \sum_{k = k_0}^{r_{1, i}} f(k) \\
	&= o(1) \cdot f(k_0) + \sum_{k = k_0}^{r_{1, i}} \mathcal{P}\left(r_{1, \text{giant}, i} = k \mid r_{1, i}, E_\text{giant}\right) \cdot M^{1 - k} \\
	&\leq o(1) \cdot \mathcal{P}\left(r_{1, \text{giant}, i} = k_0 \mid r_{1, i}, E_\text{giant}\right) \cdot M^{1 - k_0} + M^{1 - k_0} \\ 
	&\leq M^{1 - (1 - o(1))r_{1,i}},
	\end{align*}
\else
	\begin{align*}
	&\mathbb{E}\left[M^{1 - r_{1, \text{giant}, i}} \mid r_{1, i}, E_\text{giant}\right] = \sum_{k = \min\{0, \; r_{1, i} - 1\}}^{r_{1, i}} f(k) \\
	&= \sum_{k = \min\{0, \; r_{1, i} - 1\}}^{k_0 - 1} f(k) + \sum_{k = k_0}^{r_{1, i}} f(k) \\
	&= o(1) \cdot f(k_0) + \sum_{k = k_0}^{r_{1, i}} \mathcal{P}\left(r_{1, \text{giant}, i} = k \mid r_{1, i}, E_\text{giant}\right) \cdot M^{1 - k} \\
	&\leq o(1) \cdot \mathcal{P}\left(r_{1, \text{giant}, i} = k_0 \mid r_{1, i}, E_\text{giant}\right) \cdot M^{1 - k_0} + M^{1 - k_0} \\ 
	&\leq M^{1 - (1 - o(1))r_{1,i}}
\end{align*}
\fi
	which is the desired upper bound.
	\end{itemize}

Given the above, \eqref{eq:bound_P_E_2_S_1_r_1_i} can be written as 
\if\arXivver1
\begin{align}
\mathcal{P}_{\bm{x}^*}\left(E_{2, S_1}\mid r_1,  \left\{r_{1, i}\right\}_{i = 1}^\Delta, E_\mathrm{giant}\right)  &\leq \prod_{i = 1}^\Delta  M^{1 - (1 - o(1))r_{1,i}} = M^{\sum_{i = 1}^\Delta \left[1 - (1 - o(1))r_{1,i} \right]} \nonumber \\
&= M^{\Delta - (1 - o(1)) r_1},
\label{eq:bound_E_2_S_1_final}
\end{align}
\else
\begin{align}
	&\mathcal{P}_{\bm{x}^*}\left(E_{2, S_1}\mid r_1,  \left\{r_{1, i}\right\}_{i = 1}^\Delta, E_\mathrm{giant}\right)  \nonumber \\
	&\leq \prod_{i = 1}^\Delta  M^{1 - (1 - o(1))r_{1,i}} = M^{\sum_{i = 1}^\Delta \left[1 - (1 - o(1))r_{1,i} \right]} \nonumber \\
	&= M^{\Delta - (1 - o(1)) r_1}
	\label{eq:bound_E_2_S_1_final}
\end{align}
\fi
where the last step applies $\sum_{i = 1}^\Delta r_{1, i} = r_1$. Therefore, since the upper bound in \eqref{eq:bound_E_2_S_1_final} does not depend on the specific set $\{r_{1, i}\}_{i = 1}^\Delta$, we have
\begin{align*}
\mathcal{P}_{\bm{x}^*}\left(E_{2, S_1}\mid r_1,  E_\mathrm{giant}\right)  &\leq \mathbb{E}\left[M^{\Delta - (1 - o(1)) r_1} \mid r_1,  E_\mathrm{giant}\right] \\
&= M^{\Delta - (1 - o(1)) r_1}.
\end{align*}
Plugging this back into \eqref{eq:bound_E_2_S_1_last} completes the proof.

\subsection{Proof of Lemma~\ref{lemma:bound_P_Delta}}
\label{sec:proof_bound_P_Delta}
To begin with, plugging \eqref{eq:bound_E_2_r_1_r_2} into \eqref{eq:bound_P_Delta_middle} yields
\if\arXivver1
\begin{align}
	P_\Delta &\leq \min_{\alpha > 0}  \left(  \sum_{r_1}\sum_{r_2}\sum_{r^*} \mathcal{P}_{\bm{x}^*}(r_1)\mathcal{P}_{\bm{x}^*}(r_2)\mathcal{P}_{\bm{x}^*}(r^*) \cdot  \left(\frac{Ma(1-q)}{b(1-p)}\right)^{\alpha(r_1 + r_2 - r^*)} \cdot \left(n^{-\Theta(\log n)} + M^{2\Delta - (1 - o(1))(r_1 + r_2)}\right)  \right) \nonumber \\
	&=  \min_{\alpha > 0} \left(  \sum_{r}\sum_{r^*} \mathcal{P}_{\bm{x}^*}(r)\mathcal{P}_{\bm{x}^*}(r^*) \cdot  \left(\frac{Ma(1-q)}{b(1-p)}\right)^{\alpha(r - r^*)} \cdot \left(n^{-\Theta(\log n)} + M^{2\Delta - (1 - o(1)) r}\right)  \right) \nonumber \\
	&= \min_{\alpha > 0} \left( n^{-\Theta(\log n)} \cdot \mathbb{E}\left[\left(\frac{Ma(1-q)}{b(1-p)}\right)^{\alpha(r - r^*)}\right] + M^{2\Delta} \cdot \mathbb{E}\left[\left(\frac{Ma(1-q)}{b(1-p)}\right)^{\alpha(r - r^*)} \cdot M^{-(1 - o(1))r}\right]\right).
	\label{eq:bound_P_Delta_2}
\end{align}
\else
\begin{align}
	&P_\Delta \leq \min_{\alpha > 0}  \Bigg[  \sum_{r_1}\sum_{r_2}\sum_{r^*} \Bigg(\mathcal{P}_{\bm{x}^*}(r_1)\mathcal{P}_{\bm{x}^*}(r_2)\mathcal{P}_{\bm{x}^*}(r^*) \nonumber \\
	&\quad \cdot  \left(\frac{Ma(1-q)}{b(1-p)}\right)^{\alpha(r_1 + r_2 - r^*)} \nonumber \\
	&\quad \cdot \left(n^{-\Theta(\log n)} + M^{2\Delta - (1 - o(1))(r_1 + r_2)}\right) \Bigg) \Bigg] \nonumber \\
	&=  \min_{\alpha > 0} \Bigg[  \sum_{r}\sum_{r^*} \Bigg( \mathcal{P}_{\bm{x}^*}(r)\mathcal{P}_{\bm{x}^*}(r^*) \nonumber \\
	&\quad \cdot  \left(\frac{Ma(1-q)}{b(1-p)}\right)^{\alpha(r - r^*)} \cdot \left(n^{-\Theta(\log n)} + M^{2\Delta - (1 - o(1)) r}\right) \Bigg) \Bigg] \nonumber \\
	&= \min_{\alpha > 0} \Bigg[ n^{-\Theta(\log n)} \cdot \mathbb{E}\left[\left(\frac{Ma(1-q)}{b(1-p)}\right)^{\alpha(r - r^*)}\right] \nonumber\\
	&\quad+ M^{2\Delta} \cdot \mathbb{E}\left[\left(\frac{Ma(1-q)}{b(1-p)}\right)^{\alpha(r - r^*)} \cdot M^{-(1 - o(1))r}\right]\Bigg].
	\label{eq:bound_P_Delta_2}
\end{align}
\fi
Recall that $r$ and $r^*$ defined in \eqref{eq:def_r_in_out} are independent and follow binomial distributions as 
	\if\arXivver1
\begin{equation*}
	r \sim \mathrm{Binom}\left(2\Delta\left(\frac{n}{2}  - \Delta\right), q\right), \quad r^* \sim \mathrm{Binom}\left(2\Delta\left(\frac{n}{2}  - \Delta\right), p\right).
\end{equation*}
\else
\begin{equation*}
	\begin{aligned}
		r &\sim \mathrm{Binom}\left(2\Delta\left(\frac{n}{2}  - \Delta\right), q\right), \\
		r^* &\sim \mathrm{Binom}\left(2\Delta\left(\frac{n}{2}  - \Delta\right), p\right).
	\end{aligned}
\end{equation*}
\fi
Then we apply the following standard result:
\begin{lemma}
	Given a binomial random variable $r \sim \mathrm{Binom}(N, p)$, it satisfies
	\begin{equation*}
		\mathbb{E}\left[t^r\right] \leq e^{-pN(1-t)},
	\end{equation*}
	for any constant $t > 0$.	
\end{lemma}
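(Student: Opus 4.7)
The plan is to proceed by a direct computation of the probability generating function of the binomial distribution, followed by a single application of the elementary inequality $1+x \le e^x$. Since $r \sim \mathrm{Binom}(N,p)$ can be represented as the sum $r = \sum_{i=1}^N X_i$ of $N$ independent Bernoulli$(p)$ variables, independence will factor the expectation as
\begin{equation*}
\mathbb{E}\!\left[t^r\right] \;=\; \prod_{i=1}^N \mathbb{E}\!\left[t^{X_i}\right] \;=\; (1-p+pt)^N \;=\; \bigl(1 + p(t-1)\bigr)^N.
\end{equation*}

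Next, I would invoke $1+x \le e^x$, valid for every $x \in \mathbb{R}$, with the choice $x = p(t-1)$. This immediately upgrades the previous identity to
\begin{equation*}
\bigl(1 + p(t-1)\bigr)^N \;\le\; e^{Np(t-1)} \;=\; e^{-pN(1-t)},
\end{equation*}
which is exactly the stated bound. The hypothesis $t>0$ ensures $1-p+pt > 0$ for $p \in [0,1]$, so no sign issue arises when raising to the $N$-th power; in fact the same argument works for every real $t$.

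I do not anticipate any genuine obstacle: the claim is essentially the moment generating function of a Bernoulli sum combined with the standard exponential inequality, and the form is tailored precisely to what is needed when summing over the binomial variables $r$ and $r^*$ in \eqref{eq:bound_P_Delta_2}. The only mild care is bookkeeping the substitution $t \leftarrow (Ma/b)^{\alpha}$ or $t \leftarrow (Ma/b)^{\alpha} \cdot M^{-(1-o(1))}$ that the calling context will require, but that is routine and does not affect the proof of the lemma itself.
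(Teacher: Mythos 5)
Your proposal is correct and essentially matches the paper's argument: both reduce to the identity $\mathbb{E}[t^r] = (1-p+pt)^N$ (the paper sums the binomial series directly, you factor over independent Bernoulli variables, which is the same computation) and then apply the elementary bound $1-x \le e^{-x}$ to conclude. No gap.
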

\begin{proof}
	By definition,
			\if\arXivver1
	\begin{align*}
		\mathbb{E}[t^x] &= \sum_{k = 0}^N \binom{N}{k} p^k (1 - p)^{N-k} \cdot t^k = \sum_{k = 0}^N \binom{N}{k} (pt)^k (1- p)^{N-k} \\
		&= (1- p + pt)^N = (1 - (1-t)p)^N\\
		&\leq e^{-(1-t)pN}, 
	\end{align*}
\else
\begin{align*}
	\mathbb{E}[t^x] &= \sum_{k = 0}^N \binom{N}{k} p^k (1 - p)^{N-k} \cdot t^k \\
	&= \sum_{k = 0}^N \binom{N}{k} (pt)^k (1- p)^{N-k} \\
	&= (1- p + pt)^N \\
	&= (1 - (1-t)p)^N\\
	&\leq e^{-(1-t)pN}, 
\end{align*}
\fi
	where the last step uses the $1 - x \leq e^{-x}$ for any $x \in \mathbb{R}$.
\end{proof}

As a result, by denoting $N_\Delta := 2\Delta\left(\frac{n}{2}  - \Delta\right)$, the expectations in \eqref{eq:bound_P_Delta_2} satisfy
	\if\arXivver1
\begin{align}
	\mathbb{E}\left[\left(\frac{Ma(1-q)}{b(1-p)}\right)^{\alpha(r - r^*)}\right]  &= \mathbb{E}\left[\left(\frac{Ma(1-q)}{b(1-p)}\right)^{\alpha r}\right] \cdot \mathbb{E}\left[\left(\frac{Ma(1-q)}{b(1-p)}\right)^{-\alpha r^*}\right] \nonumber \\
	&\leq \exp\left(-qN_\Delta\left(1 - \left(\frac{Ma(1-q)}{b(1-p)}\right)^\alpha\right)\right) \cdot \exp\left(-pN_\Delta\left(1 - \left(\frac{Ma(1-q)}{b(1-p)}\right)^{-\alpha}\right)\right) \nonumber \\
	&= \exp\left(-\frac{N_\Delta\log n}{n}\left(a + b - a\left(\frac{Ma(1-q)}{b(1-p)}\right)^{-\alpha} - b\left(\frac{Ma(1-q)}{b(1-p)}\right)^{\alpha} \right)\right) \nonumber \\
	&= \exp\left(-\left(\frac{N_\Delta\log n}{n}\right) f_1(\alpha)\right),
	\label{eq:f_1_func}
\end{align}
\else
\begin{align}
	&\mathbb{E}\left[\left(\frac{Ma(1-q)}{b(1-p)}\right)^{\alpha(r - r^*)}\right]  \nonumber \\
	&= \mathbb{E}\left[\left(\frac{Ma(1-q)}{b(1-p)}\right)^{\alpha r}\right] \cdot \mathbb{E}\left[\left(\frac{Ma(1-q)}{b(1-p)}\right)^{-\alpha r^*}\right] \nonumber \\
	&\leq \exp\left(-qN_\Delta\left(1 - \left(\frac{Ma(1-q)}{b(1-p)}\right)^\alpha\right)\right) \nonumber \\
	&\quad \cdot \exp\left(-pN_\Delta\left(1 - \left(\frac{Ma(1-q)}{b(1-p)}\right)^{-\alpha}\right)\right) \nonumber \\
	&= \exp\left(-\frac{N_\Delta\log n}{n}\left(a + b - a\left(\frac{Ma(1-q)}{b(1-p)}\right)^{-\alpha} - b\left(\frac{Ma(1-q)}{b(1-p)}\right)^{\alpha} \right)\right) \nonumber \\
	&= \exp\left(-\left(\frac{N_\Delta\log n}{n}\right) f_1(\alpha)\right)
	\label{eq:f_1_func}
\end{align}
\fi
where we use the fact $p = \frac{a\log n}{n}, p = \frac{b\log n}{n}$ and define 
\begin{equation*}
	f_1(\alpha) :=  a + b - a\left(\frac{Ma(1-q)}{b(1-p)}\right)^{-\alpha} - b\left(\frac{Ma(1-q)}{b(1-p)}\right)^{\alpha}.
\end{equation*}
Similarly, 
	\if\arXivver1
\begin{align}
	&\mathbb{E}\left[\left(\frac{Ma(1-q)}{b(1-p)}\right)^{\alpha(r - r^*)} \cdot M^{-(1 - o(1))r}\right] \nonumber \\
    &= \mathbb{E}\left[\left(\frac{Ma(1-q)}{b(1-p)}\right)^{\alpha r} \cdot M^{-(1 - o(1))r}\right] \cdot \mathbb{E}\left[\left(\frac{Ma(1-q)}{b(1-p)}\right)^{-\alpha r^*}\right] \nonumber \\
	&\leq \exp\left(-qN_\Delta\left(1 - \left(\frac{Ma(1-q)}{b(1-p)}\right)^\alpha \cdot M^{-(1 - o(1))}\right)\right) \cdot \exp\left(-pN_\Delta\left(1 - \left(\frac{Ma(1-q)}{b(1-p)}\right)^{-\alpha}\right)\right)\nonumber \\
	&= \exp\left(-\frac{N_\Delta\log n}{n} \left(a + b - a\left(\frac{Ma(1-q)}{b(1-p)}\right)^{-\alpha} - b\left(\frac{Ma(1-q)}{b(1-p)}\right)^\alpha \cdot M^{-(1 - o(1))}\right)\right) \nonumber\\
	&= \exp\left(-\left(\frac{N_\Delta\log n}{n}\right) f_2(\alpha)\right),
	\label{eq:bound_E_Ma_b}
\end{align}  
\else
\begin{align}
	&\mathbb{E}\left[\left(\frac{Ma(1-q)}{b(1-p)}\right)^{\alpha(r - r^*)} \cdot M^{-(1 - o(1))r}\right] \nonumber \\
	&= \mathbb{E}\left[\left(\frac{Ma(1-q)}{b(1-p)}\right)^{\alpha r} \cdot M^{-(1 - o(1))r}\right] \cdot \mathbb{E}\left[\left(\frac{Ma(1-q)}{b(1-p)}\right)^{-\alpha r^*}\right] \nonumber \\
	&\leq \exp\left(-qN_\Delta\left(1 - \left(\frac{Ma(1-q)}{b(1-p)}\right)^\alpha \cdot M^{-(1 - o(1))}\right)\right)  \nonumber \\
	&\quad \cdot \exp\left(-pN_\Delta\left(1 - \left(\frac{Ma(1-q)}{b(1-p)}\right)^{-\alpha}\right)\right)\nonumber \\
	&= \exp\Bigg(-\frac{N_\Delta\log n}{n} \Bigg(a + b - a\left(\frac{Ma(1-q)}{b(1-p)}\right)^{-\alpha} \nonumber \\
	&\quad- b\left(\frac{Ma(1-q)}{b(1-p)}\right)^\alpha \cdot M^{-(1 - o(1))}\Bigg)\Bigg) \nonumber\\
	&= \exp\left(-\left(\frac{N_\Delta\log n}{n}\right) f_2(\alpha)\right)
	\label{eq:bound_E_Ma_b}
\end{align}  
\fi
where we define
{\color{black}
\begin{equation*}
	f_2(\alpha) := a + b - a\left(\frac{Ma(1-q)}{b(1-p)}\right)^{-\alpha} - b\left(\frac{Ma(1-q)}{b(1-p)}\right)^\alpha \cdot M^{-(1 - o(1))}.
	\label{eq:def_func_f}
\end{equation*}
Notably, since $\alpha$ can be chosen arbitrarily,  $f_2(\alpha)$ satisfies 
\if\arXivver1 
\begin{equation}
	f_2(\alpha) \leq a + b  - 2\sqrt{\frac{ab}{M^{1 - o(1)}}} = a + b - (2+ o(1))\sqrt{\frac{ab}{M}} = \Theta(1), 
\label{eq:f_2_upper_bound}
\end{equation}
\else
\begin{equation*}
\begin{aligned}
    f_2(\alpha) &\leq a + b  - 2\sqrt{\frac{ab}{M^{1 - o(1)}}} \\
    &= a + b - (2+ o(1))\sqrt{\frac{ab}{M}} \\
    &= O(1), 
\end{aligned}
\end{equation*}
\fi
where the equality holds when $\alpha = \alpha_0$ such that 
	\if\arXivver1
\begin{equation*}
	a\left(\frac{Ma(1-q)}{b(1-p)}\right)^{-\alpha_0} = b\left(\frac{Ma(1-q)}{b(1-p)}\right)^{\alpha_0} \cdot M^{-(1 - o(1))} \quad \Rightarrow \quad \left(\frac{Ma(1-q)}{b(1-p)}\right)^{\alpha_0} = \sqrt{\frac{a M^{1 - o(1)}}{b}}.
\end{equation*}
\else
\begin{equation*}
	\begin{aligned}
		a\left(\frac{Ma(1-q)}{b(1-p)}\right)^{-\alpha_0} &= b\left(\frac{Ma(1-q)}{b(1-p)}\right)^{\alpha_0} \cdot M^{-(1 - o(1))} \\
		\Rightarrow \quad \left(\frac{Ma(1-q)}{b(1-p)}\right)^{\alpha_0} &= \sqrt{\frac{a M^{1 - o(1)}}{b}}.
	\end{aligned}
\end{equation*}
\fi
Then, plugging $\alpha = \alpha_0$ into $f_1(\alpha)$ yields
\begin{equation}
	f_1(\alpha_0) = -a - b + \sqrt{\frac{ab}{M^{1 - o(1)}}} + \sqrt{abM^{1 - o(1)}} = \Theta(1). 
\label{eq:f_1_upper_bound}
\end{equation}

Now, given \eqref{eq:f_1_upper_bound} and \eqref{eq:f_2_upper_bound}, by letting $\alpha = \alpha_0$ we can upper bound \eqref{eq:bound_P_Delta_2} as 
\if\arXivver1
\begin{align*}
	\mathcal{P}_{\Delta} &\leq n^{-\Theta(\log n)} \cdot \exp\left(-\left(\frac{N_\Delta\log n}{n}\right) f_1(\alpha_0)\right)+ M^{2\Delta} \cdot \exp\left(-\left(\frac{N_\Delta\log n}{n}\right) f_2(\alpha_0)\right) \\
	&= n^{-\Theta(\log n)} \cdot \exp\left(-\left(\frac{N_\Delta\log n}{n}\right) \cdot \Theta(1)\right)+ M^{2\Delta} \cdot \exp\left(-\left(\frac{N_\Delta\log n}{n}\right) f_2(\alpha_0) \right) \\
	&= (1 + o(1)) \cdot M^{2\Delta} \cdot \exp\left(-\left(\frac{N_\Delta\log n}{n}\right) f_2(\alpha_0)\right)\\
	&\leq M^{2\Delta} \cdot \exp\left(-\frac{N_\Delta\log n}{n} \left(a + b - (2 +o(1))\sqrt{\frac{ab}{M}}\right)\right), 
\end{align*}
\else
\begin{align*}
	\mathcal{P}_{\Delta} &\leq n^{-\Theta(\log n)} \cdot \exp\left(-\left(\frac{N_\Delta\log n}{n}\right) f_1(\alpha_0)\right) \\
	&\quad + M^{2\Delta} \cdot \exp\left(-\left(\frac{N_\Delta\log n}{n}\right) f_2(\alpha_0)\right) \\
	&= n^{-\Theta(\log n)} \cdot \exp\left(-\left(\frac{N_\Delta\log n}{n}\right) \cdot O(1)\right) \\
	&\quad+ M^{2\Delta} \cdot \exp\left(-\left(\frac{N_\Delta\log n}{n}\right) \cdot f_2(\alpha_0) \right) \\
	&= (1 + o(1)) \cdot M^{2\Delta} \cdot \exp\left(-\left(\frac{N_\Delta\log n}{n}\right) f_2(\alpha_0)\right)\\
	&= M^{2\Delta} \cdot \exp\left(-\frac{N_\Delta\log n}{n} \left(a + b - (2 +o(1))\sqrt{\frac{ab}{M}}\right)\right), 
\end{align*}
\fi
which completes the proof.}

\subsection{Proof of Lemma~\ref{lemma:bound_P_E_H_A_B}}
\label{sec:bound_P_E_H_A_B_proof}
We start from \eqref{eq:bound_E_H_A_E_B}, and recall that $r_1$ and $r_2$ are two binomial distributions that follow \eqref{eq:r_1_r_2_distribution}. Here, we further denote another binomial random variable $r_1^\prime$ that facilitates our analysis
\begin{equation*}
	r_1^\prime \sim \textrm{Binom}\left(\frac{n}{2}, \; p\right), 
\end{equation*}
which slightly differs from $r_1 \sim \textrm{Binom}\left(\frac{n}{2} - \frac{n}{\gamma}, p\right)$. Then, \eqref{eq:bound_E_H_A_E_B} satisfies
\if\arXivver1
\begin{align}
\mathcal{P}\left(E_{H,A}^{(i)} \bigcap E_B^{(i, 0)}\right)  &= \sum_{r_1}\sum_{r_2} \mathcal{P}(r_1) \mathcal{P}(r_2) \cdot \mathbbm{1}\left(r_2 \geq r_1 + \delta\right) \cdot M^{-r_2} \nonumber \\
&= \sum_{r_2} \mathcal{P}(r_2)  \cdot \mathcal{P}\left(r_1 \leq r_2 - \delta\right) \cdot M^{-r_2} \nonumber \\
&\overset{(a)}{\geq} \sum_{r_2} \mathcal{P}(r_2) \cdot \mathcal{P}\left(r_1^\prime \leq r_2 - \delta\right) \cdot M^{-r_2} \nonumber \\
&\geq \sum_{r_2 = \delta}^{n/2} \mathcal{P}(r_2) \cdot \mathcal{P}\left(r_1^\prime = r_2 - \delta\right) \cdot M^{-r_2}, 
\label{eq:bound_E_H_A_E_B_v4}
\end{align}
\else
\begin{align}
	&\mathcal{P}\left(E_{H,A}^{(i)} \bigcap E_B^{(i, 0)}\right) \nonumber \\
	&= \sum_{r_1}\sum_{r_2} \mathcal{P}(r_1) \mathcal{P}(r_2) \cdot \mathbbm{1}\left(r_2 \geq r_1 + \delta\right) \cdot M^{-r_2} \nonumber \\
	&= \sum_{r_2} \mathcal{P}(r_2)  \cdot \mathcal{P}\left(r_1 \leq r_2 - \delta\right) \cdot M^{-r_2} \nonumber \\
	&\overset{(a)}{\geq} \sum_{r_2} \mathcal{P}(r_2) \cdot \mathcal{P}\left(r_1^\prime \leq r_2 - \delta\right) \cdot M^{-r_2} \nonumber \\
	&\geq \sum_{r_2 = \delta}^{n/2} \mathcal{P}(r_2) \cdot \mathcal{P}\left(r_1^\prime = r_2 - \delta\right) \cdot M^{-r_2} 
	\label{eq:bound_E_H_A_E_B_v4}
\end{align}
\fi
where $(a)$ comes from the fact that $\mathcal{P}\left(r_1 \leq x\right) \geq \mathcal{P}\left(r_1^\prime \leq x\right)$ for any $x$ as $r^{\prime}$ has more trials than $r$.  To proceed, by plugging in the binomial distribution of $r_1^\prime$ and $r_2$, \eqref{eq:bound_E_H_A_E_B_v4} is given as
\if\arXivver1
\begin{align}
\mathcal{P}\left(E_{H,A}^{(i)} \bigcap E_B^{(i, 0)}\right) &\geq \sum_{k = \delta}^{n/2} \mathcal{P}(r_2 = k) \cdot \mathcal{P}(r_1^\prime = k - \delta) \cdot M^{-k} \nonumber\\
&= \sum_{k = \delta}^{n/2}  \binom{n/2}{k} \left(\frac{q}{M}\right)^{k} (1-q)^{\frac{n}{2} - k} \cdot \binom{n/2}{k-\delta} p^{k - \delta} (1 - p)^{\frac{n}{2} - k + \delta} \nonumber \\
&= \sum_{k = 0}^{n/2 - \delta} \binom{n/2}{k} \binom{n/2}{{k + \delta}}  p^k(1-p)^{\frac{n}{2} - k} \left(\frac{q}{M}\right)^{k + \delta} (1-q)^{\frac{n}{2} - k - \delta}.
\label{eq:bound_E_H_A_B_v3}
\end{align}
\else
\begin{align}
	&\mathcal{P}\left(E_{H,A}^{(i)} \bigcap E_B^{(i, 0)}\right) \nonumber \\
	&\geq \sum_{k = \delta}^{n/2} \mathcal{P}(r_2 = k) \cdot \mathcal{P}(r_1^\prime = k - \delta) \cdot M^{-k} \nonumber\\
	&= \sum_{k = \delta}^{n/2} \Bigg[ \binom{n/2}{k} \left(\frac{q}{M}\right)^{k} (1-q)^{\frac{n}{2} - k} \nonumber \\
	&\quad \cdot \binom{n/2}{k-\delta} p^{k - \delta} (1 - p)^{\frac{n}{2} - k + \delta} \Bigg]\nonumber \\
	&= \sum_{k = 0}^{n/2 - \delta} \Bigg[\binom{n/2}{k} \binom{n/2}{{k + \delta}}  p^k(1-p)^{\frac{n}{2} - k} \nonumber \\
	&\quad \cdot \left(\frac{q}{M}\right)^{k + \delta} (1-q)^{\frac{n}{2} - k - \delta}\Bigg].
	\label{eq:bound_E_H_A_B_v3}
\end{align}
\fi
Now, in order to calculate \eqref{eq:bound_E_H_A_B_v3}, let us define 
	\if\arXivver1
\begin{align}
	f(k) &:= \binom{n/2}{k}\binom{n/2}{{k + \delta} } \left(\frac{pq}{M(1 - p)(1 - q)}\right)^{k} \cdot \underbrace{[(1-p)(1-q)]^{\frac{n}{2}} \cdot \left[\frac{q}{M(1 - q)}\right]^\delta}_{=: C} \nonumber\\	
	&= C \binom{n/2}{k}\binom{n/2}{{k + \delta} } \left(\frac{pq}{M(1 - p)(1 - q)}\right)^{k}.
	\label{eq:def_f_k_bound}
\end{align}
\else
\begin{align}
	f(k) &:= \binom{n/2}{k}{n/2 \choose {k + \delta} } \left(\frac{pq}{M(1 - p)(1 - q)}\right)^{k} \nonumber\\
	&\quad \cdot \underbrace{[(1-p)(1-q)]^{\frac{n}{2}} \cdot \left[\frac{q}{M(1 - q)}\right]^\delta}_{=: C} \nonumber\\	
	&= C \binom{n/2}{k}\binom{n/2}{{k + \delta} } \left(\frac{pq}{M(1 - p)(1 - q)}\right)^{k}.
	\label{eq:def_f_k_bound}
\end{align}
\fi
Then, \eqref{eq:bound_E_H_A_B_v3} satisfies
\begin{equation}
\mathcal{P}\left(E_{H,A}^{(i)} \bigcap E_B^{(i, 0)}\right) = \sum_{k = 0}^{n/2 - \delta} f(k) \geq \max_{0 \leq k \leq \frac{n}{2} - \delta} f(k).
\label{eq:bound_P_H_A_B_max_f}
\end{equation}
Our next job is to determine $\max_{k} f(k)$. To this end, we first show the following result:
\begin{lemma}
Given $f(k)$ defined in \eqref{eq:def_f_k_bound}, it satisfies $$ k^* = \argmax\limits_{0 \leq k \leq \frac{n}{2} - \delta} f(k) = \Theta(\log n).$$
\label{lemma:argmax_k_bound}
\end{lemma}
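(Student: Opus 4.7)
The plan is to locate the maximizer by examining the one-step ratio
\begin{equation*}
r(k) \;:=\; \frac{f(k+1)}{f(k)} \;=\; \frac{(n/2 - k)(n/2 - k - \delta)}{(k+1)(k + \delta + 1)} \cdot \frac{pq}{M(1-p)(1-q)}.
\end{equation*}
Since both factors $(n/2 - k)/(k+1)$ and $(n/2 - k - \delta)/(k + \delta + 1)$ are strictly decreasing in $k$ and are positive throughout $0 \le k \le n/2 - \delta - 1$, their product is strictly decreasing, and hence so is $r(k)$. Therefore $f$ is strictly log-concave on the integer lattice, $r(0) = \Theta(n^2 pq / M) = \Theta(\log^2 n) \gg 1$, and $r(n/2 - \delta - 1) = 0 < 1$; thus the maximizer $k^*$ is the unique integer with $r(k^*-1) \ge 1 > r(k^*)$.

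Next I would locate $k^*$ asymptotically by solving $r(k) = 1$. Plugging in $p = a\log n / n$, $q = b\log n / n$ gives $pq / (M(1-p)(1-q)) = (1 + o(1)) \cdot ab \log^2 n / (M n^2)$. Assuming $k = o(n)$ (to be verified a posteriori) and treating $\delta$ as a constant of order $\log n / \log\log n$, the equation $r(k) = 1$ reduces at leading order to
\begin{equation*}
(k+1)(k + \delta + 1) \;=\; (1 + o(1)) \cdot \frac{n^2}{4} \cdot \frac{ab \log^2 n}{M n^2} \;=\; (1+o(1)) \cdot \frac{ab \log^2 n}{4 M}.
\end{equation*}
Solving the quadratic and using $\delta = o(\log n)$, we obtain
\begin{equation*}
k^* \;=\; \frac{\log n}{2}\sqrt{\frac{ab}{M}} \,(1 + o(1)),
\end{equation*}
which is $\Theta(\log n)$ and, in particular, satisfies $k^* = o(n)$, justifying the earlier approximation.

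The argument is essentially a direct asymptotic computation; the main technical care needed is bookkeeping the lower-order factors $(1-p), (1-q), $ and the shift by $\delta$, so that one can honestly conclude $k^* = \Theta(\log n)$ rather than a sharper statement. I expect no conceptual obstacle: once monotonicity of $r(k)$ is observed, the localization of $k^*$ to order $\log n$ follows from the fact that the ``driving coefficient'' $n^2 pq / M$ is of order $\log^2 n$, while the combinatorial ratio at $k = \Theta(\log n)$ is also of order $\log^2 n$; balancing these two $\log^2 n$ scales is precisely what pins $k^*$ at $\Theta(\log n)$.
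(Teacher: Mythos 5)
Your proposal is correct and takes essentially the same route as the paper: both arguments analyze the successive ratio $f(k+1)/f(k)$ and locate $k^*$ where that ratio transitions from above $1$ to below $1$ (the paper simply shows the ratio is $\omega(1)$ for $k=o(\log n)$ and $o(1)$ for $k=\omega(\log n)$, while you additionally exploit monotonicity to solve $r(k)=1$ and get the sharper $k^*\sim\tfrac{1}{2}\sqrt{ab/M}\,\log n$). One small bookkeeping point: because of the $(\delta+1)$ in the denominator, $r(0)$ is of order $\log n\,\log\log n$ rather than $\log^2 n$, but it is still $\gg 1$, so your conclusion is unaffected.
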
  
\begin{proof}
We consider the ratio $f(k+1) / f(k)$, which is given as
\begin{align}
	\frac{f(k+1)}{f(k)} &= \frac{\binom{n/2}{k+1}}{\binom{n/2}{k}} \cdot \frac{\binom{n/2}{ k+\delta + 1}}{\binom{n/2}{k + \delta}} \cdot \frac{pq}{M(1 - p)(1-q)}  \nonumber \\
	&= \frac{\left(\frac{n}{2} - k\right)\left(\frac{n}{2} - k - \delta\right)}{(k+1)(k + \delta +1)} \cdot \left(\frac{ab \cdot\log^2 n}{Mn^2}\right) (1+ o(1)) \nonumber  \\
	&= \frac{\left(\frac{1}{2} - \frac{k}{n}\right)\left(\frac{1}{2} - \frac{k - \delta}{n}\right)}{(k+1)(k + \delta +1)} \cdot \left(\frac{ab \cdot\log^2 n}{M}\right) (1+ o(1)),
	\label{eq:ratio_f_k_1_k}
\end{align}
where we plugged in $p = \frac{a\log n}{n}$ and $q = \frac{b\log n}{n}$. Then, when $k = \omega(\log n)$, \eqref{eq:ratio_f_k_1_k} satisfies
\begin{equation*}
	\frac{f(k+1)}{f(k)} \leq \left(\frac{\log^2 n}{k^2}\right) \cdot \left(\frac{ab}{4M}\right) \cdot (1+o(1)) = o(1), 
\end{equation*}
which indicates $k^* = O(\log n)$. Also, when $k = o(\log n)$, \eqref{eq:ratio_f_k_1_k} satisfies
\begin{equation*}
	\frac{f(k+1)}{f(k)}  \geq \frac{\log^2 n}{(k + \delta + 1)^2} \cdot \left(\frac{ab}{4M}\right) \cdot (1+o(1)) = \omega(1), 
\end{equation*}
which indicates $k^* = \Omega(\log n)$ and completes the proof.
\end{proof}

Lemma~\ref{lemma:argmax_k_bound} suggests us to focus on $f(k)$ with $k = \Theta(\log n)$. To proceed, by taking the logarithm of $f(k)$ we get
\if\arXivver1 
\begin{equation}
	\log f(k) = \log \binom{n/2}{k} + \log \binom{n/2}{k + \delta}  + k\log \left(\frac{pq}{M(1 - p)(1 - q)}\right) + \log C.
	\label{eq:log_f_k_expand}
\end{equation}
\else
\begin{equation}
	\begin{aligned}
		\log f(k) &= \log {n/2 \choose k} + \log {n/2 \choose k + \delta}  \\
		&\quad + k\log \left(\frac{pq}{M(1 - p)(1 - q)}\right) + \log C.
	\end{aligned}
	\label{eq:log_f_k_expand}
\end{equation}
\fi
Next, we consider each term separately. First, by using the binomial coefficient bound~\cite[Lemma 23.1]{frieze2016introduction} that $\binom{n}{k} \geq \frac{n^k}{k!}\left(1 - \frac{k(k-1)}{2n}\right)$, we have
\if\arXivver1
\begin{align}
\log \binom{n/2}{k} &\geq k\log\left(\frac{n}{2}\right) - \log k! + \log \left(1 - \frac{k(k-1)}{2n}\right) \nonumber \\
&\overset{(a)}{\geq}  k\log\left(\frac{n}{2}\right)  - (k+1)\log k + (k-1) - \frac{k(k-1)}{2n} \nonumber \\
&= k\left[\log \left(\frac{n}{2}\right) - \log \left(\frac{k}{e}\right)- o(1)\right],
\label{eq:bound_f_k_coeff_1}
\end{align}
\else
\begin{align}
	&\log\binom{n/2}{k} \nonumber \\
	&\geq k\log\left(\frac{n}{2}\right) - \log k! + \log \left(1 - \frac{k(k-1)}{2n}\right) \nonumber \\
	&\overset{(a)}{\geq}  k\log\left(\frac{n}{2}\right)  - (k+1)\log k + (k-1) - \frac{k(k-1)}{2n} \nonumber \\
	&= k\left[\log \left(\frac{n}{2}\right) - \log \left(\frac{k}{e}\right)- o(1)\right],
	\label{eq:bound_f_k_coeff_1}
\end{align}
	\fi
where $(a)$ applies the factorial bound that $\log k! \leq (k+1)\log k - (k-1)$ and $\log(1 + x) \leq x, \forall x  > -1$. In a similar manner, 
\begin{align}
	\log\binom{n/2}{k+\delta} &\geq (k + \delta) \left[\log\left(\frac{n}{2}\right) - \log \left(\frac{k+ \delta}{e}\right) - o(1)\right].
	\label{eq:bound_f_k_coeff_2} 
\end{align}
Also, we have
\if\arXivver1
\begin{align}
k\log \left(\frac{pq}{M(1 - p)(1 - q)}\right) &= k\log \left(\frac{ab \cdot \log^2 n}{M n^2}\right) + o(k) \nonumber\\
&=  k\left(\log \left(\frac{ab}{M}\right)  + 2\log \log n - 2\log n + o(1) \right), 
\label{eq:bound_f_k_coeff_3}
\end{align}
\else
\begin{align}
	&k\log \left(\frac{pq}{M(1 - p)(1 - q)}\right) \nonumber \\
	&= k\log \left(\frac{ab \cdot \log^2 n}{M n^2}\right) + o(k) \nonumber\\
	&=  k\left(\log \left(\frac{ab}{M}\right)  + 2\log \log n - 2\log n + o(1) \right), 
	\label{eq:bound_f_k_coeff_3}
\end{align}
\fi
and 
	\if\arXivver1
\begin{align}
		\log C &= \left(\frac{n}{2}\right) \cdot \log[(1 - p)(1 - q) ] + \delta \cdot \log\left(\frac{q}{M(1-q)}\right) \nonumber  \\
		&\geq -\left(\frac{n}{2}\right) \cdot \frac{(a + b) \log n}{n} + \delta (\log \log n - \log n) - o(\log n) \nonumber \\
		&= \left(-\frac{a + b}{2} + 1\right)\log n - \delta \log n - o(\log n)
		\label{eq:bound_f_k_coeff_4}.
\end{align}
\else
\begin{align}
	\log C &= \left(\frac{n}{2}\right) \cdot \log[(1 - p)(1 - q) ] + \delta \cdot \log\left(\frac{q}{M(1-q)}\right) \nonumber  \\
	&\geq -\left(\frac{n}{2}\right) \cdot \frac{(a + b) \log n}{n} + \delta (\log \log n - \log n) \nonumber \\
	&\quad - o(\log n) \nonumber \\
	&= \left(-\frac{a + b}{2} + 1\right)\log n - \delta \log n - o(\log n)
	\label{eq:bound_f_k_coeff_4}.
\end{align}
\fi
Now, assembling \eqref{eq:bound_f_k_coeff_1} - \eqref{eq:bound_f_k_coeff_4} and plugging them into \eqref{eq:log_f_k_expand} yields 
\if\arXivver1 
\begin{align}
	\log f(k) &\geq \underbrace{-k\log\left(\frac{2k}{e}\right) - (k + \delta) \log \left(\frac{2(k + \delta)}{e}\right) + k\left[\log \left(\frac{ab}{M}\right) + 2\log\log n\right]}_{=: h(k)} \nonumber \\
	&+ \left(-\frac{a + b}{2} + 1\right)\log n - o(\log n) \nonumber \\
	&= h(k) + \left(-\frac{a + b}{2} + 1\right)\log n - o(\log n).
	\label{eq:bound_f_k_final}
\end{align}
\else
\begin{align}
	\log f(k) &\geq -k\log\left(\frac{2k}{e}\right) - (k + \delta) \log \left(\frac{2(k + \delta)}{e}\right) \nonumber \\
	&\quad + k\left[\log \left(\frac{ab}{M}\right) + 2\log\log n\right] \nonumber \\
	&\quad + \left(-\frac{a + b}{2} + 1\right)\log n - o(\log n) \nonumber \\
	&= h(k) + \left(-\frac{a + b}{2} + 1\right)\log n - o(\log n)
	\label{eq:bound_f_k_final}
\end{align}
where we define 
\begin{align*}
	h(k) &:= -k\log\left(\frac{2k}{e}\right) - (k + \delta) \log \left(\frac{2(k + \delta)}{e}\right) \\
	&\quad + k\left[\log \left(\frac{ab}{M}\right) + 2\log\log n\right].
\end{align*}
\fi
Then, we can estimate $k^*$ by finding the maximum of $h(k)$. Also, since $h(k)$ is concave, this can be done by solving $h^\prime(k) = 0$, which gives us 
\begin{equation*}
	k_0 = \argmax_k h(k) = (1 + o(1))  \left(\sqrt{\frac{ab}{4M}}\right) \log n.
\end{equation*}
Now, by plugging $k = k_0$ into \eqref{eq:bound_f_k_final} we obtain
\begin{align*}
	\max_{k} \log f(k) &\geq h(k_0) + \left(-\frac{a + b}{2} + 1\right)\log n - o(\log n) \\
	&= \left(-\frac{a + b}{2} + \sqrt{\frac{ab}{M}}\right) \log n - o(\log n).
\end{align*}
Plugging this back into \eqref{eq:bound_P_H_A_B_max_f} completes the proof.

\subsection{Proof of Lemma~\ref{lemma:bound_P_E_H_A_B_v2}}
\label{sec:proof_bound_P_E_H_A_B_v2}
The proof is very similar to the one of Lemma~\ref{lemma:bound_P_E_H_A_B} given in Appendix~\ref{sec:bound_P_E_H_A_B_proof}, therefore we will skip those repeated analyses. To begin with, \eqref{eq:bound_E_H_A_E_B_v2} satisfies
	\if\arXivver1
\begin{align*}
	\mathcal{P}\left(E_{H,A}^{(i)} \bigcap E_B^{(i, 0)}\right) &= \sum_{r_2} \mathcal{P}(r_2) \mathcal{P}(r_1 \geq r_2+1) M^{-r_2} \geq \sum_{r_2} \mathcal{P}(r_2) \mathcal{P}(r_1  = r_2+1) M^{-r_2} \\
	&= \sum_{k = 0}^{\frac{n}{2} - \frac{n}{r} - 1} \binom{n/2}{k} q^k (1 - p)^{\frac{n}{2} - k} \cdot \binom{\frac{n}{2} - \frac{n}{r}}{k+1} p^{k+1} (1-p)^{\frac{n}{2} - \frac{n}{r} - k - 1} \cdot M^{-k} \\
	&=  \sum_{k = 0}^{\frac{n}{2} - \frac{n}{r} - 1}  \binom{n/2}{k} \cdot \binom{\frac{n}{2} - \frac{n}{r}}{k+1} \cdot \left( \frac{pq}{M(1 - p)(1_q)}\right)^k \cdot \underbrace{[(1 - p)(1 - q)]^{\frac{n}{2}} \cdot \frac{p}{(1-p)^{\frac{n}{r} + 1}}}_{=: C} \\
	&=: \sum_{k = 0}^{\frac{n}{2} - \frac{n}{r} - 1} f(k).
\end{align*}
\else
\begin{align*}
	&\mathcal{P}\left(E_{H,A}^{(i)} \bigcap E_B^{(i, 0)}\right) \\
	&= \sum_{r_2} \mathcal{P}(r_2) \mathcal{P}(r_1 \geq r_2+1) M^{-r_2} \\
	&\geq \sum_{r_2} \mathcal{P}(r_2) \mathcal{P}(r_1  = r_2+1) M^{-r_2} \\
	&= \sum_{k = 0}^{\frac{n}{2} - \frac{n}{r} - 1} \Bigg[ \binom{n/2}{k} q^k (1 - p)^{\frac{n}{2} - k} \\
	&\quad \cdot \binom{\frac{n}{2} - \frac{n}{r}}{k+1} p^{k+1} (1-p)^{\frac{n}{2} - \frac{n}{r} - k - 1} \cdot M^{-k} \Bigg]\\
	&=  \sum_{k = 0}^{\frac{n}{2} - \frac{n}{r} - 1}  \binom{n/2}{k} \cdot \binom{\frac{n}{2} - \frac{n}{r}}{k+1} \cdot \left( \frac{pq}{M(1 - p)(1_q)}\right)^k \\
	&\quad \cdot \underbrace{[(1 - p)(1 - q)]^{\frac{n}{2}} \cdot \frac{p}{(1-p)^{\frac{n}{r} + 1}}}_{=: C} \\
	&=: \sum_{k = 0}^{\frac{n}{2} - \frac{n}{r} - 1} f(k).
\end{align*}
\fi
To proceed, we take the logarithm over $f(k)$,
\if\arXivver1
\begin{equation*}
\log f(k) = \log\binom{n/2}{k} + \log\binom{\frac{n}{2} - \frac{n}{r}}{k + 1} + k\log\left(\frac{pq}{M(1- p)(1 - q)}\right) + \log C.
\end{equation*}
\else
\begin{equation*}
	\begin{aligned}
		\log f(k) &= \log\binom{n/2}{k} + \binom\log{\frac{n}{2} - \frac{n}{r}}{k + 1} \\
		&\quad+ k\log\left(\frac{pq}{M(1- p)(1 - q)}\right) + \log C.
	\end{aligned}
\end{equation*}
\fi
By considering $k = \Theta(\log n)$ and following the similar analysis in Appendix~\ref{sec:bound_P_E_H_A_B_proof}, we obtain
\begin{align*}
\log \binom{\frac{n}{2} - \frac{n}{r}}{k + 1} &\geq (k+1) \left[\log\left(\frac{n}{2}\right) - \log\left(\frac{k}{e}\right)\right] - o(\log n)
\end{align*}
and 
\begin{align*}
\log C = \left(-\frac{a + b}{2} + 1\right) \log n  - o(\log n).
\end{align*}
The other two terms are identical to \eqref{eq:bound_f_k_coeff_1} and \eqref{eq:bound_f_k_coeff_3}. Given the above, we have 
	\if\arXivver1
\begin{align*}
	\log f(k) &\geq \underbrace{k\left[2\log\log n + \log\left(\frac{ab}{M}\right) - 2k\log\left(\frac{k}{e}\right) - 2\log 2  \right]}_{=: h(k)}  - 
	\left(\frac{a + b}{2}\right) \log n - o(\log n) \\
	&\geq \left(-\frac{a + b}{2} + \sqrt{\frac{ab}{M}}\right) \log n - o(\log n),
\end{align*}
\else
\begin{align*}
	&\log f(k) \\
	&\geq \underbrace{k\left[2\log\log n + \log\left(\frac{ab}{M}\right) - 2k\log\left(\frac{k}{e}\right) - 2\log 2  \right]}_{=: h(k)}  \\
	&\quad - 
	\left(\frac{a + b}{2}\right) \log n - o(\log n) \\
	&\geq \left(-\frac{a + b}{2} + \sqrt{\frac{ab}{M}}\right) \log n - o(\log n),
\end{align*}
	\fi
where the last step comes by finding the minimum of $h(k)$. This completes the proof.

\end{document}